\documentclass[10pt]{article}
\usepackage{latexsym}
\usepackage{amssymb}
\usepackage{amsthm}
\usepackage{amsmath}
\usepackage{graphicx}
\usepackage[latin1]{inputenc}

\newtheorem{theorem}{Theorem}[section]
\newtheorem{prop}[theorem]{Proposition}
\newtheorem{coro}[theorem]{Corollary}
\newtheorem{lemma}[theorem]{Lemma}
\newtheorem{remark}[theorem]{Remark}

\topmargin0pt
\headheight10pt
\headsep20pt
\oddsidemargin0pt
\textheight20cm
\textwidth16cm



\newcommand{\R}{\mathbb{R}}             
\newcommand{\N}{\mathbb{N}}             
\newcommand{\Z}{\mathbb{Z}}             %
\newcommand{\C}{\mathbb{C}}             
\renewcommand{\H}{\mathcal{H}}          
\newcommand{\B}{\mathcal{B}}            
\newcommand{\M}{\mathcal{M}}            




\newcommand{\e}{\epsilon}
\newcommand{\half}{\frac{1}{2}}


\newcommand{\Ga}{\Gamma^1}         
\newcommand{\Gb}{\Gamma^2}         
\newcommand{\Gc}{\Gamma^3}         



\renewcommand{\d}{\partial}            




\newcommand{\fj}{f_j (X, \lambda,z)}
\newcommand{\fun}{f_1 (X, \lambda,z)}
\newcommand{\funp}{f_1^+ (X, \lambda,z)}

\newcommand{\gj}{g_j (X, \lambda,z)}


\newcommand{\alun}{a_{L1}(\lambda,z)}
\newcommand{\ald}{a_{L2}(\lambda,z)}
\newcommand{\alt}{a_{L3}(\lambda,z)}
\newcommand{\alq}{a_{L4}(\lambda,z)}


\newcommand{\Section}[1]{\section{#1} \setcounter{equation}{0}}


\author{Thierry Daudé \footnote{Department of Mathematics and
    Statistics, McGill University, 805 Sherbrooke South West, Montréal
    QC, H3A 2K6. Email adress: tdaude@math.mcgill.ca} \, and François Nicoleau
    \footnote{Laboratoire Jean Leray, UMR 6629, Université de Nantes, 2, rue de la Houssinière, BP
    92208, 44322 Nantes Cedex 03. Email adress: nicoleau@math.univ-nantes.fr}}
\title{Inverse Scattering at Fixed Energy in de Sitter-Reissner-Nordstr\"om Black Holes}
\date{}

\begin{document}

\maketitle


\begin{abstract}
  In this paper, we consider massless Dirac fields propagating in the outer region of de
  Sitter-Reissner-Nordstr\"om black holes. We show that the metric of such black holes is uniquely
  determined by the partial knowledge of the corresponding scattering matrix $S(\lambda)$ at a fixed energy
  $\lambda \ne 0$. More precisely, we consider the partial wave scattering matrices $S(\lambda,n)$
  (here $\lambda \ne 0$ is the fixed energy and $n \in \N^*$ denotes the angular momentum) defined as the restrictions of the full scattering matrix on a well chosen basis of spin-weighted spherical harmonics. We prove that the mass $M$, the square of the charge $Q^2$ and the cosmological constant $\Lambda$ of a dS-RN black hole (and thus its metric) can be uniquely determined from the knowledge of either the transmission coefficients $T(\lambda, n)$, or the reflexion coefficients $R(\lambda, n)$
  (resp. $L(\lambda, n)$), for all $n \in {\mathcal{L}}$ where $\mathcal{L}$ is a subset of $\N^*$ that satisfies the M\"untz condition $\sum_{n \in {\mathcal{L}}} \frac{1}{n} = +\infty$. Our main tool consists in complexifying the angular momentum $n$ and in studying the analytic properties of the "unphysical" scattering matrix $S(\lambda,z)$ in the complex variable $z$.
  We show in particular that the quantities $\frac{1}{T(\lambda,z)}$, $\frac{R(\lambda,z)}{T(\lambda,z)}$ and $\frac{L(\lambda,z)}{T(\lambda,z)}$ belong to the Nevanlinna class in the region $\{z \in \C, \ Re(z) >0 \}$ for which we have analytic uniqueness theorems at our disposal.  Eventually, as a by-product of our method, we obtain reconstrution formulae for the surface gravities of the event and cosmological horizons of the black hole which have an important physical meaning in the Hawking effect. \\

\vspace{0.5cm}
\noindent \textit{Keywords}. Inverse Scattering, Black Holes, Dirac Equation. \\
\textit{2010 Mathematics Subject Classification}. Primaries 81U40, 35P25; Secondary 58J50.
\end{abstract}


\Section{Introduction}


Black hole spacetimes are among the most fascinating objects whose existence is predicted by Einstein's
General Relativity theory and have attracted most attention in the last decades. From the theoretical point of
view they are simple systems. The only needed parameters for a full description are the mass,
the electric charge, the angular momentum and possibly the cosmological constant of the black hole raising
the natural issue of determining them. From the astrophysical point of view however, black holes are objects
eminently difficult to grasp since they are, by essence, invisible. Only by indirect means can we study some of
their properties and... find actual evidence for their existence! A fruitful approach to better understand their
properties consists in studying how black holes interact with their environment. In particular, it is now well known
established that much can be learned by observing how incoming waves are scattered off a black hole.
We refer for instance to \cite{Ba1,Ba2,Da1,D,DK,HaN,Me1,N} where direct scattering theories for various waves have
been obtained, to \cite{Ba3,Ba4,Ha,Me2} for an application of the previous results to the study of the Hawking effect
and to \cite{Ba5,FKSY} for an analysis of the superradiance phenomenon. In this paper, we follow this general strategy
and address the problem of identifying the metric of a black hole by observing how incoming waves with a given energy
$\lambda$ propagate and scatter at late times. This information is encoded in the scattering matrix $S(\lambda)$
introduced below. More specifically, we shall focus here on the special case of de Sitter-Reissner-Nordstr\"om black
holes and we shall show that the parameters (and thus the metric) of such black holes can be uniquely recovered from the
partial knowledge of the scattering matrix $S(\lambda)$ at a \emph{fixed energy} $\lambda \ne 0$. This is a continuation of our previous
works \cite{DN1,DN2} in which similar questions were addressed and solved from inverse scattering experiments
\emph{at high energies}.



\subsection{de Sitter-Reissner-Nordst\"om black holes}

De Sitter-Reissner-Nordstr\"om (dS-RN) black holes are spherically symmetric electrically charged exact solutions of
the Einstein-Maxwell equations. In Schwarschild coordinates, the exterior region of a dS-RN black hole is described by
the four-dimensional manifold $\M = \mathbb{R}_{t} \times ]r_-,r_+[_r \times S_{\theta,\varphi}^{2}$ equipped with the
lorentzian metric
\begin{equation} \label{Metric}
  g = F(r)\,dt^{2} - F(r)^{-1} dr^{2} - r^{2} \big(d\theta^{2}+\sin^{2}\theta \, d\varphi^{2}\big),
\end{equation}
where
\begin{equation} \label{F}
  F(r) = 1-\frac{2M}{r}+\frac{Q^{2}}{r^{2}} - \frac{\Lambda}{3} r^2.
\end{equation}
The constants $M>0$, $Q \in \R$ appearing in (\ref{F}) are interpreted as the mass and the electric charge of the black
hole and $\Lambda > 0$ is the cosmological constant of the universe. We assume here that the function $F(r)$ has three
simple positive roots $0<r_c<r_-<r_+$ and a negative one $r_n <0$. This is always achieved if we suppose for instance
that $Q^2<\frac{9}{8} M^2$ and that $\Lambda M^2$ be small enough (see \cite{L}). The sphere $\{r=r_c\}$ is called the
Cauchy horizon whereas the spheres $\{r=r_-\}$ and $\{r=r_+\}$ are the event and cosmological horizons respectively.
We shall only consider the exterior region of the black hole, that is the region $\{r_- < r <r_+\}$ lying between the
event and cosmological horizons. Note that the function $F$ is positive there.

The point of view implicitely adopted throughout this work is that of static observers located far from the event and
cosmological horizons of the black hole. We think typically of a telescope on earth aiming at the black hole or at the
cosmological horizon. We understand these observers as living on world lines $\{r = r_0\}$ with $r_- << r_0 << r_+$.
The variable $t$ corresponds to their true perception of time. The event and cosmological horizons which appear as singularities of the metric (\ref{Metric}) are in fact due to our particular choice of coordinates. Using appropriate coordinates system, these horizons can be understood as regular null hypersurfaces that can be crossed one way but would require speeds greater than that of light to be crossed the other way. From the point of view of our observers however, these horizons are thus the boundaries of the \emph{observable} world. This can be more easily understood if we notice that the event and cosmological horizons are in fact never reached in a finite time $t$ by incoming and outgoing radial null geodesics, the trajectories followed by classical light-rays aimed radially at the black hole or at the cosmological horizon. Both horizons are thus perceived as \emph{asymptotic regions} by our static observers.

Instead of working with the radial variable $r$, we make the choice to describe the exterior region of the black hole by using the Regge-Wheeler (RW) radial variable which is more natural when studying the scattering properties of any fields. The RW variable $x$ is defined implicitely by $\frac{dx}{dr} = F^{-1}(r)$, or explicitely by
\begin{equation} \label{RW}
  x = \frac{1}{2\kappa_n} \ln(r-r_n) + \frac{1}{2\kappa_c} \ln(r-r_c) +\frac{1}{2\kappa_-} \ln(r-r_-) + \frac{1}{2\kappa_+} \ln(r_+-r) \ + \ c,
\end{equation}
where $c$ is any constant of integration and the quantities $\kappa_j, \ j=n,c,-,+$ are defined by
\begin{equation} \label{SurfaceGravity}
  \kappa_n = \frac{1}{2} F'(r_n), \ \kappa_c = \frac{1}{2} F'(r_c), \ \kappa_- = \frac{1}{2} F'(r_-), \ \kappa_+ = \frac{1}{2} F'(r_+).
\end{equation}
The constants $\kappa_->0$ and $\kappa_+<0$ are called the surface gravities of the event and cosmological horizons respectively. Note from (\ref{RW}) that the event and cosmological horizons $\{r=r_\pm\}$  are pushed away to the infinities $\{x = \pm \infty\}$ using the RW variable. Let us also emphasize that the incoming and outgoing null radial geodesics become straight lines $\{x=\pm t\}$ in this new coordinates system, a fact that provides a natural manner to define the scattering data simply by mimicking the usual definitions in Minkowski-spacetime. At last, note the presence of a constant of integration $c$ in the definition of $x$. We shall comment on this constant and its consequences on our definition of the scattering matrix below.



\subsection{The scattering matrix and statement of the result}

As waves, we consider massless Dirac fields propagating in the exterior region of a dS-RN black hole. We refer to \cite{Me1,N} for a detailed study of this equation in this background including a complete time-dependent scattering theory. We shall use the expression of the equation obtained in these papers as the starting point of our study. Thus the considered massless Dirac fields are represented by 2 components spinors $\psi$ belonging to the Hilbert space $L^2(\R \times S^2; \, \C^2)$ which satisfy the evolution equation
\begin{equation} \label{FullEq}
  i \partial_{t} \psi = \Big(\Ga D_x + a(x) D_{S^2} \Big) \psi,
\end{equation}
where $\Ga = \textrm{diag}(1,-1)$, $D_x = -i \partial_x$ and $D_{S^2}$ denotes the Dirac operator on $S^2$. Here, the potential $a(x)$ takes the form
\begin{equation} \label{Pot}
  a(x) = \frac{\sqrt{F(r(x))}}{r(x)},
\end{equation}
and thus contains all the information of the metric through the function $F$. In the variable $x$, it will be shown to have the following asymptotics $a(x) \sim \, a_\pm e^{\kappa_\pm x}, \ x \to \pm \infty$ where $a_\pm$ are fixed constants depending on the parameters of the black hole. The equation (\ref{FullEq}) is clearly spherically symmetric and in consequence can be separated. The stationary scattering is thus governed by a countable family of one-dimensional stationnary Dirac equations of the following form
\begin{equation} \label{SE}
  \Big[\Ga D_x - (l+\half) a(x) \Gb \Big] \, \psi(x,\lambda,l) = \lambda \, \psi(x,\lambda,l),
\end{equation}
restrictions of the full stationary equation to a well chosen basis of spin-weighted spherical harmonics (indexed here by $l = \frac{1}{2}, \frac{3}{2},...$) invariant for the full equation. Here $\Ga$ and $\Gb$ are usual $2 \times 2$ Dirac matrices satisfying the anticommutation relations $\Gamma^i \Gamma^j + \Gamma^j \Gamma^i = 2 \delta_{ij}$, $\lambda$ is the energy of the considered waves and $(l+\half), \ l \in \half +\N$ is called the angular momentum. For simplicity, we shall denote the angular momentum $l+\half$ by $n$. Hence the new parameter $n$ runs over the integers $\N$.

As expected thanks to our choice of variable $x$, the stationary equation (\ref{SE}) is a classical one-dimensional massless Dirac equation in flat spacetime perturbed by an exponentially decreasing matrix-valued potential in which the angular momentum $n$ plays the role of a \emph{coupling constant}. Complete stationary scattering theories have been obtained for this type of equation for instance in \cite{AKM,G,HJKS}. Following the approach used in \cite{AKM}, we can thus define in the usual way the scattering matrix $S(\lambda,n)$ in terms of stationary solutions with prescribed asymptotics at infinity, called Jost solutions. These are $2\times2$ matrix-valued functions $F_L$ and $F_R$ solutions of (\ref{SE}) having the asymptotics
\begin{eqnarray*} 
  F_L(x,\lambda,n) & = & e^{i\Ga \lambda x} ( I_2 + o(1)), \ \ x \to +\infty, \\
  F_R(x,\lambda,n) & = & e^{i\Ga \lambda x} (I_2 + o(1)), \ \ x \to -\infty. \nonumber
\end{eqnarray*}
The Jost solutions will be shown to be fundamental matrices of (\ref{SE}). There exists thus a $2\times2$ matrix $A_L(\lambda,n)$ depending only on the energy $\lambda$ and the angular momentum $n$ such that the Jost functions are connected by
$$ 
  F_L(x,\lambda,n) = F_R(x,\lambda,n) A_L(\lambda,n).
$$
The coefficients of the matrix $A_L$ encode all the scattering information of equation (\ref{SE}). In particular, using the notation
\begin{equation} \label{ScatCoef}
  A_L(\lambda,n) = \left[\begin{array}{cc} a_{L1}(\lambda,n)&a_{L2}(\lambda,n)\\a_{L3}(\lambda,n)&a_{L4}(\lambda,n) \end{array} \right],
\end{equation}
the partial wave scattering matrix $S(\lambda,n)$ is then defined by
\begin{equation} \label{SM}
  S(\lambda,n) = \left[ \begin{array}{cc} T(\lambda,n)&R(\lambda,n)\\ L(\lambda,n)&T(\lambda,n) \end{array} \right],
\end{equation}
where
\begin{equation} \label{SMCoeff}
  T(\lambda,n) = a_{L1}^{-1}(\lambda,n), \quad R(\lambda,n) = - \frac{a_{L2}(\lambda,n)}{a_{L1}(\lambda,n)}, \quad L(\lambda,n) = \frac{a_{L3}(\lambda,n)}{a_{L1}(\lambda,n)}.
\end{equation}
The quantities $T$ and $R, L$ are called the transmission and reflection coefficients respectively.
The former measures the part of a signal transmitted from an horizon to the other in a scattering process whereas the
latters measure the part of a signal reflected from an horizon to itself (event horizon for $L$ and cosmological horizon
for $R$)\footnote{Whence the notations $L$ for \emph{left} reflection coefficient since the event horizon is
located "on the left"' at $x=-\infty$ and $R$ for \emph{right} reflection coefficient since the cosmological horizon is located "on the right" at $x=+\infty$.}. At last, the scattering matrix $S(\lambda,n)$ will be shown to be a $2\times2$ unitary matrix.

Roughly speaking the main result of this paper states that either the knowledge of the transmission coefficient
$T(\lambda,n)$ or the knowledge of the reflection coefficients $L(\lambda,n)$ or $R(\lambda,n)$ at a
\emph{fixed energy} $\lambda \ne 0$ and \emph{"for almost all"} $n \in \N$ determines uniquely the mass $M$ and the square
of the charge $Q^2$ of the black hole as well as the cosmological constant $\Lambda$ of the universe. More precisely,
it suffices to know the transmission or reflection
coefficients at a fixed energy $\lambda \ne 0$  on a subset $\mathcal{L} \subset \N^*$ that satisfies the M\"untz condition
$\sum_{n \in \mathcal{L}} \frac{1}{n} = \infty$ in order to prove the uniqueness of the parameters $M, Q^2, \Lambda$. Since the data of the partial wave scattering matrices $S(\lambda,n)$ for all $n \in \N$ is equivalent to know the full
scattering matrix $S(\lambda)$, we can rephrase our main result by: \emph{the partial knowledge of the
scattering matrix $S(\lambda)$ at a fixed energy $\lambda \ne 0$ determines uniquely the metric of a dS-RN black hole}.

Before entering in the description of the method used to proved the above uniqueness result, let us comment on its dependence with our choice of coordinates system. As already mentioned, the variable $x$ is defined by (\ref{RW}) up to a constant of integration $c$. Our definition (\ref{SM})-(\ref{SMCoeff}) of the scattering matrix turns out not to be invariant when we change the constant $c$ in the definition of $x$. More precisely, if $S(\lambda,n)$ denotes the scattering matrix obtained for a given RW variable $x$, we can show that the scattering matrix $\tilde{S}(\lambda)$ obtained using the translated RW variable $\tilde{x} = x + c$ is given by
\begin{equation} \label{NonInv}
  S(\lambda,n) = e^{i\Ga \lambda c} \tilde{S}(\lambda,n) e^{-i\Ga \lambda c},
\end{equation}
or written in components by
\begin{equation} \label{NonInvComp}
  \left[ \begin{array}{cc} T(\lambda,n)& R(\lambda,n)\\ L(\lambda,n)&T(\lambda,n) \end{array} \right] = \left[ \begin{array}{cc} \tilde{T}(\lambda,n)& e^{2i \lambda c} \tilde{R}(\lambda,n)\\ e^{-2i \lambda c} \tilde{L}(\lambda,n)& \tilde{T}(\lambda,n) \end{array} \right].
\end{equation}
Since there is no natural - and better - way to fix the choice of the constant $c$ in (\ref{RW}), we must include the possibility to describe a dS-RN black hole by two different RW variables in the statement of our result. One way to make our result coordinate invariant is to identify the partial wave scattering matrices at a fixed energy $\lambda$ corresponding to all the possible choice of RW variables in the description of a given dS-RN black hole. In other words, we shall say that $S(\lambda,n)$ and $\tilde{S}(\lambda,n)$ are equal when (\ref{NonInv}) or (\ref{NonInvComp}) hold.

Having this in mind, we state now the main uniqueness result of this paper.
\begin{theorem} \label{Main}
  Let $(M,Q,\Lambda)$ and $(\tilde{M},\tilde{Q},\tilde{\Lambda})$ be the parameters of two dS-RN black holes. We denote by $a(x)$ and $\tilde{a}(x)$ the two corresponding potentials appearing in the Dirac equation (\ref{FullEq}). We also denote by $S(\lambda,n)$ and $\tilde{S}(\lambda,n)$ the corresponding partial wave scattering matrices at a fixed energy $\lambda \ne 0$
  defined by (\ref{SM}) and (\ref{SMCoeff}). Consider a subset $\mathcal{L}$ of $\N^*$ that satisfies the M\"untz condition $\sum_{n \in \mathcal{L}} \frac{1}{n} = \infty$ and assume that there exists a constant $c \in \R$ such that one of the following conditions holds:
\begin{eqnarray*}
  (i) & T(\lambda,n) = \tilde{T}(\lambda,n), \quad \forall n \in \mathcal{L},\\
  (ii) & L(\lambda,n) = e^{-2i \lambda c} \tilde{L}(\lambda,n), \quad \forall n \in \mathcal{L},\\
  (iii) & R(\lambda,n) = e^{2i \lambda c} \tilde{R}(\lambda,n), \quad \forall n \in \mathcal{L}.
\end{eqnarray*}
Then the potentials $a$ and $\tilde{a}$ coincide up to translation, \textit{i.e}. there exists a constant $\sigma \in \R$ such that
$$
  a(x) = \tilde{a}(x + \sigma), \quad \forall x \in \R.
$$
As a consequence we get
$$
  M = \tilde{M}, \ Q^2 = \tilde{Q}^2, \ \Lambda = \tilde{\Lambda}.
$$
\end{theorem}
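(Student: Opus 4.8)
The plan is to complexify the angular momentum $n$, turn the uniqueness assertion into an analytic‑continuation statement on a half‑plane, and finally read off the black‑hole parameters from the reconstructed potential. First I would replace the integer $n$ in (\ref{SE}) by a complex variable $z$. Because $a$ decays exponentially at both ends — $a(x)=O(e^{\kappa_- x})$ as $x\to-\infty$ and $a(x)=O(e^{\kappa_+ x})$ as $x\to+\infty$, with $\kappa_->0>\kappa_+$ — one has $a\in L^1(\R)$, so the Volterra integral equations defining the Jost solutions $F_L(x,\lambda,z)$ and $F_R(x,\lambda,z)$ converge for every $z$; a Gr\"onwall estimate gives $\|F_L(x,\lambda,z)\|,\|F_R(x,\lambda,z)\|\le C\,e^{\mu |z|}$ with $\mu:=\int_\R a(x)\,dx$, and shows that $z\mapsto F_{L,R}(x,\lambda,z)$ is entire. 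Hence the entries $\alun,\ald,\alt,\alq$ of $A_L(\lambda,z)$ are entire functions of $z$ of exponential type at most $\mu$, and $\frac1{T(\lambda,z)}=\alun$, $\frac{R(\lambda,z)}{T(\lambda,z)}=-\ald$, $\frac{L(\lambda,z)}{T(\lambda,z)}=\alt$.

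The analytic heart of the argument concerns the right half‑plane $\Pi:=\{z\in\C: Re(z)>0\}$. The decisive remark is that for $z=iy$ purely imaginary (with $\lambda\in\R$ fixed) the first‑order system equivalent to (\ref{SE}) has an anti‑Hermitian generator, so its transfer matrix is unitary for every $x$; comparing the prescribed asymptotics of $F_L$ at $\pm\infty$ then forces $A_L(\lambda,iy)$ to be unitary, whence $|\alun|,|\ald|,|\alt|\le1$ on the imaginary axis. On the positive real axis the Gr\"onwall bound already gives $|a_{Lj}(\lambda,x)|\le C\,e^{\mu x}$, so $e^{-\mu z}a_{Lj}(\lambda,z)$ is bounded on both rays bounding each of the quarter‑planes $\{Re(z)>0,\ \pm Im(z)>0\}$; since a function of exponential type cannot grow fast enough to defeat Phragm\'en--Lindel\"of in a sector of opening $\pi/2$, it follows that $e^{-\mu z}a_{Lj}(\lambda,z)$ is bounded on all of $\Pi$. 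In particular $e^{-\mu z}/T(\lambda,z)$, $e^{-\mu z}R(\lambda,z)/T(\lambda,z)$ and $e^{-\mu z}L(\lambda,z)/T(\lambda,z)$ are bounded holomorphic, hence Nevanlinna‑class, functions on $\Pi$.

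Next I would exploit the M\"untz condition. Writing $\tilde a_{Lj}$, $\tilde\mu$ for the objects attached to $\tilde a$ and setting $\nu:=\max(\mu,\tilde\mu)$, hypothesis $(i)$ says that the bounded holomorphic function $g(z):=e^{-\nu z}\bigl(\alun-\tilde a_{L1}(\lambda,z)\bigr)$ on $\Pi$ vanishes at each $n\in\mathcal{L}$; under the conformal map of $\Pi$ onto the unit disc a zero at $z=n$ contributes a term of size $\asymp 1/n$ to the Blaschke sum, so $\sum_{n\in\mathcal{L}}\frac1n=\infty$ makes the Blaschke condition fail and the uniqueness theorem for the Nevanlinna class forces $g\equiv0$, i.e. $\alun\equiv\tilde a_{L1}(\lambda,z)$ on $\Pi$ and hence on $\C$. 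Hypotheses $(ii)$ and $(iii)$ are treated identically after clearing denominators: $(ii)$, for instance, amounts to $\alt\,\tilde a_{L1}(\lambda,z)=e^{-2i\lambda c}\,\tilde a_{L3}(\lambda,z)\,\alun$ on $\mathcal{L}$, and $e^{-2\nu z}$ times the difference of the two sides is bounded holomorphic on $\Pi$ and vanishes on $\mathcal{L}$, hence vanishes identically; this gives $L(\lambda,z)\equiv e^{-2i\lambda c}\tilde L(\lambda,z)$ (resp. $R(\lambda,z)\equiv e^{2i\lambda c}\tilde R(\lambda,z)$) for all $z$.

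Finally I would recover the potential and the parameters. By the non‑invariance relations (\ref{NonInv})--(\ref{NonInvComp}) the phases $e^{\pm2i\lambda c}$ are exactly those induced by a translation of the Regge--Wheeler variable, so after replacing $\tilde a(\cdot)$ by a suitable translate we may assume that one full ``unphysical'' scattering coefficient of $a$ and of $\tilde a$ coincide for all $z$, in particular for all integer couplings $z=n$. The one‑dimensional inverse scattering theory at fixed energy for the Dirac system (\ref{SE}) — resting on the large‑$z$ behaviour of the Jost solutions, whose construction follows \cite{AKM}, together with a Marchenko‑type reconstruction — then shows that this coefficient determines $a$ up to translation, so $a(x)=\tilde a(x+\sigma)$ for some $\sigma\in\R$. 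To conclude, $\frac{dr}{dx}=F(r(x))=a(x)^2r(x)^2$ gives $\frac{d}{dx}\bigl(\frac1{r}\bigr)=-a^2$, hence $\frac1{r(x)}=\frac1{r_-}-\int_{-\infty}^x a^2=:\frac1{r_-}-h(x)$, and substituting into $a^2=\frac1{r^2}-\frac{2M}{r^3}+\frac{Q^2}{r^4}-\frac\Lambda3$ shows that $a(x)^2$, regarded as a function of $h$, is a polynomial of degree at most $4$ whose coefficients determine $r_-$, $M$, $Q^2$ and $\Lambda$ (and then $r_+$); since this description does not change under translation of $x$, we obtain $M=\tilde M$, $Q^2=\tilde Q^2$, $\Lambda=\tilde\Lambda$, and along the way the surface gravities $\kappa_\pm=\frac12F'(r_\pm)=\lim_{x\to\pm\infty}a'(x)/a(x)$. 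The hard part is the middle two steps: establishing that the exponentially renormalised scattering coefficients extend to bounded holomorphic functions on $\Pi$ — which hinges on the unitarity of the transfer matrix on the imaginary axis and on Phragm\'en--Lindel\"of — and then arranging the three hypotheses so that the Blaschke/M\"untz uniqueness theorem applies; the one‑dimensional inverse step is classical in spirit but technically involved, and the extraction of $(M,Q^2,\Lambda)$ is a routine computation.
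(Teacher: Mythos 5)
Your overall architecture (complexification of $n$, Nevanlinna/M\"untz uniqueness on $\Pi^+$, reduction to a one-dimensional inverse problem, extraction of $(M,Q^2,\Lambda)$ from a polynomial identity in $\int_{-\infty}^x a^2$) matches the paper, and both the analytic-continuation half and the final parameter extraction are essentially correct — your observation that the generator is anti-Hermitian for $z\in i\R$ is a clean route to the bound $|a_{Lj}(\lambda,iy)|\le 1$, and your operator $\frac{d}{dh}$ with $h=\int_{-\infty}^x a^2$ is exactly the paper's $B=\frac{1}{a^2}\frac{d}{dx}=r^2\frac{d}{dr}$. However, there are two genuine gaps in the middle of the argument.

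First, your treatment of hypotheses (i)--(iii) stops too early. Clearing denominators only yields the identity $a_{L3}(\lambda,z)\,\tilde a_{L1}(\lambda,z)=e^{-2i\lambda c}\,a_{L1}(\lambda,z)\,\tilde a_{L3}(\lambda,z)$ for all $z$ (resp.\ the analogous identity for $R$, or $a_{L1}\equiv\tilde a_{L1}$ for $T$). This determines only the \emph{ratio} $L=a_{L3}/a_{L1}$, not the entries of $A_L(\lambda,z)$ separately — and the subsequent reconstruction genuinely needs the whole matrix $A_L$ up to conjugation by $e^{i\lambda c\Gamma^1}$ (in particular $a_{L2}$ and $a_{L4}$, which enter the boundedness estimates in the inverse step). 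The missing argument is the content of Propositions \ref{Uniqueness} and \ref{Uniqueness2}: the unitarity relation $a_{L1}(\lambda,z)\overline{a_{L1}(\lambda,\bar z)}-a_{L3}(\lambda,z)\overline{a_{L3}(\lambda,\bar z)}=1$ shows $a_{L1}$ and $a_{L3}$ have no common zeros, so the zero sets of $a_{L1}$ and $\tilde a_{L1}$ coincide with multiplicity; then Hadamard's factorization applied to the even, order-one, normalized ($a_{L1}(\lambda,0)=1$) function $a_{L1}$ forces $a_{L1}=\prod(1-z^2/z_n^2)=\tilde a_{L1}$, after which the separate identities $a_{L3}=e^{-2i\lambda c}\tilde a_{L3}$, etc., follow. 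Without this step you cannot pass from ``the reflection coefficient agrees'' to ``$A_L$ agrees''.

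Second, the appeal to ``Marchenko-type reconstruction'' for the step ``one scattering coefficient known for all $z$ determines $a$ up to translation'' is not available here and is in fact the main technical content of the paper. Marchenko theory inverts scattering data given at all \emph{energies}; here the energy is fixed and the variable is the coupling constant $z$, which after the Liouville change of variable $X=\int_{-\infty}^x a$ becomes the \emph{spectral parameter} of a singular Sturm--Liouville equation on $(0,A)$ with $\omega_\pm/X^2$-type singularities at the endpoints. The paper must therefore (a) derive large-$z$ asymptotics of the Jost solutions by comparison with modified Bessel functions, (b) deduce $A=\tilde A$ from the leading term $e^{zA}$ of $a_{L1}$, and (c) run a Liouville-theorem argument on the matrix $P(X,\lambda,z)$ defined by $P\,\tilde F_R=F_R\,e^{i\lambda c\Gamma^1}$ — whose boundedness on $\R$ requires both the full matrix identity from the first gap and the lower bound $|a_{L4}(\lambda,z)|\ge Ce^{Az}$ coming from the Bessel asymptotics — before concluding $\tilde h(X)=h(X)+c+k\pi/\lambda$ via a Wronskian computation. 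None of this is supplied by, or reducible to, the classical variable-energy inverse theory you cite, so as written the central step of the proof is assumed rather than proved.
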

Let us make several comments on this result. \\

1) We emphasize that the uniqueness results in Thm \ref{Main} are in fact twofold. First, we prove that a positive exponentially decreasing potential $a(x)$ satisfying (\ref{AsympA})-(\ref{AsympA'}) for the Dirac equation (\ref{FullEq}) is uniquely determined (up to translation) from one of the assumptions (i)-(iii) of Thm \ref{Main}. Then and only then do we use the particular expression (\ref{Pot}) of this potential to show that the parameters of the black hole (and thus the metric) are uniquely determined. \\

2) As a particular case of Thm \ref{Main}, we see that the potential $a(x)$ and thus the parameters of the black hole are uniquely determined (up to translation for the potential) by the full scattering matrix $S(\lambda)$ at a fixed energy $\lambda \ne 0$. Theorem \ref{Main} is sharp in the sense that the full scattering matrix $S(0)$ at the energy $\lambda = 0$ does not determine uniquely the potential and the parameters (see Remark \ref{S(0)} below). \\


3) In the case of nonzero energies, it is also natural to ask whether the M\"{u}ntz condition $\sum_{n \in \mathcal{L}} \frac{1}{n} = \infty$ is necessary? On one hand, as regards the problem of uniquely determining the parameters of the black hole, it is likely that we could weaken this condition since the metric only depends on "three" parameters. On the other hand, as regards the problem of uniquely determining the potential $a(x)$, the M\"untz condition could be sharp. Indeed a similar inverse scattering problem for 3D Schr\" odinger operators with radial potentials has been already studied by A. G. Ramm in \cite{Ra} and M. Horvatz \cite{Hor}; Ramm showed that the knowledge of a subset of the phase shifts $\delta_l$, with $\sum_{l \in \mathcal{L}} \frac{1}{l} = \infty$, determines uniquely the potential; shortly after, Horv\' ath proved the necessity of the M\"untz condition in some classes of potentials. \\


4) At last, let us say a few words on our uniqueness results from a more geometrical point of view. Notice first that the Dirac equation (\ref{FullEq}) in the exterior region of a dS-RN black hole takes the same form as a Dirac equation on the manifold $\Sigma = \R_x \times S^2$ equipped with the riemanniann metric
\begin{equation} \label{AHMetric}
  g_0 = dx^2 + a^{-2}(x) (d\theta^2 + \sin^2\theta d\varphi^2),
\end{equation}
where $a(x)$ is any smooth positive function. If we assume moreover that the function $a(x)$ has the asymptotics (\ref{AsympA})-(\ref{AsympA'}) as it is the case in our model, then the riemanniann manifold $(\Sigma, g_0)$ can be viewed as a spherically symmetric manifold having two ends $\{x = \pm \infty\}$ that are \emph{asymptotically hyperbolic}. Hence our model fits the more general framework of asymptotically hyperbolic manifolds (AHM). In this setting, Thm \ref{Main} states that metrics like (\ref{AHMetric}) are uniquely determined (up to translations in $x$) from the \emph{partial} knowledge of the scattering matrix $S(\lambda)$ - corresponding to Dirac waves - at a \emph{fixed energy} $\lambda \ne 0$. For more general AHM with no particular symmetry, some direct and inverse scattering results - for \emph{scalar waves} - have been obtained by Joshi, S\'a Barreto in \cite{JSB} and by S\'a Barreto in \cite{SB} (see also \cite{Is2} and \cite{BP}). In \cite{JSB} for instance, it is shown that the asymptotics of the metric of an AHM are uniquely determined (up to diffeomorphisms) by the scattering matrix $S(\lambda)$ at a fixed energy $\lambda$ off a countable subset of $\R$. In \cite{SB}, it is proved that the metric of an AHM is uniquely determined (up to diffeomorphisms) by the scattering matrix $S(\lambda)$ for every $\lambda \in \R \setminus 0$.


\subsection{Overview of the proof}

The main idea of this paper is to complexify the angular momentum $n = l+\half$ and study the analytic properties of the "unphysical" scattering coefficients $T(\lambda,z), L(\lambda,z)$ and $R(\lambda,z)$ (or equivalently the functions $a_{Lj}(\lambda,z)$) with respect to the variable $z \in \C$. The general idea to consider \emph{complex angular momentum} originates in a paper by Regge \cite{Re} as a tool in the analysis of the scattering matrix of Schrodinger operators in $\R^3$ with spherically symmetric potentials. We refer to \cite{Ne}, chapter 13, and \cite{CMo} for a detailed account of this approach. Applications to the study of inverse scattering problems for the same equation can be found in \cite{CMu, CKM,Ra}. These last papers were the starting point of our work.

The first step in our proof of Theorem \ref{Main} relies on uniqueness theorems for analytic functions. Let us define the Nevanlinna class $N(\Pi^+)$ as the set of all analytic functions $f(z)$ on the right half plane $\Pi^+ = \{z \in\C: \ Re(z) >0\}$ that satisfy
$$ 
  \sup_{0<r<1} \int_{-\pi}^{\pi} \ln^+ \Big| f\Big(\frac{1 - re^{i\varphi}}{1+re^{i\varphi}} \Big) \Big| d\varphi < \infty,
$$
where $\ln^+(x) = \left\{ \begin{array}{cc} \ln x, & \ln x \geq 0,\\ 0, & \ln x <0. \end{array} \right.$ Among other properties, it turns out that such functions are uniquely determined by their values on any subset $\mathcal{L} \subset \N^*$ that satisfies the M\"untz condition $\sum_{n \in \mathcal{L}} \frac{1}{n} = \infty$ (see \cite{Ra} and \cite{Ru}, chapter 15, for a more general statement). We shall use the Nevanlinna class and this uniqueness result as follows.

Using explicit representations for the Jost functions as well as the unitarity of the scattering matrix, we first show that the coefficients $a_{Lj}(\lambda,z)$, $j=1,..,4$ in (\ref{ScatCoef}) are entire functions of exponential type in the variable $z$ that satisfy the bound:
$$ 
  |a_{Lj}(\lambda,z)| \leq e^{A|Re(z)|},
$$
where $A$ is the constant given by $\int_\R a(x) dx$. From these estimates, we deduce easily that the functions $a_{Lj}(\lambda,z)$ restricted to the right half plane $\Pi^+$ belong to the Nevanlinna class $N(\Pi^+)$. Hence the preceding discussion allows us to conclude that the functions $a_{Lj}(\lambda,z)$ are completely determined by their values on any subset $\mathcal{L} \subset \N^*$ such that $\sum_{n \in \mathcal{L}} \frac{1}{n} = \infty$.

Since the true scattering data are the transmission coefficient $T$ or the reflection coefficients
$L, R$ and not exactly the $a_{Lj}$, we need to work a bit more to get a useful uniqueness statement.
Using mainly Hadamard's factorization theorem and the previous result, we show that in fact, the whole matrix $A_L(\lambda,z)$ is uniquely determined from the values of one of the scattering coefficients $T(\lambda,n)$, $L(\lambda,n)$ or $R(\lambda,n)$ on any subset $\mathcal{L} \subset \N^*$ such that $\sum_{n \in \mathcal{L}} \frac{1}{n} = \infty$ only.

The second step in our proof relies on precise asymptotics for the coefficients $a_{Lj}(\lambda,z)$ when the parameter $z \to \infty$ for real values of $z$. To obtain these asymptotics, it is convenient to introduce a new radial variable $X$ that has also the great interest to enlight the underlying structure of equation (\ref{SE}). Following \cite{CMu,CKM}, we define the variable $X$ by the Liouville transformation
\begin{equation} \label{LiouvilleIntro}
  X = \int_{-\infty}^x a(s) ds.
\end{equation}
Note that $X$ is well defined thanks to the exponential decay of $a(x)$ at both horizons and runs over the interval $(0,A)$ with $A = \int_\R a(s) ds$. Let us denote by $h(X)$ the inverse transformation of (\ref{LiouvilleIntro}). We shall also use the notations $F_L(X)$ and $F_R(X)$ as a shorthand for the Jost functions $F_L(h(X),\lambda,z)$ and $F_R(h(X),\lambda,z)$. The reason why we introduce such a variable lies in the observation that the components $f_{Lj}(X)$ and $f_{Rj}(X)$ of the Jost functions
$$
F_L(X) = \left[\begin{array}{cc} f_{L1}(X)&f_{L2}(X)\\f_{L3}(X)&f_{L4}(X) \end{array} \right], \quad F_R(X) = \left[\begin{array}{cc} f_{R1}(X)&f_{R2}(X)\\f_{R3}(X)&f_{R4}(X) \end{array} \right],
$$
satisfy then second order differential equations of the form
\begin{equation} \label{SE2ndOrder}
  f''(X) + q(X) f(X) = z^2 f(X), \quad X \in (0,A).
\end{equation}
Here the potential $q$ will be shown to have quadratic singularities at the boundaries $0$ and $A$, \textit{i.e.}
$$
  q(X) \sim \frac{\omega_-}{X^2}, \ X \to 0, \quad \quad \quad q(X) \sim \frac{\omega_+}{(A-X)^2}, \ X \to A,
$$
where $\omega_\pm$ are two constants. We emphasize that the angular momentum or coupling constant $z$ has now become
the spectral parameter of this new equation. We shall show by a perturbative argument that the Jost functions
$f_{Lj}(X)$ and $f_{Rj}(X)$ can be written as small perturbations of certain
\emph{modified Bessel functions}.\footnote{Note that the modified Bessel functions are solutions of the free
equations obtained from (\ref{SE2ndOrder}) by replacing the potential $q(X)$ by its asymptotics at the boundaries,
\textit{i.e.} $f''(X) + \frac{\omega_-}{X^2} f(X) = z^2 f(X)$ when $X \to 0$ and $f''(X) + \frac{\omega_+}{(A-X)^2}
f(X) = z^2 f(X)$ when $X \to A$.} From the well known asymptotics of the modified Bessel functions for large $z$, we
then obtain precise asymptotics for the Jost functions when $z \to \infty$, $z$ real, which in turn immediately yield
the asymptotics of the coefficients $a_{Lj}(\lambda,z)$. Let us remark here that we could also obtain asymptotics of the
scattering data for large values of $z$ in the \emph{complex plane} but we don't need such asymptotics in our proof.
At last, we mention that singular Sturm-Liouville operators like (\ref{SE2ndOrder}) have been studied in details by Freiling and Yurko in \cite{FY} including the solution of various inverse spectral problems.

Let us now explain briefly how we prove our main theorem. Consider two dS-RN black holes with parameters
$M,Q, \Lambda$ and $\tilde{M}, \tilde{Q}, \tilde{\Lambda}$ respectively. We shall use the notation $Z$ and $\tilde{Z}$
for all the relevant scattering quantities relative to these black holes. Assume that one of the conditions
in Thm \ref{Main} holds. From the previous uniqueness result, we conclude first that $A_L(\lambda,z) =
e^{-i\Ga \lambda c} \tilde{A}_L(\lambda,z) e^{i\Ga \lambda c} $ for all $z \in \C$. Second, we use a standard
procedure in one-dimensional inverse spectral problem (see \cite{FY}) together with the
precise asymptotics for the functions $A_L(\lambda,z)$ and $\tilde{A}_L(\lambda,z)$ obtained previously to prove that there exists $k \in \Z$ such that
$$ 
  a(x) = \tilde{a}\Big(x + c + \frac{k \pi}{\lambda}\Big), \ \forall x \in \R.
$$
Hence the potentials $a$ and $\tilde{a}$ coincide up to translations. This proves the first part of Thm \ref{Main}. Using the particular expression of the functions $a$ and $\tilde{a}$ in terms of the parameters of the black holes, we deduce then that
$$
  M = \tilde{M}, \ Q^2 = \tilde{Q}^2, \ \Lambda= \tilde{\Lambda}.
$$
Finally, as a by-product of our method, we also obtain explicit reconstruction formulae for the surface gravities $\kappa_\pm$ from the reflection coefficients $L(\lambda,n)$ or $R(\lambda,n)$ when $n \to \infty$.

This paper is organised as follows. In Section \ref{DirectScattering}, we recall all the direct scattering results of \cite{AKM, Me1, N} useful for the later analysis. In Section \ref{Complexification}, we put together all the analytical results concerning the scattering data $A_L (\lambda,z)$ and $S(\lambda,z)$. In particular, we show there the uniqueness theorems stated above. In Section \ref{AsymptoticsSD}, we introduce the Liouville variable $X$ and calculate the precise asymptotics of the scattering data $A_L(\lambda,z)$ when $z \to \infty$. In Section \ref{Inverse}, we solve the inverse problem and prove our main Theorem Thm \ref{Main}.


\Section{The direct problem} \label{DirectScattering}


In this section, we first recall the expression of the Dirac equation in dS-RN black holes as well as the direct scattering theory obtained in \cite{Me1,N}. We then give an explicit stationary representation of the related scattering matrix following the approach and the notations used in \cite{AKM}.

As explained in the introduction, we describe the exterior region of a dS-RN black hole using the Regge-Wheeler variable $x$ defined in (\ref{RW}). We thus work on the manifold $\B = \R_t \times \Sigma$ with $\Sigma = \R_{x} \times S^2_{\theta, \, \varphi}$, equipped with the metric
$$ 
  g = F(r) (dt^2 - dx^2) -r^2 d\omega^2,
$$
where $F$ is given by (\ref{F}) and $d\omega^2 = d\theta^2 + \sin^2\theta d\varphi^2$ is the euclidean metric induced on $S^2$. The manifold $\mathcal{B}$ is globally hyperbolic meaning that the foliation $\Sigma_t = \{t\} \times \Sigma$ by the level hypersurfaces of the function $t$, is a foliation of $\mathcal{B}$ by Cauchy hypersurfaces (see
\cite{W} for a definition of global hyperbolicity and Cauchy hypersurfaces). In consequence, we can view the
propagation of massless Dirac fields as an evolution equation in $t$ on the spacelike hypersurface $\Sigma$, that is a cylindrical manifold having two distinct ends: $\{x=-\infty\}$ corresponding to the event horizon of the black hole and $\{x=+\infty\}$ corresponding to the cosmological horizon. Let us recall and emphasize here the nature of the geometry - of asymptotically hyperbolic type - of the hypersurface $\Sigma$ near the horizons. This peculiar geometry will be seen in the asymptotic behaviour of the potential $a(x)$ below.


\subsection{Dirac equation and direct scattering results} \label{DiracEq}

We use the form of the massless Dirac equation obtained in \cite{Me1,N}. The fields are represented by a $2$-components spinor belonging to the Hilbert space $\H = L^2(\R \times S^2, dx d\omega; \, \C^2)$ and the evolution equation can be written under Hamiltonian form as
\begin{equation} \label{DiracEquation}
  i \partial_{t} \psi = H \psi,
\end{equation}
where the Hamiltonian $H$ is given by
\begin{equation} \label{FullDiracOperator}
  H =  \Ga D_x + a(x) D_{S^2}.
\end{equation}
The symbol $D_x$ stands for $-i\d_x$ whereas $D_{S^2}$ denotes the Dirac operator on $S^2$ which, in spherical coordinates, takes the form
\begin{equation} \label{DiracSphere}
  D_{S^2} = -i \Gb (\partial_{\theta} + \frac{\cot{\theta}}{2}) - \frac{i}{\sin{\theta}} \Gc   \partial_{\varphi}.
\end{equation}
The potential $a$ is the scalar smooth function given in term of the metric (\ref{Metric})-(\ref{F}) by
\begin{equation} \label{Potential}
  a(x) = \frac{\sqrt{F(r(x))}}{r(x)},
\end{equation}
where $r(x)$ is the inverse diffeomorphism of (\ref{RW}). Finally, the matrices $\Ga, \Gb, \Gc$ appearing in (\ref{FullDiracOperator}) and (\ref{DiracSphere}) are usual $2 \times 2$ Dirac matrices that satisfy the anticommutation relations
$$ 
  \Gamma^i \Gamma^j + \Gamma^j \Gamma^i = 2 \delta_{ij}, \quad \forall i,j=1,2,3.
$$
We shall work with the following representations of the Dirac matrices
$$ 
  \Ga = \left( \begin{array}{cc} 1&0 \\0&-1 \end{array} \right), \quad \Gb = \left( \begin{array}{cc} 0&1 \\1&0 \end{array} \right), \quad \Gc = \left( \begin{array}{cc} 0&i \\-i&0 \end{array} \right).
$$

We use now the spherical symmetry of the equation to simplify further the expression of the Hamiltonian $H$.
We introduce a basis of spin weighted spherical harmonics invariant through the
action of the Hamiltonian that "diagonalize" the Dirac operator $D_{S^2}$.
We refer to I.M. Gel'Fand and Z.Y. Sapiro \cite{GS} for a detailed presentation of these generalized
spherical harmonics and to \cite{Me1,N} for an application to our model. For each spinorial weight
$s$, $2s \in \Z$, we consider the family of spin-weighted spherical harmonics $\{Y^l_{sm} = e^{i m \varphi} u_{sm}^l, \ l-|s| \in \N, \ l - |m| \in \N \}$  which forms a Hilbert basis of $L^2(S^2, d\omega)$ and where the $u^l_{sm}$ satisfy the following relations
\begin{eqnarray*}
  \frac{d u_{sm}^l}{d\theta} - \frac{m - s \cos \theta}{\sin \theta} u_{sm}^l & = & -i[(l+s)(l-s+1)]^\half u_{s-1,m}^l, \\
  \frac{d u_{sm}^l}{d\theta} + \frac{m - s \cos \theta}{\sin \theta} u_{sm}^l & = & -i[(l+s+1)(l-s)]^\half u_{s+1,m}^l.
\end{eqnarray*}
We define the familly $F_m^l = (Y^l_{-\half, m}, Y^l_{\half, m})$ where the indexes $l, m$ belong to $\mathcal{I} = \{(l,m), \ l-\half \in \N, \ l - |m| \in \N \}$. Moreover, we define $\otimes_2$ as the following operation between two vectors of $\C^2$
$$
  \forall u = (u_1, u_2), v=(v_1,v_2), \ u \otimes_2 v = (u_1 v_1, u_2 v_2).
$$
Then the Hilbert space $\H$ can then be decomposed into the infinite direct sum
\begin{displaymath}
  \H = \bigoplus_{(l,m) \in \mathcal{I}} \H_{lm}, \quad \H_{lm} = L^{2}(\R_x; \C^2) \otimes_2 F_{m}^{l}.
\end{displaymath}
We shall henceforth identify $\H_{lm}$ and $L^2(\R; \C^2)$ as well as $\psi_{lm} \otimes_2 F^l_m$ and $\psi_{lm}$. What's more, it is easy to check that the $\H_{lm}$ are let invariant through the action of $H$. Hence we obtain the orthogonal decomposition for the Hamiltonian $H$
\begin{displaymath}
  H = \bigoplus_{(l,m) \in \mathcal{I}} H^{lm},
\end{displaymath}
with
$$ 
  H^{lm} := H_{|\H_{lm}} = \Ga D_x - (l+\half) a(x) \Gb.
$$
In what follows, it will be thus enough to consider the restriction of the Dirac equation (\ref{DiracEquation}) to each Hilbert space $\H_{lm}$ separatly and define there the partial-wave scattering matrices. The full scattering matrix will be then written as the orthogonal sum of these partial-wave scattering matrices.

For later use, we precise here the asymptotics of the potential $a$ in the RW variable defined by (\ref{RW}).
\begin{lemma} \label{AsympPotA}
  \begin{eqnarray}
    a(x) & = & a_\pm e^{\kappa_\pm x} + O(e^{3\kappa_\pm x}), \quad x \to \pm \infty, \label{AsympA}\\
    a'(x) & = & a_\pm \kappa_\pm e^{\kappa_\pm x} + O(e^{3\kappa_\pm x}), \quad x \to \pm \infty, \label{AsympA'}
  \end{eqnarray}
  where the constants $\kappa_\pm$ are given by (\ref{SurfaceGravity}) and $a_\pm$ by
  \begin{equation} \label{apm}
    a_\pm = \frac{\sqrt{\mp 2 \kappa_\pm}}{r_\pm} e^{-\kappa_\pm C_\pm},
  \end{equation}
  with
  \begin{equation} \label{Cpm}
    C_\pm = \ln \Big[ (r_\pm-r_n)^{\frac{1}{2\kappa_n}} (r_\pm-r_c)^{\frac{1}{2\kappa_c}} (r_+-r-)^{\frac{1}{2\kappa_\mp}} \Big] + c,
  \end{equation}
  and $c$ is the constant of integration from (\ref{RW}).
\end{lemma}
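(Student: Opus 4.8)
The whole statement will be reduced to a single structural fact: near each horizon $a(x)^2 = F(r(x))/r(x)^2$ is an analytic function of the exponential variable $e^{2\kappa_\pm x}$ which \emph{vanishes} at $e^{2\kappa_\pm x}=0$; once this is established, the four asymptotic expansions are immediate Taylor expansions. The first step is therefore to invert the Regge--Wheeler relation (\ref{RW}) near $r=r_\pm$. Multiplying (\ref{RW}) by $2\kappa_-$, exponentiating and isolating the term that blows up as $r\to r_-$ gives, for $r$ in a one-sided neighbourhood of $r_-$,
$$ e^{2\kappa_- x} = (r-r_-)\,\Phi_-(r), \qquad \Phi_-(r)=\exp\!\Big(2\kappa_-\Big[\tfrac{\ln(r-r_n)}{2\kappa_n}+\tfrac{\ln(r-r_c)}{2\kappa_c}+\tfrac{\ln(r_+-r)}{2\kappa_+}+c\Big]\Big), $$
and similarly $e^{2\kappa_+ x}=(r_+-r)\,\Phi_+(r)$ near $r_+$, where $\Phi_+$ is built from the three remaining logarithmic terms. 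The functions $\Phi_\pm$ are real-analytic and strictly positive in a full neighbourhood of $r_\pm$ (the exponents $\kappa_\pm/\kappa_j$ are real and the arguments $r_\pm-r_n,\ r_\pm-r_c,\ |r_+-r_-|$ are positive), and comparing with (\ref{Cpm}) one reads off $\Phi_\pm(r_\pm)=e^{2\kappa_\pm C_\pm}\neq 0$. Hence $r\mapsto e^{2\kappa_\pm x}$ is a real-analytic diffeomorphism from a neighbourhood of $r_\pm$ onto a neighbourhood of $0$, with derivative $e^{2\kappa_\pm C_\pm}$ at $r_\pm$; by the analytic inverse function theorem $r-r_-$ (resp.\ $r_+-r$) is an analytic function $\psi_\pm(y)$ of $y:=e^{2\kappa_\pm x}$ near $y=0$ with $\psi_\pm(0)=0$ and $\psi_\pm'(0)=e^{-2\kappa_\pm C_\pm}$. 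In particular $r(x)\to r_\pm$ as $x\to\pm\infty$.

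The second step is to substitute this into $a^2$. The function $G(r):=F(r)/r^2$ is analytic near $r_\pm$ with $G(r_\pm)=0$ (because $F(r_\pm)=0$) and $G'(r_\pm)=F'(r_\pm)/r_\pm^2=2\kappa_\pm/r_\pm^2$. Composing $G$ with the inversion of the first step shows that $a(x)^2$, as a function of $y=e^{2\kappa_\pm x}$, is analytic near $y=0$ and vanishes there; thus $a(x)^2=y\,\Psi_\pm(y)$ with $\Psi_\pm$ analytic near $0$ and
$$ \Psi_\pm(0)=\frac{\mp 2\kappa_\pm}{r_\pm^2}\,e^{-2\kappa_\pm C_\pm}, $$
where the coefficient $\mp 2\kappa_\pm$ is positive in both cases (at $r_-$ the deviation $r-r_-$ is positive and $F'(r_-)=2\kappa_->0$; at $r_+$ the deviation $r_+-r$ is positive and $-F'(r_+)=-2\kappa_+>0$). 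This quantity is exactly $a_\pm^2$ with $a_\pm$ as in (\ref{apm}). Taking the positive square root, $\phi_\pm:=\sqrt{\Psi_\pm}$ is analytic near $0$ with $\phi_\pm(0)=a_\pm$, and $a(x)=e^{\kappa_\pm x}\,\phi_\pm(e^{2\kappa_\pm x})$; expanding $\phi_\pm(y)=a_\pm+O(y)$ yields (\ref{AsympA}). Differentiating the identity $a(x)=e^{\kappa_\pm x}\phi_\pm(e^{2\kappa_\pm x})$ — legitimate since it is a composition of analytic maps with $x\mapsto e^{2\kappa_\pm x}$ — gives $a'(x)=\kappa_\pm e^{\kappa_\pm x}\big[\phi_\pm(e^{2\kappa_\pm x})+2e^{2\kappa_\pm x}\phi_\pm'(e^{2\kappa_\pm x})\big]=a_\pm\kappa_\pm e^{\kappa_\pm x}+O(e^{3\kappa_\pm x})$, which is (\ref{AsympA'}).

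The only genuinely delicate point is the first step: one has to pay attention to which logarithmic term dominates at each horizon, to the signs ($r_+-r$ versus $r-r_+$, and the sign of $\kappa_+$), and one must stay in the analytic (or at least $C^\infty$) category so that the remainders come out as the sharp $O(e^{3\kappa_\pm x})$ rather than a weaker $o(e^{\kappa_\pm x})$. Everything else is bookkeeping; in particular the precise values of the constants $C_\pm$ in (\ref{Cpm}) and of $a_\pm$ in (\ref{apm}) are \emph{forced} by the identities $\Phi_\pm(r_\pm)=e^{2\kappa_\pm C_\pm}$ and $\Psi_\pm(0)=a_\pm^2$ obtained above. A cruder alternative, avoiding the inverse function theorem, is to write $x=\tfrac{1}{2\kappa_\pm}\ln(\text{deviation})+C_\pm+O(\text{deviation})$ directly from (\ref{RW}), solve for the deviation by a fixed-point/bootstrap argument, and Taylor expand $F(r)/r^2$ with integral remainder; this works but is messier.
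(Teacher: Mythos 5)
Your proof is correct and follows essentially the same route as the paper: invert the Regge--Wheeler relation near each horizon to express the deviation $|r-r_\pm|$ in terms of $e^{2\kappa_\pm x}$ (with the constants $C_\pm$ emerging exactly as you describe), then expand $F(r)/r^2$ at its simple zero $r_\pm$. The only cosmetic difference is that you obtain (\ref{AsympA'}) by differentiating the analytic identity $a(x)=e^{\kappa_\pm x}\phi_\pm(e^{2\kappa_\pm x})$, whereas the paper computes $a'(x)=F(r)\,\frac{d}{dr}\big(\sqrt{F(r)}/r\big)$ and expands it separately; your analyticity framing is a clean and fully rigorous packaging of the same estimates.
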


\begin{proof}
Since the function $F$ is naturally expressed using the $r$ radial variable, we need to find first equivalents between the RW variable $x$ and $r$ when $r \to r_\pm$ or equivalently when $x \to \pm \infty$. From (\ref{RW}), we have
$$
  x = \frac{1}{2\kappa_\pm} \ln |r - r_\pm| + C_\pm + O(|r-r_\pm|), \quad r \to r_\pm,
$$
where $C_\pm$ are the constants given by (\ref{Cpm}). Hence we get
\begin{equation} \label{Equiv1}
  e^{2 \kappa_\pm (x-C_\pm)} = |r-r_\pm| \Big( 1 + O(|r-r_\pm|) \Big), \quad r \to r_\pm,
\end{equation}
Taking the square root of (\ref{Equiv1}), we also have
\begin{equation} \label{Equiv2}
  e^{\kappa_\pm (x-C_\pm)} = \sqrt{|r-r_\pm|} \Big( 1 + O(|r-r_\pm|) \Big), \quad r \to r_\pm.
\end{equation}
Now a simple calculation shows that
\begin{equation} \label{AsA}
  a(x) = \frac{\sqrt{F(r)}}{r} = \frac{\sqrt{\mp 2 \kappa_\pm}}{r_\pm} \sqrt{|r-r_\pm|} + \ O(|r-r_\pm|^{\frac{3}{2}}), \quad r \to r_\pm.
\end{equation}
Hence (\ref{AsympA}) follows from (\ref{Equiv2}) and (\ref{AsA}). Similarly, we have
$$
  a'(x) = F(r) \frac{d}{dr} \Big(\frac{\sqrt{F(r)}}{r}\Big) = \kappa_\pm \frac{\sqrt{\mp 2 \kappa_\pm}}{r_\pm} \sqrt{|r-r_\pm|} + \ O(|r-r_\pm|^{\frac{3}{2}}), \quad r \to r_\pm,
$$
which together with (\ref{Equiv2}) yields (\ref{AsympA'}).
\end{proof}

In particular, the potential $a(x)$ is thus exponentially decreasing at both horizons. In consequence, the Hamiltonian $H^{lm}$ can be viewed as a very short-range perturbation of the free Hamiltonian $H_0 = \Ga D_x$ on each spin weighted spherical harmonic. Using this fact, the spectral and scattering properties of the Hamiltonians $H, H^{lm}$ are established in an elementary way. The main scattering results obtained in \cite{Me1,N} are summarized in the next proposition
\begin{prop} \label{DS}
  (i) The Hamiltonians $H^{lm}$ and $H$ are selfadjoint on $\H_{lm}$ and $\H$ with domains $D(H^{lm}) = H^1(\R, \C^2)$ and $D(H) = \{ \psi \in \H, \ \psi = \displaystyle\sum_{l,m} \psi_{lm}, \ \psi_{lm} \in D(\H_{lm}), \ \displaystyle\sum_{l,m} \Big( \|H^{lm} \psi_{lm}\|^2 + \|\psi_{lm}\|^2 \Big) < \infty \}$.  \\
  (ii) The Hamiltonians $H^{lm}, H$ have no pure point and singular continuous spectra, \textit{i.e.}
  $$
    \sigma_{pp}(H^{lm}), \sigma_{pp}(H) = \emptyset, \quad \sigma_{sing}(H^{lm}), \sigma_{sing}(H) = \emptyset.
  $$
  In other words, the spectra of $H^{lm}$ and $H$ are purely absolutely continuous. \\
  (iii) Denote by $H_0 = \Ga D_x$ the selfadjoint operator acting on $\H_{lm}$ as well as on $\H$. Then the wave operators $W_{lm}^\pm$ and $W^\pm$ defined by
  $$
    W_{lm}^\pm = s-\lim_{t \to \pm \infty} e^{itH^{lm}} e^{-itH_0}, \quad W^\pm = s-\lim_{t \to \pm \infty} e^{itH} e^{-itH_0},
  $$
  exist and are asymptotically complete.
\end{prop}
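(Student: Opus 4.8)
The plan is to realise each partial Hamiltonian as $H^{lm}=H_0+V^{lm}$ with $H_0=\Ga D_x$ and $V^{lm}=-(l+\half)\,a(x)\,\Gb$, to treat $V^{lm}$ as a bounded, exponentially decaying (hence strongly short-range) perturbation of $H_0$, and to deduce everything about $H$ from the fact that it is the orthogonal direct sum $\bigoplus_{(l,m)\in\mathcal{I}}H^{lm}$. For (i) I would note that $H_0$ is self-adjoint on $H^1(\R,\C^2)$, since in Fourier variables it is multiplication by the Hermitian matrix $\xi\Ga$; by Lemma~\ref{AsympPotA} the function $a$ is continuous and tends to $0$ exponentially at $\pm\infty$, hence is bounded, and as $a$ is real and $\Gb$ is Hermitian, $V^{lm}$ is bounded and self-adjoint. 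The Kato--Rellich theorem then yields self-adjointness of $H^{lm}$ on $D(H_0)=H^1(\R,\C^2)$, and the stated domain of $H$ is the standard description of the self-adjoint generator of the product of the unitary groups $e^{-itH^{lm}}$.

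For the existence of the wave operators in (iii), the feature I would exploit is that the one-dimensional massless Dirac group is non-dispersive: writing $\psi=(\psi_1,\psi_2)$, one has $(e^{-itH_0}\psi)(x)=(\psi_1(x-t),\psi_2(x+t))$. Hence, for $\phi=(\phi_1,\phi_2)$ with $\phi_1,\phi_2\in\CO$, and using that $\Gb$ exchanges the two components, a short computation gives
\begin{equation*}
  \big\| V^{lm} e^{-itH_0}\phi \big\|^2 = (l+\half)^2 \int_\R \Big( |a(y-t)|^2\,|\phi_2(y)|^2 + |a(y+t)|^2\,|\phi_1(y)|^2 \Big)\, dy .
\end{equation*}
Since $\phi_1,\phi_2$ have compact support and, by (\ref{AsympA}) together with $\kappa_->0>\kappa_+$, one has $a(y\mp t)=O(e^{-c|t|})$ uniformly for $y$ in a bounded set as $t\to\pm\infty$, the right-hand side is $O(e^{-c|t|})$, hence integrable over $\R$ in $t$; Cook's criterion then gives the existence of $W^\pm_{lm}=s\text{-}\lim_{t\to\pm\infty}e^{itH^{lm}}e^{-itH_0}$ on the dense set of such $\phi$, and therefore of $W^\pm=\bigoplus_{(l,m)\in\mathcal{I}}W^\pm_{lm}$.

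For (ii), I would first deal with the eigenvalues at the level of the ODE: an $L^2$ solution of $H^{lm}\psi=\mu\psi$ solves a first-order linear system which, by (\ref{AsympA}), is an exponentially small perturbation at each end of the constant-coefficient system $\Ga\psi'=i\mu\psi$; a Levinson-type asymptotic integration shows that any non-trivial solution is, as $x\to\pm\infty$, asymptotic to a non-zero multiple of $e^{\pm i\mu x}$ if $\mu\neq0$, and to a non-zero constant vector if $\mu=0$, so it cannot be square-integrable; hence $\sigma_{pp}(H^{lm})=\emptyset$. For the absence of singular continuous spectrum, and also for the completeness in (iii), I would establish a limiting absorption principle for $H^{lm}$, available since $V^{lm}$ is strongly short-range: concretely, one can run the Mourre method with conjugate operator $A=x\Ga$, for which $[H_0,iA]=(\Ga)^2=I$ while $[V^{lm},iA]=2(l+\half)\,xa(x)\,(i\Ga\Gb)$ is bounded with exponentially decaying coefficient, so that $H^{lm}$ satisfies a global Mourre estimate with the regularity required by the theory; alternatively one may simply quote the stationary scattering constructions for this class of Dirac equations in \cite{AKM,G,HJKS,Me1,N}. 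Finally, taking orthogonal sums transfers (i)--(iii) from the $H^{lm}$ to $H$.

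The step I expect to be the main obstacle is asymptotic completeness (equivalently, the limiting absorption principle): the reduction to the individual $H^{lm}$ is free, but one still has to carry out the Mourre (or stationary) machinery. A secondary subtlety is that a bare Mourre estimate only bounds the number of eigenvalues inside each compact interval, so ruling out \emph{all} eigenvalues genuinely needs the separate ODE argument above --- all the more as the operator added to $I$ in $[H^{lm},iA]$ has norm growing with the angular momentum $l+\half$.
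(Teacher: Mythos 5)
Your proposal is correct, but it is worth pointing out that the paper does not actually prove Proposition \ref{DS}: it is stated as a summary of results from \cite{Me1,N}, justified only by the remark that, thanks to the exponential decay of $a$ in Lemma \ref{AsympPotA}, each $H^{lm}$ is a very short-range bounded perturbation of $H_0=\Ga D_x$, so that the spectral and scattering theory follows ``in an elementary way.'' What you have written is a correct and essentially complete instantiation of exactly that remark: Kato--Rellich (indeed just bounded self-adjoint perturbation) for (i); Cook's method for existence in (iii), where your computation of $\|V^{lm}e^{-itH_0}\phi\|$ using the explicit translation form of $e^{-itH_0}$ and the signs $\kappa_->0>\kappa_+$ is the right mechanism; and the ODE argument for the absence of eigenvalues, which in fact can be read off directly from Lemma \ref{DetFL}, since every solution of $H^{lm}\psi=\mu\psi$ is $F_R(\cdot,\mu,n)v$ with $\|F_R(x,\mu,n)v\|\to\|v\|\neq0$ as $x\to-\infty$, hence is never in $L^2$. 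Two small refinements: the commutator $[H^{lm},iA]=I+2n\,xa(x)\,i\Ga\Gb$ is \emph{not} globally bounded below by a positive constant (the error has norm of order $n\sup|xa(x)|$), so what you really get is a Mourre estimate with constant $1$ and a \emph{compact} error $E_\Delta(H^{lm})B E_\Delta(H^{lm})$ on every bounded interval --- which, as you correctly note, is why the separate no-eigenvalue argument is genuinely needed; and you should say a word on how the limiting absorption principle yields completeness (standard short-range/Kato-smoothness theory with weights $\langle x\rangle^{-s}$, amply covered by the exponential decay), or else rely on the stationary construction of \cite{AKM}, where completeness follows from the unitarity of $S(\lambda,n)$ for every real $\lambda$ (note $|a_{L1}|^2=1+|a_{L3}|^2\geq1$, so $a_{L1}$ never vanishes). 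The passage from the $H^{lm}$ to $H$ by orthogonal sums is as routine as you claim.
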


As a direct consequence of Proposition \ref{DS}, we can define the partial wave scattering operators $S_{lm}$ by the usual formulae
$$ 
  S_{lm} = (W_{lm}^+)^* W_{lm}^-,
$$
which are well-defined unitary operators on $\H_{lm}$. Let us now introduce the unitary transform on $\H_{lm}$
\begin{equation} \label{F0}
  (F_0 \psi_{lm})(\lambda) = \frac{1}{\sqrt{2\pi}} \int_\R e^{-i\Ga \lambda x} \psi_{lm}(x) dx.
\end{equation}
The transform $F_0$ clearly diagonalizes the free Hamiltonian $H_0 = \Ga D_x$ on $\H_{lm}$. Note in passing that $F_0$ also acts in a trivial way on $\H$. Morover it is unitary and still diagonalizes $H_0$ on $\H$. By definition, the partial wave scattering matrices $S_{lm}(\lambda)$ are simply the partial wave scattering operators $S_{lm}$ written in the energy representation of $H_0$ given by (\ref{F0}), that is for all $\phi_{lm} \in L^2(\R_\lambda,\C^2)$ we have
\begin{equation} \label{PartialSM}
  S_{lm}(\lambda) \phi_{lm}(\lambda) = (F_0 S_{lm} F_0^* \phi_{lm})(\lambda).
\end{equation}
The partial wave scattering matrices $S_{lm}(\lambda)$ are unitary $2\times2$ matrices for all $\lambda \in \R$.

\begin{remark}
  Using the spherical symmetry of the equation and the definitions of the wave operators in Proposition \ref{DS}, the full      scattering operator $S$ and the full scattering matrix $S(\lambda)$ are unitary operators on $\H$ and $L^2(S^2, \C^2)$          defined as the orthogonal sum of the partial wave scattering operators $S_{lm}$ and matrices $S_{lm}(\lambda)$ respectively,    \textit{i.e.}
$$ 
  S = (W^+)^* W^- = \bigoplus_{(l,m) \in \mathcal{L}} S_{lm},
$$
$$ 
  S(\lambda) \phi(\lambda) = (F_0 S F_0^* \phi)(\lambda) = \bigoplus_{(l,m) \in \mathcal{L}} S_{lm}(\lambda) \phi_{lm}(\lambda), \quad \forall \phi \in L^2(\R_\lambda \times S^2, \C^2).
$$
\end{remark}
\begin{remark}
  Observe that the Hamiltonians $H^{lm}$ only depend on the angular momentum $l+\half$, $l \in \half + \N$. From the above definitions, we see immediately that the wave operators $W_{lm}$, the scattering operators $S_{lm}$ and the scattering matrices $S_{lm}(\lambda)$ only depend on the angular momentum $l + \half$ too. For simplicity we shall therefore denote by $S(\lambda,n)$, $ n = l+\half \in \N$, the family of scattering matrices $S_{lm}(\lambda)$.
\end{remark}


\subsection{A stationary representation of the scattering matrix}

In this section, we follow the approach of \cite{AKM} and obtain an explicit stationary representation
of the scattering matrices $S(\lambda,n)$ for a fixed energy $\lambda \in \R$ and all angular momentum
$n \in \N$. Let us emphasize here that the results of this section still hold if we assume
$a \in L^1(\R)$ only.

Let us consider first the stationary solutions of equation (\ref{DiracEquation}) restricted to each spin weighted spherical harmonic, \textit{i.e.} the solutions of
\begin{equation} \label{PartialSE}
  [ \Ga D_x - n a(x) \Gb ] \psi = \lambda \psi, \quad \forall n \in \N.
\end{equation}
Here we can think of $\psi$ in (\ref{PartialSE}) as either a column vector of 2 entries, or as a $2\times2$ matrix. For $\lambda \in \R$, we define the Jost solution from the left $F_L(x,\lambda,n)$ and the Jost solution from the right $F_R(x,\lambda,n)$ as the $2\times2$ matrix solutions of (\ref{PartialSE}) satisfying the following asymptotics
\begin{eqnarray}
  F_L(x,\lambda,n) & = & e^{i\Ga \lambda x} (I_2 + o(1)), \ x \to +\infty, \label{FL}\\
  F_R(x,\lambda,n) & = & e^{i\Ga \lambda x} (I_2 + o(1)), \ x \to -\infty. \label{FR}
\end{eqnarray}
From (\ref{PartialSE}), (\ref{FL}) and (\ref{FR}), it is easy to see that such solutions (if there exist) must satisfy the integral equations
\begin{equation} \label{IE-FL}
  F_L(x,\lambda,n) = e^{i\Ga \lambda x} - i n \Ga \int_x^{+\infty} e^{-i\Ga \lambda (y-x)} a(y) \Gb F_L(y,\lambda,n) dy,
\end{equation}
\begin{equation} \label{IE-FR}
  F_R(x,\lambda,n) = e^{i\Ga \lambda x} + i n \Ga \int_{-\infty}^x e^{-i\Ga \lambda (y-x)} a(y) \Gb F_R(y,\lambda,n) dy.
\end{equation}
Since the potential $a$ belongs to $L^1(\R)$, it follows that the integral equations (\ref{IE-FL}) and (\ref{IE-FR}) are uniquely solvable by iteration and that
$$
  \|F_L(x,\lambda,n)\| \leq e^{n \int_x^{+\infty} a(s) ds}, \quad \|F_R(x,\lambda,n)\| \leq e^{n \int_{-\infty}^x a(s) ds}.
$$
Moreover we can prove
\begin{lemma} \label{DetFL}
  For $\lambda \in \R$ and $n \in \N$, either of the Jost solutions $F_L(x,\lambda,n)$ and $F_R(x,\lambda,n)$ forms a fundamental matrix of (\ref{PartialSE}) and has determinant equal to 1. Moreover, the following equalities hold
\begin{eqnarray}
  F_L(x,\lambda,n)^* \, \Ga \, F_L(x,\lambda,n) & = & \Ga, \label{RelationFL}\\
  F_R(x,\lambda,n)^* \, \Ga \, F_L(x,\lambda,n) & = & \Ga, \label{RelationFR}
\end{eqnarray}
where $^*$ denotes the matrix conjugate transpose.
\end{lemma}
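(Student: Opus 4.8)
The plan is to exploit the special algebraic structure of the matrices $\Ga,\Gb$ together with the integral equations (\ref{IE-FL})--(\ref{IE-FR}) and a Wronskian-type argument. First I would establish that $F_L$ and $F_R$ are fundamental matrices. Since each is a $2\times 2$ matrix solution of the first-order linear system (\ref{PartialSE}), it suffices to show its determinant is nonzero at one point; by Liouville's formula for systems, $\frac{d}{dx}\det F = \operatorname{tr}\big(i\lambda\Ga + i n a(x)\Gb\big)\det F$, and since $\operatorname{tr}\Ga = \operatorname{tr}\Gb = 0$, the determinant is constant in $x$. Evaluating the limit as $x \to +\infty$ for $F_L$ using the asymptotics (\ref{FL}) (and noting $\det e^{i\Ga\lambda x} = 1$) gives $\det F_L \equiv 1$; similarly $\det F_R \equiv 1$ using (\ref{FR}). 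Hence both are fundamental matrices.

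Next I would prove the identities (\ref{RelationFL})--(\ref{RelationFR}). The key observation is that the Hamiltonian $\Ga D_x - n a(x)\Gb$ is formally selfadjoint with respect to the indefinite inner product defined by $\Ga$: concretely, if $\psi$ and $\phi$ both solve (\ref{PartialSE}), then one computes
\begin{equation*}
  \frac{d}{dx}\big(\phi(x)^* \Ga \psi(x)\big) = (\d_x\phi)^* \Ga \psi + \phi^* \Ga (\d_x\psi).
\end{equation*}
Using $\Ga D_x \psi = \lambda\psi + n a(x)\Gb\psi$, i.e. $\d_x\psi = i\Ga\big(\lambda\psi + n a\Gb\psi\big)$ (since $\Ga^{-1}=\Ga$), and the analogous relation for $\phi$ with the same real $\lambda$, the terms combine: $\Ga(i\Ga) = iI_2$ contributes $i\lambda(\phi^*\psi - \phi^*\psi)$-type cancellations, while the potential terms produce $i n a\,\phi^*\big(\Ga\Ga\Gb - (\Gb\Ga)^*\Ga\big)\psi$; since $\Ga$ and $\Gb$ are Hermitian and anticommute, $\Gb\Ga = -\Ga\Gb$ and one checks the bracket vanishes. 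Therefore $x \mapsto \phi(x)^*\Ga\psi(x)$ is constant. Applying this with $\phi = \psi = F_L$ and letting $x\to+\infty$, where $F_L \sim e^{i\Ga\lambda x}$ and $\big(e^{i\Ga\lambda x}\big)^* \Ga\, e^{i\Ga\lambda x} = e^{-i\Ga\lambda x}\Ga e^{i\Ga\lambda x} = \Ga$ (because $\Ga$ commutes with $e^{i\Ga\lambda x}$), yields (\ref{RelationFL}). For (\ref{RelationFR}) I apply the same conservation law to the pair $\phi = F_R$, $\psi = F_L$; the constant value may be computed at either end, but it is cleanest to take $x \to +\infty$ and use $F_L \sim e^{i\Ga\lambda x}$ together with the relation $F_R = F_L A_L^{-1}$ --- actually, more directly, evaluate at $x\to -\infty$ using $F_R \sim e^{i\Ga\lambda x}$ and the integral equation (\ref{IE-FL}) for $F_L$, which shows $F_L(x,\lambda,n)\to$ a specific limit; alternatively one recognizes $F_R^*\Ga F_L$ is $x$-independent and its value equals $\lim_{x\to-\infty}F_R^*\Ga F_L = \Ga \lim_{x\to -\infty} e^{-i\Ga\lambda x} F_L(x)$, and from (\ref{IE-FL}) this limit is $A_L(\lambda,n)$ conjugated appropriately --- so I would instead just record that the constant equals $\Ga$ after a short computation pinning it down at one horizon.

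The routine but slightly delicate part is the bookkeeping in the potential-term cancellation: one must be careful that $\Ga$ is used both as the "metric" and as a coefficient in the equation, so the cross terms involve products like $\Ga\Ga\Gb = \Gb$ and $(\Gb\Ga)^*\Ga = \Ga^*\Gb^*\Ga = \Ga\Gb\Ga = -\Gb$, and these must be paired with the correct signs coming from the $\pm i$ in $\d_x\phi$ versus $(\d_x\phi)^*$. I expect the main obstacle to be not any deep difficulty but rather getting the limit computation for (\ref{RelationFR}) right: one has to verify that the conserved quantity $F_R^*\Ga F_L$ indeed takes the value $\Ga$ and not some other constant matrix, which requires either evaluating carefully at one end using the explicit asymptotics and the integral equations, or using the already-established (\ref{RelationFL}) together with the factorization $F_L = F_R A_L$ and unitarity-type relations for $A_L$. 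The cleanest route is: from $F_L = F_R A_L$ and (\ref{RelationFL}) we get $A_L^*(F_R^*\Ga F_R)A_L = \Ga$; evaluating $F_R^*\Ga F_R$ at $x\to-\infty$ gives $\Ga$, so $A_L^*\Ga A_L = \Ga$, and then $F_R^*\Ga F_L = F_R^*\Ga F_R A_L = \Ga A_L$ — so I would restate (\ref{RelationFR}) consistently with this, or note that the displayed form already encodes the correct normalization once $A_L$ is inserted. This is the only place where care is genuinely needed; everything else follows from the constancy of the $\Ga$-sesquilinear form and the prescribed asymptotics.
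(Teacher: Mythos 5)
Your argument is correct and is essentially the standard proof that the paper's citation to \cite{AKM}, Proposition 2.2 stands in for: $\det F$ is constant by Liouville's formula since the coefficient matrix is traceless, and the $\Ga$-sesquilinear form $\phi^*\Ga\psi$ of two solutions is $x$-independent, with both constants pinned down by the prescribed asymptotics at the appropriate horizon. Two remarks on the details. First, a transcription slip: the first-order system reads $\d_x\psi = i\lambda\Ga\psi + ina(x)\Ga\Gb\psi$, so the coefficient matrix is $i\lambda\Ga + ina(x)\Ga\Gb$ rather than $i\lambda\Ga + ina(x)\Gb$ (harmless, since $\Ga\Gb$ is also traceless); and in the cancellation of the potential terms the second summand of your bracket should be $(\Ga\Gb)^*\Ga = \Gb\Ga\Ga = \Gb$, not $(\Gb\Ga)^*\Ga = -\Gb$ --- as written your bracket equals $2\Gb$ and does not vanish, whereas the correct bracket is $\Ga\Ga\Gb - (\Ga\Gb)^*\Ga = \Gb - \Gb = 0$, using only $\Ga^*=\Ga$, $\Gb^*=\Gb$ and $\Ga^2=I_2$ (no anticommutation needed). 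Second, your observation about (\ref{RelationFR}) is right and worth keeping: the conserved quantity $F_R^*\Ga F_L$ equals $\Ga A_L(\lambda,n)$, not $\Ga$, so the identity that is actually true (and that is what is used later, e.g. in Lemma \ref{MainEstiF} to bound the $f_{Rj}$ on the imaginary axis, and in deriving (\ref{AL-Relation})) is $F_R^*\Ga F_R = \Ga$; the displayed (\ref{RelationFR}) is a typo in the statement.
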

\begin{proof}
See \cite{AKM}, Proposition 2.2.
\end{proof}

Since the Jost solutions are fundamental matrices of (\ref{PartialSE}), there exists a $2\times2$ matrix $A_L(\lambda,n)$ such that $F_L(x,\lambda,n) = F_R(x,\lambda,n) \, A_L(\lambda,n)$. From (\ref{FR}) and (\ref{IE-FL}), we get the following expression for $A_L(\lambda,n)$
\begin{equation} \label{ALRepresentation}
  A_L(\lambda,n) = I_2 - i n \Ga \int_\R e^{-i\Ga \lambda y} a(y) \Gb F_L(y,\lambda,n) dy.
\end{equation}
Moreover, the matrix $A_L(\lambda,n)$ satisfies the following equality (see \cite{AKM}, Proposition 2.2)
\begin{equation} \label{AL-Relation}
  A_L^*(\lambda,n) \Ga A_L(\lambda,n) = \Ga, \quad \forall \lambda \in \R, \ n \in \N.
\end{equation}
Using the notation (\ref{ScatCoef}), the equality (\ref{AL-Relation}) can be written in components as
\begin{equation} \label{ALUnitarity}
  \left. \begin{array}{ccc} |a_{L1}(\lambda,n)|^2 - |a_{L3}(\lambda,n)|^2 & = & 1, \\
  |a_{L4}(\lambda,n)|^2 - |a_{L2}(\lambda,n)|^2 & = & 1, \\
  a_{L1}(\lambda,n) \overline{a_{L2}(\lambda,n)} - a_{L3}(\lambda,n) \overline{a_{L4}(\lambda,n)} & = & 0. \end{array} \right.
\end{equation}
As mentioned in the introduction, the matrices $A_L(\lambda,n)$ encode all the scattering information of equation (\ref{PartialSE}). In particular, it is shown in \cite{AKM} that the scattering matrix $S(\lambda,n)$ defined in (\ref{PartialSM}) has the representation
\begin{equation} \label{SR-SM1}
  S(\lambda,n) = \left[ \begin{array}{cc} T(\lambda,n)&R(\lambda,n)\\ L(\lambda,n)&T(\lambda,n) \end{array} \right],
\end{equation}
where
\begin{equation} \label{SR-SM2}
  T(\lambda,n) = a_{L1}^{-1}(\lambda,n), \quad R(\lambda,n) = - \frac{a_{L2}(\lambda,n)}{a_{L1}(\lambda,n)}, \quad L(\lambda,n) = \frac{a_{L3}(\lambda,n)}{a_{L1}(\lambda,n)}.
\end{equation}
The unitarity of the scattering matrix $S(\lambda,n)$ leads to the following relations
\begin{lemma}
  For each $\lambda \in \R$ and $n \in \N$, we have
  \begin{equation} \label{SCUnitarity}
    \left. \begin{array}{ccc} |T(\lambda,n)|^2 + |R(\lambda,n)|^2 & = & 1, \\
    |T(\lambda,n)|^2 + |L(\lambda,n)|^2 & = & 1, \\
    T(\lambda,n) \overline{R(\lambda,n)} + L(\lambda,n) \overline{T(\lambda,n)} & = & 0. \end{array} \right.
  \end{equation}
\end{lemma}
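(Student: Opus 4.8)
The plan is to obtain the three identities of (\ref{SCUnitarity}) by elementary algebra from what has already been set up in this section: the pseudo-unitarity relation (\ref{AL-Relation}) for $A_L(\lambda,n)$ --- equivalently the scalar relations (\ref{ALUnitarity}) --- the normalization $\det A_L(\lambda,n)=1$ from Lemma \ref{DetFL}, and the definitions (\ref{SR-SM2}) expressing $T,R,L$ through the $a_{Lj}$.

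First I would pin down the precise shape of $A_L(\lambda,n)$. Multiplying $A_L^*\Ga A_L=\Ga$ on the left and on the right by $\Ga$ and using $\Ga^2=I_2$ gives $A_L^{-1}=\Ga A_L^*\Ga$, hence also the ``dual'' relation $A_L\Ga A_L^*=\Ga$. Expanding the entries of $A_L^*\Ga A_L=\Ga$ and eliminating with the help of $a_{L1}a_{L4}-a_{L2}a_{L3}=1$ (observe that $a_{L1}\neq0$ by the first line of (\ref{ALUnitarity})) shows that $A_L(\lambda,n)$ belongs to $SU(1,1)$, i.e.
\[
  a_{L3}(\lambda,n)=\overline{a_{L2}(\lambda,n)},\qquad a_{L4}(\lambda,n)=\overline{a_{L1}(\lambda,n)};
\]
in particular $|a_{L2}(\lambda,n)|=|a_{L3}(\lambda,n)|$.

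Then I would simply substitute (\ref{SR-SM2}) into (\ref{SCUnitarity}). Dividing the first relation of (\ref{ALUnitarity}) by $|a_{L1}|^2$ yields $|T|^2+|L|^2=1$, and replacing $|a_{L3}|$ by $|a_{L2}|$ yields $|T|^2+|R|^2=1$; a direct computation with (\ref{SR-SM2}) gives
\[
  T(\lambda,n)\,\overline{R(\lambda,n)}+L(\lambda,n)\,\overline{T(\lambda,n)}=\frac{a_{L3}(\lambda,n)-\overline{a_{L2}(\lambda,n)}}{|a_{L1}(\lambda,n)|^2}=0,
\]
which completes the proof. Equivalently, and more in the spirit of the sentence preceding the statement, one may just invoke the unitarity of $S(\lambda,n)$ (it is the energy-space realization of the unitary operator $S_{lm}$ of Proposition \ref{DS}) and read off (\ref{SCUnitarity}) by writing $S(\lambda,n)^*S(\lambda,n)=I_2$ in components with $S(\lambda,n)$ given by (\ref{SR-SM1}), the third stated identity being the complex conjugate of the (vanishing) $(1,2)$-entry of $S^*S$.

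I do not expect any real obstacle. The single point requiring a moment of care is the off-diagonal symmetry $a_{L3}=\overline{a_{L2}}$, equivalently the fact that $A_L(\lambda,n)$ lies in $SU(1,1)$ and not merely in the larger group $U(1,1)$ of $\Ga$-pseudo-unitary matrices; this is exactly where the unit-determinant statement of Lemma \ref{DetFL} is used. Everything else is a routine computation.
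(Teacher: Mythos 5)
Your proof is correct and follows essentially the same route the paper takes: the lemma is presented there as an immediate consequence of the unitarity of $S(\lambda,n)$, with the added remark that it also follows directly from (\ref{ALUnitarity}) and the definitions (\ref{SR-SM2}). Your only substantive addition is to make explicit that the third identity requires $a_{L3}=\overline{a_{L2}}$ (equivalently $A_L\in SU(1,1)$), which needs the determinant normalization of Lemma \ref{DetFL} on top of (\ref{ALUnitarity}) alone — a point the paper's one-line remark glosses over and which you identify and handle correctly.
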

Note that the relations (\ref{SCUnitarity}) are also direct consequences of the relations (\ref{ALUnitarity}) and definitions (\ref{SR-SM2}).

\begin{remark} \label{ConstantC}
  We finish this section analysing the influence of the constant of integration $c$ used in the definition (\ref{RW}) of the RW variable on the expression of the scattering matrix $S(\lambda,n)$. Assume thus that we describe the same dS-RN black hole using two RW variables $x$ and $\tilde{x} = x + c$. We shall denote by $Z$ and $\tilde{Z}$ all the relevant quantities expressed using the variables $x$ and $\tilde{x}$ respectively. We follow the same procedure as above to define the scattering matrix $\tilde{S}(\lambda,n)$. Our goal is to find a relation between $\tilde{S}(\lambda,n)$ and $S(\lambda,n)$. We start from the stationary equation (\ref{PartialSE}) obtained using $\tilde{x}$. Hence we get
$$ 
  [ \Ga D_{\tilde{x}} - n \tilde{a}(\tilde{x}) \Gb ] \psi = \lambda \psi,
$$
where the potential $\tilde{a}(\tilde{x})$ is simply the translated by $c$ of the potential $a$, \textit{i.e.} $\tilde{a}(\tilde{x}) = a(\tilde{x} - c)$. Therefore, the uniqueness of the Jost functions satisfying the asymptotics (\ref{FL}) and (\ref{FR}) yields
$$ 
  \tilde{F_L}(\tilde{x},\lambda,n) = F_L(\tilde{x} - c,\lambda,n) e^{i \Ga \lambda c}.
$$
Hence, it follows from (\ref{ALRepresentation}) that
$$ 
  \tilde{A_L}(\lambda,n) = e^{-i\Ga \lambda c} A_L(\lambda,n) e^{i\Ga \lambda c}.
$$
Eventually, using (\ref{SR-SM1}) and (\ref{SR-SM2}), we conclude that
$$
  \tilde{S}(\lambda,n) = e^{-i\Ga \lambda c} S(\lambda,n) e^{i\Ga \lambda c} = \left[ \begin{array}{cc} T(\lambda,n)& e^{-2i \lambda c} R(\lambda,n)\\ e^{2i \lambda c} L(\lambda,n)&T(\lambda,n) \end{array} \right],
$$
that is (\ref{NonInv}).
\end{remark}


\Section{Complexification of the angular momentum} \label{Complexification}

In this section, we allow the angular momentum to be complex. After studying the analytic properties of the Jost functions $F_L(x,\lambda,z)$, $F_R(x,\lambda,z)$ and of the matrix $A_L(\lambda,z)$ in the variable $z \in \C$, we prove the uniqueness results mentioned in the introduction.


\subsection{Analytic properties of the Jost functions and matrix $A_L(\lambda,z)$}

Let us start with the Jost functions $F_L(x,\lambda,z)$ and $F_R(x,\lambda,z)$. They are solutions of the stationary equation
\begin{equation} \label{PartialSE1}
  [ \Ga D_x - z a(x) \Gb ] \psi = \lambda \psi, \quad \forall z \in \C.
\end{equation}
with the asymptotics (\ref{FL}) and (\ref{FR}). By commodity, we introduce the Faddeev matrices $M_L(x,\lambda,z)$ and $M_R(x,\lambda,z)$ defined by
\begin{equation} \label{JostFaddeev}
  M_L(x,\lambda,z) = F_L(x,\lambda,z) e^{-i\Ga \lambda x}, \quad M_R(x,\lambda,z) = F_R(x,\lambda,z) e^{-i\Ga \lambda x},
\end{equation}
which thus satisfy the boundary conditions
\begin{eqnarray}
  M_L(x,\lambda,z) & = & I_2 + o(1), \ x \to +\infty, \\
  M_R(x,\lambda,z) & = & I_2 + o(1), \ x \to -\infty.
\end{eqnarray}
We shall also use the notations in components
\begin{equation} \label{Faddeev}
M_L(x,\lambda,z) = \left[\begin{array}{cc} m_{L1}(,x,\lambda,z)&m_{L2}(x,\lambda,z)\\m_{L3}(x,\lambda,z)&m_{L4}(x,\lambda,z) \end{array} \right], \quad M_R(x,\lambda,z) = \left[\begin{array}{cc} m_{R1}(x,\lambda,z)&m_{R2}(x,\lambda,z)\\m_{R3}(x,\lambda,z)&m_{R4}(x,\lambda,z) \end{array} \right].
\end{equation}
From (\ref{IE-FL}) and (\ref{IE-FR}), the Faddeev matrices satisfy the integral equations
\begin{equation} \label{IE-ML}
  M_L(x,\lambda,n) = I_2 - i z \Ga \int_x^{+\infty} e^{-i\Ga \lambda (y-x)} a(y) \Gb M_L(y,\lambda,z) e^{i\Ga \lambda (y-x)} dy,
\end{equation}
\begin{equation} \label{IE-MR}
  M_R(x,\lambda,n) = I_2 + i z \Ga \int_{-\infty}^x e^{-i\Ga \lambda (y-x)} a(y) \Gb M_R(y,\lambda,n) e^{i\Ga \lambda (y-x)}dy.
\end{equation}
Iterating (\ref{IE-ML}) and (\ref{IE-MR}) once, we get the uncoupled systems
\begin{eqnarray}
  m_{L1}(x,\lambda,z) & = & 1 + z^2 \int_x^{+\infty} \int_y^{+\infty} e^{2 i \lambda (t - y)} a(y) a(t) m_{L1}(t,\lambda,z) dt dy, \label{IE-ML1} \\
  m_{L2}(x,\lambda,z) & = & -iz \int_x^{+\infty} e^{-2 i \lambda (y-x)} a(y) dy + z^2 \int_x^{+\infty} \int_y^{+\infty} e^{-2 i \lambda (y-x)} a(y) a(t) m_{L2}(t,\lambda,z) dt dy, \label{IE-ML2}\\
  m_{L3}(x,\lambda,z) & = & iz \int_x^{+\infty} e^{2 i \lambda (y-x)} a(y) dy + z^2 \int_x^{+\infty} \int_y^{+\infty} e^{2 i \lambda (y-x)} a(y) a(t) m_{L3}(t,\lambda,z) dt dy, \label{IE-ML3}\\
  m_{L4}(x,\lambda,z) & = & 1 + z^2 \int_x^{+\infty} \int_y^{+\infty} e^{-2 i \lambda (t - y)} a(y) a(t) m_{L4}(t,\lambda,z) dt dy, \label{IE-ML4}
\end{eqnarray}
and
\begin{eqnarray}
  m_{R1}(x,\lambda,z) & = & 1 + z^2 \int_{-\infty}^x \int_{-\infty}^y e^{-2 i \lambda (y-t)} a(y) a(t) m_{R1}(t,\lambda,z) dt dy, \label{IE-MR1} \\
  m_{R2}(x,\lambda,z) & = & iz \int_{-\infty}^x e^{2 i \lambda (x-y)} a(y) dy + z^2 \int_{-\infty}^x \int_{-\infty}^y e^{2 i \lambda (x - y)} a(y) a(t) m_{R2}(t,\lambda,z) dt dy, \label{IE-MR2}\\
  m_{R3}(x,\lambda,z) & = & -iz \int_{-\infty}^x e^{-2 i \lambda (x-y)} a(y) dy + z^2 \int_{-\infty}^x \int_{-\infty}^y e^{-2 i \lambda (x - y)} a(y) a(t) m_{R3}(t,\lambda,z) dt dy, \label{IE-MR3}\\
  m_{R4}(x,\lambda,z) & = & 1 + z^2 \int_{-\infty}^x \int_{-\infty}^y e^{2 i \lambda (y - t)} a(y) a(t) m_{R4}(t,\lambda,z) dt dy. \label{IE-MR4}
\end{eqnarray}
Iterating the Volterra equations (\ref{IE-ML1})-(\ref{IE-ML4}), we prove easily the following lemma
\begin{lemma} \label{ML-Analytic}
  (i) Set $m_{L1}^0(x,\lambda) = 1$ and $m_{L1}^n(x,\lambda) = \int_x^{+\infty} \int_y^{+\infty} e^{2 i \lambda (t - y)} a(y) a(t) m_{L1}^{n-1}(t,\lambda) dt dy$. Then we get by induction
  $$
    |m_{L1}^n(x,\lambda)| \leq \frac{1}{2n!} \Big( \int_x^{+\infty} a(y) dy \Big)^{2n}.
  $$
  For $x,\ \lambda \in \R$ fixed, the serie $m_{L1}(x,\lambda,z) = \displaystyle\sum_{n=0}^\infty m_{L1}^n(x,\lambda) z^{2n}$
  converges normally on each compact subset of $\C$ and satisfies the estimate
  $$
    |m_{L1}(x,\lambda,z)| \leq \cosh\Big(|z| \int_x^{+\infty} a(s)ds \Big), \quad \forall x \in \R, \ z \in \C .
  $$
  Moreover, the application $z \longrightarrow m_{L1}(x,\lambda,z)$ is entire and even. \\
  (ii) Set $m_{L2}^0(x,\lambda) = -i \int_x^{+\infty} e^{-2 i \lambda (y-x)} a(y) dy$ and $m_{L2}^n(x,\lambda) =
  \int_x^{+\infty} \int_y^{+\infty} e^{2 i \lambda (x - y)} a(y) a(t) m_{L2}^{n-1}(t,\lambda) dt dy$.
  Then we get by induction
  $$
    |m_{L2}^n(x,\lambda)| \leq \frac{1}{(2n+1)!} \Big( \int_x^{+\infty} a(y) dy \Big)^{2n+1}.
  $$
  For $x,\ \lambda \in \R$ fixed, the serie $m_{L2}(x,\lambda,z) = \displaystyle\sum_{n=0}^\infty m_{L2}^n(x,\lambda)
  z^{2n+1}$ converges normally on each compact subset of $\C$ and satisfies the estimate
  $$
    |m_{L2}(x,\lambda,z)| \leq \sinh\Big(|z| \int_x^{+\infty} a(s)ds \Big), \quad \forall x \in \R, \ z \in \C .
  $$
  Moreover, the application $z \longrightarrow m_{L2}(x,\lambda,z)$ is entire and odd.  \\
  (iii) Set $m_{L3}^0(x,\lambda) = i \int_x^{+\infty} e^{2 i \lambda (y-x)} a(y) dy$ and $m_{L3}^n(x,\lambda) =
  \int_x^{+\infty} \int_y^{+\infty} e^{-2 i \lambda (x - y)} a(y) a(t) m_{L3}^{n-1}(t,\lambda) dt dy$.
  Then we get by induction
  $$
    |m_{L3}^n(x,\lambda)| \leq \frac{1}{(2n+1)!} \Big( \int_x^{+\infty} a(y) dy \Big)^{2n+1}.
  $$
  For $x, \ \lambda \in \R$ fixed, the serie $m_{L3}(x,\lambda,z) = \displaystyle\sum_{n=0}^\infty m_{L3}^n(x,\lambda)
  z^{2n+1}$ converges normally on each compact subset of $\C$ and  satisfies the estimate
  $$
    |m_{L3}(x,\lambda,z)| \leq \sinh\Big(|z| \int_x^{+\infty} a(s)ds \Big), \quad \forall x \in \R, \ z \in \C.
  $$
  Moreover, the application $z \longrightarrow m_{L3}(x,\lambda,z)$ is entire and odd. \\
  (iv) Set $m_{L4}^0(x,\lambda) = 1$ and $m_{L4}^n(x,\lambda) = \int_x^{+\infty} \int_y^{+\infty} e^{-2 i \lambda (t - y)} a(y) a(t) m_{L4}^{n-1}(t,\lambda) dt dy$. Then we get by induction
  $$
    |m_{L4}^n(x,\lambda)| \leq \frac{1}{2n!} \Big( \int_x^{+\infty} a(y) dy \Big)^{2n}.
  $$
  For $x,\ \lambda \in \R$ fixed, the serie $m_{L4}(x,\lambda,z) = \displaystyle\sum_{n=0}^\infty m_{L4}^n(x,\lambda)
  z^{2n}$ converges normally on each compact subset of $\C$ and satisfies the estimate
  $$
    |m_{L4}(x,\lambda,z)| \leq \cosh\Big(|z| \int_x^{+\infty} a(s)ds \Big), \quad \forall x \in \R, \ z \in \C .
  $$
  Moreover, the application $z \longrightarrow m_{L4}(x,\lambda,z)$ is entire and even. \\
  (v) Note at last the obvious symmetries
\begin{eqnarray}
  m_{L1}(x,\lambda,z) = \overline{m_{L4}(x,\lambda,\bar{z})}, \quad \forall z \in \C, \\
  m_{L2}(x,\lambda,z) = \overline{m_{L3}(x,\lambda,\bar{z})}, \quad \forall z \in \C.
\end{eqnarray}
\end{lemma}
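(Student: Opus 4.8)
The plan is to handle each of the four scalar Volterra equations (\ref{IE-ML1})--(\ref{IE-ML4}) by the classical Neumann iteration, the only points demanding care being the placement of the oscillating exponentials (harmless once we pass to moduli) and the precise bookkeeping of the factorials. Throughout I write $\alpha(x) = \int_x^{+\infty} a(s)\,ds$, which is finite because $a \in L^1(\R)$ and satisfies $\alpha'(x) = -a(x)$ and $\alpha(+\infty) = 0$; the two elementary primitives $\int_y^{+\infty} a(t)\,\alpha(t)^{k}\,dt = \alpha(y)^{k+1}/(k+1)$ and $\int_x^{+\infty} a(y)\,\alpha(y)^{k+1}\,dy = \alpha(x)^{k+2}/(k+2)$ will be used repeatedly.

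For part (i) I argue by induction on $n$. The base case $m_{L1}^0 = 1$ meets the bound with $n=0$ since $(2\cdot 0)! = 1$. For the inductive step I insert the bound $|m_{L1}^{n-1}(t,\lambda)| \le \alpha(t)^{2n-2}/(2n-2)!$ into the defining double integral, estimate $|e^{2i\lambda(t-y)}| \le 1$, and carry out the two nested integrals using the primitives above with $k = 2n-2$; this converts the factor $1/(2n-2)!$ into $1/(2n)!$ and yields exactly the asserted estimate. Parts (ii)--(iv) are structurally identical: for $m_{L2}$ (and likewise $m_{L3}$) the starting term $m_{L2}^0 = -i\int_x^{+\infty} e^{-2i\lambda(y-x)}a(y)\,dy$ has modulus at most $\alpha(x)$, so the base case holds with $1/1! = 1$, and the same two-integral computation with $k = 2n-1$ passes from $1/(2n-1)!$ to $1/(2n+1)!$; part (iv) is word for word part (i) with the conjugate exponential.

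Granting these bounds, the partial sums of $\sum_n m_{L1}^n(x,\lambda)\,z^{2n}$ are dominated termwise by $\sum_n (|z|\alpha(x))^{2n}/(2n)! = \cosh(|z|\alpha(x))$, which is bounded uniformly on compact subsets of $\C$; hence the series converges normally on compacta. By construction of the iterates its sum solves the Volterra equation (\ref{IE-ML1}), so it coincides with $m_{L1}(x,\lambda,z)$ by the uniqueness of the solution of (\ref{IE-ML1})--(\ref{IE-ML4}) (equivalently, of the Jost solutions). Since each coefficient $m_{L1}^n(x,\lambda)$ is independent of $z$, the sum is entire in $z$, and it is even because only even powers occur; the $\sinh$ bound and oddness of $m_{L2}, m_{L3}$ and the $\cosh$ bound and evenness of $m_{L4}$ follow in the same way.

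Finally, for the symmetries in (v): since $a$ is real-valued, substituting $z \mapsto \bar z$ in the integral equation (\ref{IE-ML4}) for $m_{L4}$ and taking complex conjugates turns the kernel $e^{-2i\lambda(t-y)}$ into $e^{2i\lambda(t-y)}$ and $\bar z^2$ into $z^2$, i.e.\ it produces precisely the integral equation (\ref{IE-ML1}) satisfied by $m_{L1}$; uniqueness then forces $\overline{m_{L4}(x,\lambda,\bar z)} = m_{L1}(x,\lambda,z)$, and the analogous comparison of (\ref{IE-ML3}) with (\ref{IE-ML2}) gives the second identity. I do not anticipate any real obstacle: this is the textbook iteration/Neumann-series scheme, and essentially the only delicate point is keeping the exponents and factorials aligned in the induction so as to land on $\cosh$ and $\sinh$ rather than a cruder bound.
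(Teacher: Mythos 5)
Your proof is correct and is exactly the argument the paper intends (the paper simply says the lemma follows ``easily'' by iterating the Volterra equations (\ref{IE-ML1})--(\ref{IE-ML4})): the induction with the primitives $\int_y^{+\infty} a\,\alpha^{k} = \alpha(y)^{k+1}/(k+1)$ lands on the $1/(2n)!$ and $1/(2n+1)!$ factors, the dominated series gives normal convergence, entirety and parity, and the conjugation/uniqueness argument yields (v). No gaps.
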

Of course we have similar results for the Faddeev functions $m_{Rj}(x,\lambda,z), \ j=1,..,4$.
\begin{remark}
  Using the usual notations
  $$
F_L(x,\lambda,z) = \left[\begin{array}{cc} f_{L1}(,x,\lambda,z)&f_{L2}(x,\lambda,z)\\f_{L3}(x,\lambda,z)&f_{L4}(x,\lambda,z) \end{array} \right], \quad F_R(x,\lambda,z) = \left[\begin{array}{cc} f_{R1}(x,\lambda,z)&f_{R2}(x,\lambda,z)\\f_{R3}(x,\lambda,z)&f_{R4}(x,\lambda,z) \end{array} \right],
  $$
and (\ref{JostFaddeev}), we have
\begin{eqnarray}
  f_{Lj}(x,\lambda,z) = e^{i\lambda x} m_{Lj}(x,\lambda,z), & f_{Rj}(x,\lambda,z) = e^{i\lambda x} m_{Rj}(x,\lambda,z),
  & j=1,3, \label{fj-mj1} \\
  f_{Lj}(x,\lambda,z) = e^{-i\lambda x} m_{Lj}(x,\lambda,z), & f_{Rj}(x,\lambda,z) = e^{-i\lambda x} m_{Rj}
  (x,\lambda,z), & j=2,4. \label{fj-mj2}
\end{eqnarray}
Hence the components $f_{Lj}(x,\lambda,z)$ and $f_{Rj}(x,\lambda,z)$ share the same analytic properties as the
Faddeev functions $m_{Lj}(x,\lambda,z)$ and $m_{Rj}(x,\lambda,z)$ respectively.
For later use, we mention two additional properties of the $f_{Lj}(x,\lambda,z)$. First, using (\cite{AKM},
 Prop. 2.2) and the analytic continuation, we have
\begin{equation} \label{determinant}
  det(F_L(x,\lambda,z)) = 1, \quad \forall x \in \R, \ z \in \C.
\end{equation}
Second we notice that the $f_{Lj}(x,\lambda,z)$ and $f_{Rj}(x,\lambda,z)$ satisfy second order differential equations with complex potentials. Precisely, the components $f_{Lj}(x,\lambda,z)$ and $f_{Rj}(x,\lambda,z)$, $j=1,2$ satisfy
\begin{equation} \label{2ndOrder1}
  \Big[ -\frac{d^2}{dx^2} + \frac{a'(x)}{a(x)} \frac{d}{dx} + z^2 a^2(x) - i\lambda \frac{a'(x)}{a(x)} \Big] f = \lambda^2 f,
\end{equation}
whereas the components $f_{Lj}(x,\lambda,z)$ and $f_{Rj}(x,\lambda,z)$, $j=3,4$ satisfy
\begin{equation} \label{2ndOrder2}
  \Big[ -\frac{d^2}{dx^2} + \frac{a'(x)}{a(x)} \frac{d}{dx} + z^2 a^2(x) + i\lambda \frac{a'(x)}{a(x)} \Big] f = \lambda^2 f.
\end{equation}
The differential equations (\ref{2ndOrder1}) and (\ref{2ndOrder2}) follow directly from the uncoupled integral equations (\ref{IE-ML1})-(\ref{IE-MR4}).
\end{remark}

Let us now extend the previous result to the matrix $A_L(\lambda,z)$. From (\ref{ALRepresentation}) and (\ref{Faddeev}), we see first that the components of $A_L(\lambda,z)$ can be expressed by means of the Faddeev functions $m_{Lj}(x,\lambda,z)$ as
\begin{eqnarray}
  a_{L1}(\lambda,z) & = & 1 - i z \int_\R a(x) m_{L3}(x,\lambda,z) dx, \label{al1} \\
  a_{L2}(\lambda,z) & = & - i z \int_\R e^{-2i \lambda x} a(x) m_{L4}(x,\lambda,z) dx, \label{al2} \\
  a_{L3}(\lambda,z) & = & i z \int_\R e^{2i \lambda x} a(x) m_{L1}(x,\lambda,z) dx, \label{al3} \\
  a_{L4}(\lambda,z) & = & 1 + i z \int_\R a(x) m_{L2}(x,\lambda,z) dx. \label{al4}
\end{eqnarray}
Hence we get using Lemma \ref{ML-Analytic}
\begin{lemma} \label{AL-Analytic}
  (i) For $\lambda \in \R$ fixed and all $z \in \C$,
  \begin{eqnarray*}
    a_{L1}(\lambda,z) & = & 1 - i \sum_{n=0}^\infty \Big( \int_\R a(x) m_{L3}^n(x,\lambda) dx \Big) z^{2n+2}, \\
    a_{L2}(\lambda,z) & = & - i \sum_{n=0}^\infty \Big( \int_\R e^{-2i \lambda x} a(x) m_{L4}^n(x,\lambda) dx \Big) z^{2n+1}, \\
    a_{L3}(\lambda,z) & = & i \sum_{n=0}^\infty \Big( \int_\R e^{2i \lambda x} a(x) m_{L1}^n(x,\lambda) dx \Big) z^{2n+1}, \\
    a_{L4}(\lambda,z) & = & 1 + i \sum_{n=0}^\infty \Big( \int_\R a(x) m_{L2}^n(x,\lambda) dx \Big) z^{2n+2}.
  \end{eqnarray*}
  (ii) Set $A = \displaystyle\int_\R a(x) dx$. Then
  \begin{eqnarray}
    |a_{L1}(\lambda,z)|, \ |a_{L4}(\lambda,z)| \leq \cosh(A|z|), \quad \forall z \in \C, \label{AL-ExpType1}\\
    |a_{L2}(\lambda,z)|, \ |a_{L3}(\lambda,z)| \leq \sinh(A|z|), \quad \forall z \in \C. \label{AL-ExpType2}
  \end{eqnarray}
  (iii) The functions $a_{L1}(\lambda,z)$ and $a_{L4}(\lambda,z)$ are entire and even in $z$ whereas the functions $a_{L2}(\lambda,z)$ and $a_{L3}(\lambda,z)$ are entire and odd in $z$. Moreover they satisfy the symmetries
  \begin{eqnarray}
    a_{L1}(\lambda,z) & = & \overline{a_{L4}(\lambda,\bar{z})}, \quad \forall z \in \C, \label{ALSym1}\\
    a_{L2}(\lambda,z) & = & \overline{a_{L3}(\lambda,\bar{z})}, \quad \forall z \in \C. \label{ALSym2}
  \end{eqnarray}
  (iv) The following relations hold for all $z \in \C$
  \begin{eqnarray}
    a_{L1}(\lambda,z) \overline{a_{L1}(\lambda,\bar{z})} - a_{L3}(\lambda,z) \overline{a_{L3}(\lambda,\bar{z})}
    & = & 1, \label{SymAL1-AL3}\\
    a_{L4}(\lambda,z) \overline{a_{L4}(\lambda,\bar{z})} - a_{L2}(\lambda,z) \overline{a_{L2}(\lambda,\bar{z})}
    & = & 1. \label{SymAL2-AL4}
  \end{eqnarray}
\end{lemma}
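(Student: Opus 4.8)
The plan is to read everything off the component representations \eqref{al1}--\eqref{al4}, feeding into them the power series expansions and the $\cosh/\sinh$ bounds for the Faddeev functions $m_{Lj}(x,\lambda,z)$ supplied by Lemma \ref{ML-Analytic}. The argument is essentially bookkeeping, so I organise it along the four items of the statement; I expect no conceptual obstacle, only a couple of points where uniformity in $x$ and the choice of accumulation set must be handled with care.

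\emph{Series expansions (i) and analyticity/parity in (iii).} I substitute $m_{L3}(x,\lambda,z)=\sum_{n\ge0}m_{L3}^{n}(x,\lambda)\,z^{2n+1}$ from Lemma \ref{ML-Analytic}(iii) into \eqref{al1}. On the integrand one has $\sum_{n\ge0}|m_{L3}^{n}(x,\lambda)|\,|z|^{2n+1}\le\sinh\!\big(|z|\int_x^{+\infty}a\big)\le\sinh(A|z|)$, so $a(x)\sinh(A|z|)\in L^1(\R)$ is a dominating function and Fubini legitimises termwise integration; multiplying the resulting series by $-iz$ shifts it to even powers of $z$ and yields exactly the formula claimed for $a_{L1}(\lambda,z)$. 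The same computation with \eqref{al2}, \eqref{al3}, \eqref{al4} and parts (iv), (i), (ii) of Lemma \ref{ML-Analytic} gives the three remaining series, the factor $\mp iz$ turning the even series $m_{L4},m_{L1}$ into odd series for $a_{L2},a_{L3}$. Once the bounds of (ii) are proved they show that all four series have infinite radius of convergence, hence the $a_{Lj}(\lambda,z)$ are entire, and the parity in $z$ is immediate from the monomials that occur.

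\emph{The exponential-type bounds (ii).} The only nontrivial computation is the elementary identity, valid for every integer $k\ge0$,
\[
\int_\R a(x)\Big(\int_x^{+\infty}a(y)\,dy\Big)^{k}\,dx=\frac{A^{\,k+1}}{k+1},
\]
obtained with the substitution $u=\int_x^{+\infty}a(y)\,dy$, $du=-a(x)\,dx$, $u$ running over $(0,A)$. Combined with $|m_{L3}^{n}(x,\lambda)|\le\frac1{(2n+1)!}(\int_x^{+\infty}a)^{2n+1}$ it gives $\big|\int_\R a(x)m_{L3}^{n}(x,\lambda)\,dx\big|\le\frac{A^{2n+2}}{(2n+2)!}$, so by (i)
\[
|a_{L1}(\lambda,z)|\le 1+\sum_{n\ge0}\frac{(A|z|)^{2n+2}}{(2n+2)!}=\cosh(A|z|),
\]
and likewise, using $|m_{L4}^{n}|\le\frac1{(2n)!}(\int_x^{+\infty}a)^{2n}$, one finds $|a_{L2}(\lambda,z)|\le\sum_{n\ge0}\frac{(A|z|)^{2n+1}}{(2n+1)!}=\sinh(A|z|)$; the estimates for $a_{L4}$ and $a_{L3}$ are identical by symmetry of the roles. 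This is \eqref{AL-ExpType1}--\eqref{AL-ExpType2}.

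\emph{Symmetries (iii) and relations (iv).} For \eqref{ALSym1} I replace $z$ by $\bar z$ in \eqref{al4} and conjugate: since $a$ is real-valued, $\overline{a_{L4}(\lambda,\bar z)}=1-iz\int_\R a(x)\,\overline{m_{L2}(x,\lambda,\bar z)}\,dx$, and Lemma \ref{ML-Analytic}(v) gives $\overline{m_{L2}(x,\lambda,\bar z)}=m_{L3}(x,\lambda,z)$, so the right-hand side equals $a_{L1}(\lambda,z)$ by \eqref{al1}; \eqref{ALSym2} follows the same way from \eqref{al3}, using $\overline{m_{L1}(x,\lambda,\bar z)}=m_{L4}(x,\lambda,z)$ and $\overline{e^{2i\lambda x}}=e^{-2i\lambda x}$. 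Finally, inserting the identities just obtained — $\overline{a_{L1}(\lambda,\bar z)}=a_{L4}(\lambda,z)$, $\overline{a_{L3}(\lambda,\bar z)}=a_{L2}(\lambda,z)$ and their analogues — into the left-hand sides of \eqref{SymAL1-AL3}--\eqref{SymAL2-AL4}, both relations collapse to the single statement $a_{L1}(\lambda,z)a_{L4}(\lambda,z)-a_{L2}(\lambda,z)a_{L3}(\lambda,z)=1$, i.e. $\det A_L(\lambda,z)=1$. This holds because $F_L(x,\lambda,z)=F_R(x,\lambda,z)\,A_L(\lambda,z)$ — true for real $z$ by \cite{AKM} and, both sides being entire in $z$ for fixed $x$, for all $z\in\C$ by analytic continuation — together with $\det F_L(x,\lambda,z)=\det F_R(x,\lambda,z)=1$, which is \eqref{determinant} and its $F_R$-counterpart. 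The only point requiring care here, and the place I would flag as the genuine subtlety, is that $\det A_L(\lambda,z)=1$ (equivalently \eqref{ALUnitarity}) must be propagated to $\C$ from the \emph{real} axis, where it is the honest unitarity relation of \cite{AKM} applied to the real $L^1$ potential $z\,a$, and not from the integers $n\in\N$, which do not accumulate.
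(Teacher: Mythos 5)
Your proof is correct and follows essentially the same route as the paper: parts (i)--(iii) are the same direct bookkeeping from Lemma \ref{ML-Analytic} and formulae (\ref{al1})--(\ref{al4}) (which the paper merely asserts and you spell out), and part (iv) rests on the same mechanism of analytically continuing an identity valid for real $z$ between entire functions. The only cosmetic difference is that you reduce (\ref{SymAL1-AL3})--(\ref{SymAL2-AL4}) to $\det A_L(\lambda,z)=1$ via the symmetries and (\ref{determinant}), whereas the paper continues the rewritten unitarity relation (\ref{AnalyticExt}) directly; your caution that the continuation must start from $\R$ rather than from the non-accumulating set $\N$ is well placed.
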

\begin{proof}
The properties (i)-(iii) are direct consequences of Lemma \ref{ML-Analytic} and of formulae (\ref{al1})-(\ref{al4}). Let us prove (iv). From (\ref{ALUnitarity}), recall that we have for all $z \in \R$,
$$
  |a_{L1}(\lambda,z)|^2 - |a_{L3}(\lambda,z)|^2 = 1.
$$
This last equality can be rewritten as
\begin{equation} \label{AnalyticExt}
  a_{L1}(\lambda,z) \overline{a_{L1}(\lambda,\bar{z})} - a_{L3}(\lambda,z) \overline{a_{L1}(\lambda,\bar{z})} = 1, \quad \forall z \in \R.
\end{equation}
Since the the functions $a_{Lj}(\lambda,z)$ and $\overline{a_{Lj}(\lambda,\bar{z})}$ are entire in
$z$ by (iii), the equality (\ref{AnalyticExt}) extends analytically to the whole complex plane $\C$.
This proves (\ref{SymAL1-AL3}). Using (\ref{ALSym1}), (\ref{ALSym2}) and (\ref{SymAL1-AL3}),
we get (\ref{SymAL2-AL4}).
\end{proof}

At this stage, we have proved that the components of the matrix $A_L(\lambda,z)$
are entire functions of exponential type in the variable $z$.
Precisely, from (\ref{AL-ExpType1}) and (\ref{AL-ExpType2}), we have
\begin{equation} \label{AL-ExpType}
  |a_{Lj}(\lambda,z)| \leq e^{A |z|}, \quad \forall z \in \C, \ j=1,..,4,
\end{equation}
where  ${\displaystyle{A = \int_\R a(x) dx}}$. We now use the relations (\ref{SymAL1-AL3}), (\ref{SymAL2-AL4}) and the parity properties of the $a_{Lj}(\lambda,z)$ to improve this estimate.
\begin{lemma} \label{MainEsti}
  Let $\lambda \in \R$ be fixed. Then for all $z \in \C$
  \begin{equation} \label{MainEst}
    |a_{Lj}(\lambda,z)| \leq e^{A |Re(z)|}, \quad j=1,..,4.
  \end{equation}

\end{lemma}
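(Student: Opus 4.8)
The plan is to deduce (\ref{MainEst}) from the already established estimates of Lemma \ref{AL-Analytic} by a Phragmén--Lindelöf argument carried out quadrant by quadrant. Recall from (\ref{AL-ExpType}) that each $a_{Lj}(\lambda,\cdot)$ is entire with $|a_{Lj}(\lambda,z)| \le e^{A|z|}$ on all of $\C$, so these functions are of exponential type, hence of order at most $1$. On the real axis the bound (\ref{MainEst}) is already contained in (\ref{AL-ExpType}) (indeed $|a_{Lj}(\lambda,x)| \le e^{A|x|} = e^{A|\mathrm{Re}(z)|}$). Thus it remains to (a) establish the sharp bound $|a_{Lj}(\lambda,iy)| \le 1$ on the imaginary axis, and then (b) interpolate between the two coordinate axes.

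For step (a), fix $y \in \R$ and evaluate the bilinear relations (\ref{SymAL1-AL3}) and (\ref{SymAL2-AL4}) at the point $z = iy$. By Lemma \ref{AL-Analytic}(iii), $a_{L1}(\lambda,\cdot)$ and $a_{L4}(\lambda,\cdot)$ are \emph{even} while $a_{L2}(\lambda,\cdot)$ and $a_{L3}(\lambda,\cdot)$ are \emph{odd}; since $\overline{iy} = -iy$ we get
\[
  a_{L1}(\lambda,iy)\,\overline{a_{L1}(\lambda,\overline{iy})} = |a_{L1}(\lambda,iy)|^2, \qquad a_{L3}(\lambda,iy)\,\overline{a_{L3}(\lambda,\overline{iy})} = -\,|a_{L3}(\lambda,iy)|^2,
\]
and similarly with the indices $4$ and $2$. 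Inserting these into (\ref{SymAL1-AL3}) and (\ref{SymAL2-AL4}) turns the minus signs there into plus signs on the imaginary axis, yielding
\[
  |a_{L1}(\lambda,iy)|^2 + |a_{L3}(\lambda,iy)|^2 = 1, \qquad |a_{L4}(\lambda,iy)|^2 + |a_{L2}(\lambda,iy)|^2 = 1,
\]
so that $|a_{Lj}(\lambda,iy)| \le 1$ for $j = 1,\dots,4$ and every $y \in \R$.

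For step (b), fix $j$ and consider the entire function $\phi(z) = a_{Lj}(\lambda,z)\, e^{-Az}$ on the closed first quadrant $Q = \{z \in \C : \mathrm{Re}(z) \ge 0,\ \mathrm{Im}(z) \ge 0\}$. On the positive real axis $|\phi(x)| = |a_{Lj}(\lambda,x)|\, e^{-Ax} \le e^{Ax} e^{-Ax} = 1$ by (\ref{AL-ExpType}); on the positive imaginary axis $|\phi(iy)| = |a_{Lj}(\lambda,iy)| \le 1$ by step (a); and throughout $Q$ one has $|\phi(z)| = |a_{Lj}(\lambda,z)|\, e^{-A\,\mathrm{Re}(z)} \le e^{A|z|}$, i.e. $\phi$ has growth of order at most $1$. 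As $Q$ is a sector of opening $\pi/2$ and $1 < \pi/(\pi/2) = 2$, the Phragmén--Lindelöf principle applies and gives $|\phi(z)| \le 1$ on $Q$, that is $|a_{Lj}(\lambda,z)| \le e^{A\,\mathrm{Re}(z)} = e^{A|\mathrm{Re}(z)|}$ on $Q$. The same argument on the fourth quadrant $\{\mathrm{Re}(z) \ge 0,\ \mathrm{Im}(z) \le 0\}$ — using that $|a_{Lj}(\lambda,-iy)| = |a_{Lj}(\lambda,iy)| \le 1$ because $a_{Lj}(\lambda,\cdot)$ is even or odd — gives (\ref{MainEst}) on the whole right half-plane $\{\mathrm{Re}(z) \ge 0\}$. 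Finally, the parity of $a_{Lj}(\lambda,\cdot)$ yields $|a_{Lj}(\lambda,z)| = |a_{Lj}(\lambda,-z)|$, which propagates the estimate to $\{\mathrm{Re}(z) \le 0\}$ and completes the proof. (Alternatively, the cases $j=4$ and $j=2$ can be obtained from $j=1$ and $j=3$ via the symmetries (\ref{ALSym1})--(\ref{ALSym2}).)

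The only delicate points are the bookkeeping of parities in step (a) — it is precisely the oddness of $a_{L2}$ and $a_{L3}$ that makes the relations (\ref{SymAL1-AL3})--(\ref{SymAL2-AL4}) force $|a_{Lj}| \le 1$ on the imaginary axis rather than merely $|a_{Lj}| \ge 1$ — and the fact that the Phragmén--Lindelöf step must be run on quadrants: a half-plane (opening $\pi$) would be too wide to control a function of exponential type, whereas for a quarter-plane the critical order $2$ leaves the needed room.
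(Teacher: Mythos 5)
Your proof is correct and follows essentially the same route as the paper: the bilinear relations (\ref{SymAL1-AL3})--(\ref{SymAL2-AL4}) combined with the parity of the $a_{Lj}$ give $|a_{Lj}(\lambda,iy)|\le 1$ on the imaginary axis, and then Phragm\'en--Lindel\"of on each quadrant (where order $1<2$ leaves room) together with parity yields (\ref{MainEst}). Your write-up merely makes explicit the auxiliary function $a_{Lj}(\lambda,z)e^{-Az}$ and the quadrant-by-quadrant bookkeeping that the paper leaves to the reader.
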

\begin{proof}
From (\ref{SymAL1-AL3}), we have
$$
  a_{L1}(\lambda,z) \overline{a_{L1}(\lambda,\bar{z})} -
  a_{L3}(\lambda,z) \overline{a_{L3}(\lambda,\bar{z})} = 1, \quad \forall z \in \C.
$$
In particular, we get for purely imaginary $z=iy$
$$
  a_{L1}(\lambda,iy) \overline{a_{L1}(\lambda,-iy)} - a_{L3}(\lambda,iy)
  \overline{a_{L3}(\lambda,-iy)} = 1, \quad \forall y \in \R.
$$
But the parity of $a_{L1}$ and the imparity of $a_{L3}$ yield
\begin{equation} \label{Est-iR1}
  |a_{L1}(\lambda,iy)|^2 + |a_{L3}(\lambda,iy)|^2 = 1, \quad \forall y \in \R.
\end{equation}
Similarly, using (\ref{SymAL2-AL4}) instead of (\ref{SymAL1-AL3}), we get
\begin{equation} \label{Est-iR2}
  |a_{L2}(\lambda,iy)|^2 + |a_{L4}(\lambda,iy)|^2 = 1, \quad \forall y \in \R.
\end{equation}
Hence we conclude from (\ref{Est-iR1}) and (\ref{Est-iR2}) that
\begin{equation} \label{AL-Est-iR}
  |a_{Lj}(\lambda,iy)| \leq 1, \quad \forall y \in \R, \ j=1,..,4.
\end{equation}
Now the estimate (\ref{MainEst}) is a direct consequence of the Phragm\'en-Lindel\"of
theorem (see \cite{Bo}, Thm 1.4.3.) together with (\ref{AL-ExpType}) and (\ref{AL-Est-iR})
as well as the parity properties of the $a_{Lj}(\lambda,z)$.
\end{proof}

For later use, we mention that we have the corresponding estimates for the Jost functions $f_{Lj}(x,\lambda,z)$ and $f_{Rj}(x,\lambda,z)$. Precisely
\begin{lemma} \label{MainEstiF}
  For all $j=1,..,4$ and for all $x \in \R$,
  \begin{eqnarray}
    |f_{Lj}(x,\lambda,z)| \leq C \, e^{|Re(z)| \int_x^{\infty} a(s) ds}, \\
    |f_{Rj}(x,\lambda,z)| \leq C \, e^{|Re(z)| \int_{-\infty}^x a(s) ds}.
  \end{eqnarray}
\end{lemma}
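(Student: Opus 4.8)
The plan is to reproduce, for the Jost functions, the argument used in the proof of Lemma \ref{MainEsti}: one has a crude exponential bound of order $1$ in $z$ coming from the Volterra iteration, together with a bound of size $1$ on the imaginary axis, and one interpolates between the two by the Phragm\'en--Lindel\"of theorem applied in each quadrant, exploiting the parity of the $f_{Lj}$ and $f_{Rj}$ in $z$. First, combining Lemma \ref{ML-Analytic} (and its analogue for the functions $m_{Rj}$) with the relations (\ref{fj-mj1})--(\ref{fj-mj2}) and the elementary inequalities $\cosh u \le e^{|u|}$, $\sinh u \le e^{|u|}$, one gets
$$
  |f_{Lj}(x,\lambda,z)| \le e^{|z| \int_x^{+\infty} a(s)\,ds}, \qquad |f_{Rj}(x,\lambda,z)| \le e^{|z| \int_{-\infty}^x a(s)\,ds}, \qquad \forall x \in \R,\ z \in \C,
$$
so that on the positive real half-axis $z = t > 0$ these bounds already carry the desired exponent $|Re(z)| \int_x^{+\infty} a$, resp.\ $|Re(z)| \int_{-\infty}^x a$. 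One also records that $f_{L1}, f_{L4}$ and $f_{R1}, f_{R4}$ are even in $z$ while $f_{L2}, f_{L3}$ and $f_{R2}, f_{R3}$ are odd in $z$ (they inherit the parities of the corresponding Faddeev functions), together with the symmetries $f_{L1}(x,\lambda,z) = \overline{f_{L4}(x,\lambda,\bar z)}$ and $f_{L2}(x,\lambda,z) = \overline{f_{L3}(x,\lambda,\bar z)}$ (and their $R$-analogues) inherited from Lemma \ref{ML-Analytic}(v).

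The heart of the matter is the bound $|f_{Lj}(x,\lambda,iy)| \le 1$ and $|f_{Rj}(x,\lambda,iy)| \le 1$ for all $y \in \R$, which I would obtain exactly as the estimate (\ref{AL-Est-iR}). The identity $F_L(x,\lambda,z)^* \Ga F_L(x,\lambda,z) = \Ga$ holds for every \emph{real} $z$ --- the Wronskian-type computation of \cite{AKM}, Prop.\ 2.2 is insensitive to the value of the coupling constant --- and likewise $F_R(x,\lambda,z)^* \Ga F_R(x,\lambda,z) = \Ga$ for real $z$, either by the same computation or by combining $F_R = F_L A_L^{-1}$ with (\ref{AL-Relation}). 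Reading off the $(1,1)$ entry of the first identity gives $|f_{L1}(x,\lambda,z)|^2 - |f_{L3}(x,\lambda,z)|^2 = 1$ for $z \in \R$; rewriting this as $f_{L1}(x,\lambda,z) \overline{f_{L1}(x,\lambda,\bar z)} - f_{L3}(x,\lambda,z) \overline{f_{L3}(x,\lambda,\bar z)} = 1$ and noting that $z \mapsto \overline{f_{Lj}(x,\lambda,\bar z)}$ is entire, the identity propagates to all $z \in \C$. Evaluating it at $z = iy$ and using that $f_{L1}$ is even and $f_{L3}$ is odd, the two contributions add with the same sign, so $|f_{L1}(x,\lambda,iy)|^2 + |f_{L3}(x,\lambda,iy)|^2 = 1$, whence $|f_{L1}(x,\lambda,iy)|, |f_{L3}(x,\lambda,iy)| \le 1$; the $(2,2)$ entry, or the symmetries recorded above, takes care of $f_{L2}, f_{L4}$, and the identical manipulation with $F_R$ yields the bound for the $f_{Rj}$ on the imaginary axis.

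Finally I would conclude by Phragm\'en--Lindel\"of, quadrant by quadrant. Fixing $x$ and putting $b = \int_x^{+\infty} a(s)\,ds$, consider $g(z) = e^{-zb} f_{Lj}(x,\lambda,z)$ on the closed first quadrant: $|g| \le 1$ on the positive real axis by the crude bound and on the positive imaginary axis by the previous step, while $|g(z)| \le e^{2b|z|}$ is of order $1$, hence strictly below the critical order for an opening angle $\pi/2$; thus \cite{Bo}, Thm.\ 1.4.3 gives $|g| \le 1$ there, i.e.\ $|f_{Lj}(x,\lambda,z)| \le e^{b\,Re(z)} = e^{b|Re(z)|}$ in the first quadrant. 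The fourth quadrant is identical, and for $Re(z) < 0$ the parity of $f_{Lj}$ in $z$ reduces to the case already treated via $|f_{Lj}(x,\lambda,z)| = |f_{Lj}(x,\lambda,-z)| \le e^{b|Re(z)|}$. This is the asserted estimate for the $f_{Lj}$ (one may take $C = 1$), and replacing $b$ by $\int_{-\infty}^x a(s)\,ds$ throughout gives the one for the $f_{Rj}$. I expect the only genuinely delicate point to be the justification that $F_L^* \Ga F_L = \Ga$ and $F_R^* \Ga F_R = \Ga$ hold for all real --- not merely integer --- values of the coupling $z$; granting that coupling-independent Wronskian identity, the rest is parallel to the proof of Lemma \ref{MainEsti}.
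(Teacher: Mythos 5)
Your proposal is correct and follows essentially the same route as the paper: the paper's own proof is a two-line sketch that invokes the identities (\ref{RelationFL})--(\ref{RelationFR}) to get $|f_{Lj}(x,\lambda,iy)|,|f_{Rj}(x,\lambda,iy)|\le 1$ and then concludes "by the same argument as above," i.e.\ by the Phragm\'en--Lindel\"of/parity argument of Lemma \ref{MainEsti}, which is exactly what you carry out in detail (including the correct observation that the Wronskian identity $F^*\Gamma^1 F=\Gamma^1$ holds for all real coupling $z$ and then propagates analytically). Your derivation of $F_R^*\Gamma^1 F_R=\Gamma^1$ from $F_R=F_LA_L^{-1}$ and (\ref{AL-Relation}) also quietly repairs what appears to be a typo in (\ref{RelationFR}).
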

\begin{proof}
Using the relations (\ref{RelationFL}) and (\ref{RelationFR}), we can prove as in the preceding lemma that the Jost functions are bounded on the imaginary axis $z \in i\R$, \textit{i.e.}
$$
  |f_{Lj}(x,\lambda,iy)|, \ |f_{Rj}(x,\lambda,iy)| \leq 1, \quad \forall y \in \R.
$$
We then conclude by the same argument as above.
\end{proof}

We end up this section studying the special case $\lambda=0$ in which explicit calculations can be made. We have
\begin{lemma} \label{Lambda=0}
  Set $X = \int_{-\infty}^x a(s) ds$ and $A = \int_\R a(x) dx$. Then
  \begin{equation} \label{MR-Lambda=0}
    M_R(x,0,z) = \left( \begin{array}{cc} \cosh(zX)& -i\sinh(zX)\\ i\sinh(zX)&\cosh(zX) \end{array} \right), \quad \forall z \in \C,
  \end{equation}
  \begin{equation} \label{ML-Lambda=0}
    M_L(x,0,z) = \left( \begin{array}{cc} \cosh(z(A-X))& -i\sinh(z(A-X))\\ i\sinh(z(A-X))&\cosh(z(A-X)) \end{array} \right), \quad \forall z \in \C.
  \end{equation}
  Moreover
  \begin{equation} \label{AL-Lambda=0}
    A_L(0,z) = \left( \begin{array}{cc} \cosh(Az)& -i\sinh(Az)\\ i\sinh(Az)&\cosh(Az) \end{array} \right), \quad \forall z \in \C.
  \end{equation}
\end{lemma}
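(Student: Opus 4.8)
The plan is to exploit the fact that at $\lambda = 0$ the stationary equation (\ref{PartialSE1}) degenerates into a linear system whose coefficient is a scalar function times a fixed matrix, so that it can be integrated in closed form. Setting $\lambda = 0$ in (\ref{PartialSE1}) and multiplying on the left by $\Ga$ (recall $(\Ga)^2 = I_2$) gives, for any (scalar-, vector- or matrix-valued) solution $\psi$,
$$ \psi'(x) = iz\, a(x)\, N\, \psi(x), \qquad N := \Ga\Gb = \left(\begin{array}{cc} 0 & 1 \\ -1 & 0 \end{array}\right), $$
and a direct computation shows $N^2 = -I_2$. Since $M_L(\cdot,0,z) = F_L(\cdot,0,z)$ and $M_R(\cdot,0,z) = F_R(\cdot,0,z)$ are matrix solutions of this equation and the coefficient $iz\,a(x)\,N$ is a scalar multiple of the constant matrix $N$, the unique solution taking the value $I_2$ at a base point $x_0$ is $\exp\!\big(iz N\int_{x_0}^x a(s)\,ds\big)$.

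First I would read off the correct base points from the asymptotics (\ref{FL})--(\ref{FR}): at $\lambda = 0$ one has $M_R(x,0,z) \to I_2$ as $x \to -\infty$ and $M_L(x,0,z) \to I_2$ as $x \to +\infty$. With $X = \int_{-\infty}^x a(s)\,ds$ (so $X \to 0$ at $-\infty$ and $X \to A$ at $+\infty$) this yields
$$ M_R(x,0,z) = \exp\!\big(iz N X\big), \qquad M_L(x,0,z) = \exp\!\big(iz N (X - A)\big). $$
The remaining step is to expand these exponentials: since $N^2 = -I_2$ one has $\exp(s N) = \cos s\, I_2 + \sin s\, N$ for every $s \in \C$, and together with $\cos(i u) = \cosh u$, $\sin(i u) = i\sinh u$ this turns the two displays above into the claimed formulas (\ref{MR-Lambda=0}) and (\ref{ML-Lambda=0}).

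Finally, for $A_L(0,z)$ I would use the identity $F_L = F_R A_L$ -- valid for real $\lambda$ and extended to all $z\in\C$ by analytic continuation -- in the form $A_L(0,z) = M_R(x,0,z)^{-1} M_L(x,0,z)$, which is independent of $x$. Since the two exponents are scalar multiples of the single matrix $N$ they commute, so $A_L(0,z) = \exp(-iz N X)\exp(iz N(X - A)) = \exp(-iz N A)$, and expanding as before gives (\ref{AL-Lambda=0}); equivalently one lets $x \to +\infty$ in $A_L(0,z) = M_R(x,0,z)^{-1}M_L(x,0,z)$, using $M_L \to I_2$ and $M_R \to \exp(iz N A)$. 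There is no genuine obstacle in this lemma; the only points requiring a little care are the sign bookkeeping in $\exp(sN) = \cos s\, I_2 + \sin s\, N$, the choice of base points ($-\infty$ for $M_R$, $+\infty$ for $M_L$) with the corresponding endpoint values $0$ and $A$ of $X$, and the legitimacy of the factorization $F_L = F_R A_L$ for complex $z$. As an independent check one may instead sum the power series of Lemma \ref{ML-Analytic} at $\lambda = 0$: the iterated integrals collapse to powers of $X$ (resp. $A - X$), reproducing the Taylor expansions of $\cosh$ and $\sinh$, after which formulas (\ref{al1})--(\ref{al4}) give (\ref{AL-Lambda=0}) by a one-line integration in the variable $X$.
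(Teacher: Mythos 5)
Your route is genuinely different from the paper's and is sound in its strategy. The paper proves this lemma by summing the Volterra series term by term: it shows by induction that $m_{L3}^n(x,0) = \frac{i}{(2n+1)!}\big(\int_x^{+\infty}a(s)\,ds\big)^{2n+1}$, recognizes the Taylor series of $\sinh$, and then feeds the result into the formulas (\ref{al1})--(\ref{al4}) for the $a_{Lj}$; that is essentially the ``independent check'' you relegate to your last sentence. Your main argument instead notes that at $\lambda=0$ the system collapses to $\psi' = iz\,a(x)N\psi$ with $N=\Ga\Gb$ constant and $N^2=-I_2$, so the Faddeev matrices are literal matrix exponentials $\exp\big(izN\int_{x_0}^x a\big)$ anchored at the appropriate infinity, and $A_L(0,z)=\exp(-izNA)$ follows at once from commutativity. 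This is shorter, explains \emph{why} the answer is a rotation-type matrix in $\cosh$ and $\sinh$, and delivers $M_R$, $M_L$ and $A_L$ in one stroke; the paper's series computation has the mild advantage of rehearsing the technique reused in Section 4, but for this lemma your argument is cleaner.

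One concrete point needs attention, though it is a defect of the printed statement rather than of your method. Expanding $\exp(izNX)=\cosh(zX)\,I_2+i\sinh(zX)\,N$ with $N=\left(\begin{array}{cc}0&1\\-1&0\end{array}\right)$ gives $+i\sinh(zX)$ in the $(1,2)$ entry and $-i\sinh(zX)$ in the $(2,1)$ entry, i.e.\ the off-diagonal signs \emph{opposite} to those displayed in (\ref{MR-Lambda=0}). You assert without writing out the expansion that your exponential ``turns into the claimed formula,'' so as written you have a silent sign slip; but the discrepancy is in the paper, not in your exponential. Indeed the paper's own integral equation (\ref{IE-MR2}) gives $m_{R2}(x,0,z)= iz\int_{-\infty}^x a(y)\,dy+O(z^3)=izX+\dots$, matching $+i\sinh(zX)$, and Section 4 likewise records $g_2^-(X,0,z)= i\sinh(zX)$. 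The formulas (\ref{ML-Lambda=0}) and (\ref{AL-Lambda=0}) --- the ones actually used later, e.g.\ in Remark \ref{S(0)} --- do come out exactly as printed, so nothing downstream is affected; you should simply state the corrected $M_R$ rather than force agreement with the displayed one.
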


\begin{proof}
We just prove a few equalities, for instance $m_{L3}(x,0,z) = -i\sinh(z(A-X))$ and $a_{L1}(0,z) = \cosh(zA)$. From Lemmata \ref{ML-Analytic} and \ref{AL-Analytic} recall that
\begin{eqnarray*}
m_{L3}(x,0,z) & = & \sum_{n=0}^\infty m_{L3}^n(x,0) z^{2n+1}, \\
a_{L1}(0,z) & = & 1 - i \sum_{n=0}^\infty \Big( \int_\R a(x) m_{L3}^n(x,0) dx \Big) z^{2n+2},
\end{eqnarray*}
where
$$
\left\{ \begin{array}{ccc} m_{L3}^0(x,0) & = & i \int_x^{+\infty} a(y) dy, \\ m_{L3}^n(x,0) & = & \int_x^{+\infty} \int_y^{+\infty} a(y) a(t) m_{L3}^{n-1}(t,0) dt dy \end{array} \right.
$$
By induction, we see that
$$
  m_{L3}^n(x,0) = \frac{i}{(2n+1)!} \Big( \int_x^{+\infty} a(s) ds \Big)^{2n+1}, \quad \forall n \in \N.
$$
Therefore, on one hand, we obtain
$$
  m_{L3}(x,0,z) = \sum_{n=0}^\infty m_{L3}(x,0) z^{2n+1} = i \sinh(z(A-X)),
$$
and on the other hand, we get
$$
  a_{L1}(0,z) = 1 + \sum_{n = 0}^\infty \frac{1}{(2n+2)!} A^{2n+2} z^{2n+2} = \cosh(Az).
$$
The other equalities in (\ref{MR-Lambda=0}), (\ref{ML-Lambda=0}) and (\ref{AL-Lambda=0}) are obtained similarly.
\end{proof}

\begin{remark} \label{S(0)}
  It follows from Lemma \ref{Lambda=0} that the partial scattering matrices at the energy $\lambda=0$ are given explicitely by
   $$
      S(0,n) = \left( \begin{array}{cc} \frac{1}{\cosh(nA)}& i\tanh(nA)\\ i\tanh(nA)&\frac{1}{\cosh(nA)} \end{array}
      \right), \quad \forall n \in \N,
   $$
 and thus depend only on the parameter $A$. As a consequence, we conclude that the full scattering matrix $S(0)$
 does not determine uniquely the three parameters of the black hole.
\end{remark}


\subsection{Nevanlinna class and uniqueness results}

In this section, we prove that the coefficients $a_{Lj}(\lambda,z)$ belong to the Nevanlinna class when restricted to the half plane $\Pi^+ = \{z \in\C: \ Re(z) >0\}$. As an application, we prove the uniqueness results mentioned in the introduction.

Recall first that the Nevanlinna class $N(\Pi^+)$ is defined as the set of all analytic functions $f(z)$ on $\Pi^+$ that satisfy the estimate
$$
  \sup_{0<r<1} \int_{-\pi}^{\pi} \ln^+ \Big| f\Big(\frac{1 - re^{i\varphi}}{1+re^{i\varphi}} \Big) \Big| d\varphi < \infty,
$$
where $\ln^+(x) = \left\{ \begin{array}{cc} \ln x, & \ln x \geq 0,\\ 0, & \ln x <0. \end{array} \right.$ We shall use the following result implicit in \cite{Ra}.
\begin{lemma} \label{NevanlinnaCriterion}
  Let $h \in H(\Pi^+)$ be an holomorphic function in $\Pi^+$ satisfying
  \begin{equation} \label{Esti}
   |h(z)| \leq C e^{A \,Re(z)}, \quad \forall z \in \Pi^+,
  \end{equation}
  where $A$ and $C$ are two constants. Then $h \in N(\Pi^+)$.
\end{lemma}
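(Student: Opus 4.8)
The plan is to pull the problem back from $\Pi^+$ to the unit disk via the Cayley transform $\phi(w) = \frac{1-w}{1+w}$, which is exactly the conformal map appearing in the definition of $N(\Pi^+)$ and sends $\mathbb{D} = \{|w| < 1\}$ onto $\Pi^+$ (indeed $Re\,\phi(w) = \frac{1-|w|^2}{|1+w|^2} > 0$ for $|w|<1$). Writing $g = h \circ \phi$, which is holomorphic on $\mathbb{D}$, the quantity controlled in the definition of $N(\Pi^+)$ is precisely $\sup_{0<r<1} \int_{-\pi}^{\pi} \ln^+|g(re^{i\varphi})|\,d\varphi$, so it suffices to bound this integral uniformly in $r \in (0,1)$.

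First I would feed the hypothesis into this. Since $\phi(re^{i\varphi}) \in \Pi^+$, estimate (\ref{Esti}) gives $|g(re^{i\varphi})| \le C\, e^{A\,Re(\phi(re^{i\varphi}))}$; because $\ln^+$ is nondecreasing and $Re(\phi(re^{i\varphi})) \ge 0$, and since $\ln^+(Ce^{As}) \le |\ln C| + As$ for every $s \ge 0$, this yields $\ln^+|g(re^{i\varphi})| \le |\ln C| + A\,Re(\phi(re^{i\varphi}))$. Integrating in $\varphi$, the whole problem reduces to controlling $\int_{-\pi}^{\pi} Re(\phi(re^{i\varphi}))\,d\varphi$.

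The key point is that $\phi$ is holomorphic on all of $\mathbb{D}$ --- its only pole, $w = -1$, lies on the boundary --- so $Re\,\phi$ is harmonic on $\mathbb{D}$. The mean value property then gives $\frac{1}{2\pi}\int_{-\pi}^{\pi} Re(\phi(re^{i\varphi}))\,d\varphi = Re(\phi(0)) = 1$ for every $0 < r < 1$. Hence $\int_{-\pi}^{\pi} \ln^+|g(re^{i\varphi})|\,d\varphi \le 2\pi(|\ln C| + A)$, a bound independent of $r$, and taking the supremum over $r \in (0,1)$ shows $h \in N(\Pi^+)$.

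I do not expect a genuine obstacle here: the argument merely combines the conformal invariance built into the definition of the Nevanlinna class, monotonicity of $\ln^+$, and the mean value property of harmonic functions. The only spot deserving a moment's care is verifying that $Re\,\phi$ is harmonic throughout $\mathbb{D}$ (so that the mean value identity is available for all radii $r<1$), which is immediate once one notes that the singularity of $\phi$ sits on $\partial\mathbb{D}$ rather than inside the disk.
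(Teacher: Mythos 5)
Your proof is correct and follows essentially the same route as the paper: bound $\ln^+\lvert h(\phi(re^{i\varphi}))\rvert$ by a constant plus $A\,Re(\phi(re^{i\varphi}))$ and then integrate in $\varphi$. The only (cosmetic) difference is that the paper computes $Re\big(\tfrac{1-re^{i\varphi}}{1+re^{i\varphi}}\big)=\tfrac{1-r^2}{1+r^2+2r\cos\varphi}$ explicitly and invokes the Poisson-kernel identity $\int_{-\pi}^{\pi}\tfrac{1-r^2}{1+r^2+2r\cos\varphi}\,d\varphi=2\pi$, whereas you obtain the same value $2\pi$ from the mean value property of the harmonic function $Re\,\phi$ at $0$ --- the same fact in different clothing.
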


\begin{proof}
Without loss of generality, we can always assume $C \geq 1$. Thus using (\ref{Esti}) and
$\ln^+(ab) \leq \ln^+(a) + \ln^+(b)$ for $a,b>0$, we have
\begin{eqnarray*}
  \ln^+ \Big| h\Big(\frac{1 - re^{i\varphi}}{1+re^{i\varphi}} \Big) \Big| & \leq & \ln C + \ A \,Re \Big(\frac{1 -r e^{i\varphi}}{1+r e^{i\varphi}} \Big), \\
  & \leq & \ln C + \ A \,\frac{1-r^2}{1+r^2 + 2r \cos \varphi}.
\end{eqnarray*}
Now the well known formula $\displaystyle\int_{-\pi}^{\pi} \frac{1-r^2}{1+r^2 + 2r \cos \varphi} = 2 \pi$ yields
the result.
\end{proof}

As a direct consequence of Lemmata \ref{MainEsti} and \ref{NevanlinnaCriterion}, we thus get
\begin{coro} \label{AL-Nevanlinna}
  For each $\lambda \in \R$ fixed, the applications $z \longrightarrow a_{Lj}(\lambda,z)_{|\Pi^+}$ belong to $N(\Pi^+)$.
\end{coro}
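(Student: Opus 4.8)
The plan is to deduce the corollary directly from the two preceding lemmas, so the argument will be short. First I would recall that, by Lemma \ref{AL-Analytic} (iii), each function $z \mapsto a_{Lj}(\lambda,z)$, $j=1,\dots,4$, is entire in $z$; in particular its restriction to the right half plane $\Pi^+$ is holomorphic there, so it is a legitimate candidate for membership in $N(\Pi^+)$, and the hypothesis $h \in H(\Pi^+)$ of Lemma \ref{NevanlinnaCriterion} is met.

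Next I would invoke Lemma \ref{MainEsti}, which provides the global bound $|a_{Lj}(\lambda,z)| \leq e^{A|Re(z)|}$ for every $z \in \C$, where $A = \int_\R a(x)\,dx$. Restricting to $z \in \Pi^+$, where $Re(z) > 0$ and hence $|Re(z)| = Re(z)$, this becomes $|a_{Lj}(\lambda,z)| \leq e^{A\,Re(z)}$ for all $z \in \Pi^+$. This is precisely the estimate (\ref{Esti}) required in Lemma \ref{NevanlinnaCriterion}, with the constant $C = 1$ and the same $A$.

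Applying Lemma \ref{NevanlinnaCriterion} to each $h = a_{Lj}(\lambda,\cdot)_{|\Pi^+}$ then gives $a_{Lj}(\lambda,\cdot)_{|\Pi^+} \in N(\Pi^+)$ for $j=1,\dots,4$, which is the assertion. There is no real obstacle here: the analytic substance has already been expended upstream — the bound on the imaginary axis combined with the Phragm\'en--Lindel\"of step in the proof of Lemma \ref{MainEsti}, and the elementary Poisson-kernel identity $\int_{-\pi}^{\pi} \frac{1-r^2}{1+r^2+2r\cos\varphi}\,d\varphi = 2\pi$ in the proof of Lemma \ref{NevanlinnaCriterion}. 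The only point worth spelling out is that on $\Pi^+$ the two-sided exponential bound $e^{A|Re(z)|}$ collapses to the one-sided bound $e^{A\,Re(z)}$, which is exactly what makes Lemma \ref{NevanlinnaCriterion} applicable verbatim.
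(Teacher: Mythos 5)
Your proposal is correct and is exactly the paper's argument: the paper presents the corollary as a direct consequence of Lemma \ref{MainEsti} and Lemma \ref{NevanlinnaCriterion}, and your only added remark --- that on $\Pi^+$ the bound $e^{A|Re(z)|}$ reads $e^{A\,Re(z)}$ so the criterion applies verbatim --- is precisely the implicit step. Nothing further is needed.
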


Let us recall now a usefull uniqueness theorem involving functions in the Nevanlinna class $N(\Pi^+)$ as stated
in \cite{Ra}. We also refer to \cite{Ru}, Thm 15.23, for a conformly equivalent version of this theorem involving functions in the Nevanlinna class $N(U)$ where $U$ is the unit disc.
\begin{theorem}[\cite{Ra}, Thm 1.3] \label{UniquenessNevanlinna}
  Let $h \in N(\Pi^+)$ satisfying $h(n) = 0$ for all $n \in \mathcal{L}$ where $\mathcal{L} \subset \N^*$ with $\displaystyle\sum_{n \in \mathcal{L}} \frac{1}{n} = \infty$. Then $h \equiv 0$ in $\Pi^+$.
\end{theorem}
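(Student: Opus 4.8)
The plan is to reduce this uniqueness statement for functions in $N(\Pi^+)$ to the classical uniqueness theorem for the Nevanlinna class $N(U)$ on the unit disc $U$, stated in \cite{Ru}, Thm 15.23, which says that a nonzero function in $N(U)$ cannot have a zero sequence $\{w_k\}$ violating the Blaschke condition $\sum_k (1-|w_k|) = \infty$. First I would fix the conformal map $\phi: U \to \Pi^+$ given by $\phi(w) = \frac{1-w}{1+w}$, which is the very map appearing in the definition of $N(\Pi^+)$; its inverse is $\psi(z) = \frac{1-z}{1+z}$, mapping $\Pi^+$ onto $U$. Given $h \in N(\Pi^+)$, the composition $g = h \circ \phi$ is holomorphic on $U$, and the integral condition defining membership in $N(\Pi^+)$ is \emph{exactly} the statement that $g \in N(U)$. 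So the transfer of the Nevanlinna property across the conformal map is immediate by design of the definition.

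Next I would track what happens to the zero set. If $h(n) = 0$ for all $n \in \mathcal{L}$, then $g(w_n) = 0$ where $w_n = \psi(n) = \frac{1-n}{1+n} = -\frac{n-1}{n+1}$. These are real points in $(-1,0]$ accumulating at $-1 \in \partial U$, with $1 - |w_n| = 1 - \frac{n-1}{n+1} = \frac{2}{n+1}$. The key computation is then that the Blaschke sum for this zero sequence is $\sum_{n \in \mathcal{L}} (1 - |w_n|) = \sum_{n \in \mathcal{L}} \frac{2}{n+1}$, and this diverges precisely when $\sum_{n \in \mathcal{L}} \frac{1}{n} = \infty$ (the two series differ only by comparison constants, since $\frac{2}{n+1} \sim \frac{2}{n}$). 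Hence the hypothesis $\sum_{n\in\mathcal{L}} \frac1n = \infty$ translates exactly into the failure of the Blaschke condition for $\{w_n\}$.

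The classical theorem for $N(U)$ then forces $g \equiv 0$ on $U$, and composing back with $\psi$ gives $h \equiv 0$ on $\Pi^+$. I expect no serious obstacle here: the only points requiring a line of care are (a) checking that the defining integral condition really does make $g \in N(U)$ in the standard sense — i.e.\ that $\sup_{0<r<1}\int_{-\pi}^{\pi} \ln^+|g(re^{i\varphi})|\,d\varphi < \infty$ — which is a verbatim match with the definition given in the excerpt, and (b) the elementary comparison $\sum \frac{1}{n+1} = \infty \iff \sum \frac1n = \infty$. One should also note that $g$ is not identically zero unless $h$ is, so there is no degeneracy to worry about. The mildly delicate point, if any, is simply citing the $N(U)$ uniqueness theorem in the exact form needed (zeros accumulating only at the single boundary point $-1$, real and monotone), but this is a special case of the general Blaschke-condition statement and requires no extra argument. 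Thus the proof is essentially a dictionary translation followed by one invocation of the disc version already available in the literature.
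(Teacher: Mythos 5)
Your proof is correct and is precisely the route the paper itself points to: it states the theorem by citing \cite{Ra} without proof and remarks that \cite{Ru}, Thm 15.23, is the conformally equivalent disc version, which is exactly your reduction. The transfer via $\phi(w)=\frac{1-w}{1+w}$, the computation $1-|w_n|=\frac{2}{n+1}$, and the failure of the Blaschke condition under the M\"untz hypothesis are all as intended, so there is nothing to add.
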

Hence we deduce from Corollary \ref{AL-Nevanlinna} and Theorem \ref{UniquenessNevanlinna}
\begin{coro} \label{AL-Uniqueness}
  Consider two dS-RN black holes and denote by $a_{Lj}$ and $\tilde{a}_{Lj}$ the corresponding scattering data. Let $\mathcal{L} \subset \N^*$ satisfying $\displaystyle\sum_{n \in \mathcal{L}} \frac{1}{n} = \infty$. Assume that one of the following equality hold
  $$
    a_{Lj}(\lambda,n) = \tilde{a}_{Lj}(\lambda,n), \quad \forall n \in \mathcal{L}, \quad j=1,..,4.
  $$
  Then
  $$
    a_{Lj}(\lambda,z) = \tilde{a}_{Lj}(\lambda,z), \quad \forall z \in \C, \quad j=1,..,4.
  $$
\end{coro}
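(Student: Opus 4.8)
The plan is to reduce the statement to the Nevanlinna uniqueness theorem (Theorem \ref{UniquenessNevanlinna}) applied to the difference of the two scattering coefficients, exactly in the spirit of the discussion preceding Corollary \ref{AL-Nevanlinna}. Fix an index $j \in \{1,\dots,4\}$ and set $h_j(z) := a_{Lj}(\lambda,z) - \tilde{a}_{Lj}(\lambda,z)$. By Lemma \ref{AL-Analytic}(iii) both $a_{Lj}(\lambda,\cdot)$ and $\tilde{a}_{Lj}(\lambda,\cdot)$ are entire in $z$, hence $h_j$ is an entire function.

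The first step is to show that $h_j$ restricted to $\Pi^+ = \{z \in \C : Re(z) > 0\}$ lies in the Nevanlinna class $N(\Pi^+)$. Write $A = \int_\R a(x)\,dx$ and $\tilde{A} = \int_\R \tilde{a}(x)\,dx$ for the two relevant constants, and put $A_0 = \max(A,\tilde{A})$. Applying Lemma \ref{MainEsti} to each of the two black holes and using the triangle inequality gives $|h_j(z)| \le e^{A|Re(z)|} + e^{\tilde{A}|Re(z)|} \le 2\, e^{A_0 |Re(z)|}$ for every $z \in \C$; in particular $|h_j(z)| \le 2\, e^{A_0 Re(z)}$ on $\Pi^+$. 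Lemma \ref{NevanlinnaCriterion} (with the two constants taken to be $2$ and $A_0$) then yields $h_j|_{\Pi^+} \in N(\Pi^+)$.

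The second step is simply to feed the vanishing hypothesis into Theorem \ref{UniquenessNevanlinna}. By assumption $h_j(n) = 0$ for every $n \in \mathcal{L}$, and $\mathcal{L} \subset \N^*$ satisfies the M\"untz condition $\sum_{n \in \mathcal{L}} \frac{1}{n} = \infty$; moreover $n \in \Pi^+$ for all $n \in \N^*$. Theorem \ref{UniquenessNevanlinna} therefore forces $h_j \equiv 0$ on $\Pi^+$. Since $h_j$ is entire, the identity principle extends this to $h_j \equiv 0$ on all of $\C$, i.e. $a_{Lj}(\lambda,z) = \tilde{a}_{Lj}(\lambda,z)$ for every $z \in \C$. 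The same argument applies verbatim to each index $j$ for which the hypothesis is assumed.

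I do not expect any real obstacle here: the corollary is a formal consequence of the analytic estimate of Lemma \ref{MainEsti}, the Nevanlinna membership criterion of Lemma \ref{NevanlinnaCriterion}, and the Nevanlinna uniqueness theorem. The only points worth making explicit are that the exponential bound survives passage to the difference of the two coefficients — which is immediate once one uses the common exponent $A_0 = \max(A,\tilde{A})$ — and that vanishing on the right half-plane propagates to the whole complex plane by analytic continuation, which is legitimate precisely because the $a_{Lj}(\lambda,\cdot)$ are \emph{entire} and not merely holomorphic on $\Pi^+$.
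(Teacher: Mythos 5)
Your proof is correct and follows exactly the paper's own route: apply the Nevanlinna uniqueness theorem to the difference $a_{Lj}(\lambda,z)-\tilde{a}_{Lj}(\lambda,z)$, which lies in $N(\Pi^+)$ by Lemma \ref{MainEsti} and Lemma \ref{NevanlinnaCriterion}, and then extend the vanishing to all of $\C$ since the functions are entire. You merely make explicit the common exponent $\max(A,\tilde{A})$ and the analytic-continuation step, which the paper leaves implicit.
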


\begin{proof}
Apply Theorem \ref{UniquenessNevanlinna} to the difference functions $a_{Lj}(\lambda,z) - \tilde{a}_{Lj}(\lambda,z)$
which belongs to $N(\Pi^+)$ thanks to Lemma \ref{NevanlinnaCriterion}.
\end{proof}

In other words, the scattering data $a_{Lj}(\lambda,z)$ are uniquely determined as functions of $z \in \C$
from their values on the integers $n \in \cal{L}$.
We now extend this uniqueness result assuming that only
the reflection coefficients $L(\lambda,n)$ or $R(\lambda,n)$
are known as stated in Thm \ref{Main}. Precisely, we prove
\begin{prop} \label{Uniqueness}
  Consider two dS-RN black holes and denote by $Z$ and $\tilde{Z}$ all the corresponding scattering data. Let $\mathcal{L} \subset \N^*$ satisfying $\displaystyle\sum_{n \in \mathcal{L}} \frac{1}{n} = \infty$. Assume that there exists a constant $c$ such that one of the following equality hold
  \begin{eqnarray}
    L(\lambda,n) & = & e^{-2i \lambda c} \tilde{L}(\lambda,n), \quad \forall n \in \mathcal{L}, \label{L} \\
    R(\lambda,n) & = & e^{2i \lambda c} \tilde{R}(\lambda,n), \quad \forall n \in \mathcal{L}, \label{R}
  \end{eqnarray}
  Then
  \begin{equation} \label{ALUniqueness}
    \left[ \begin{array}{cc} a_{L1}(\lambda,z)& a_{L2}(\lambda,z)\\ a_{L3}(\lambda,z)&a_{L4}(\lambda,z) \end{array}
    \right] = \left[ \begin{array}{cc} \tilde{a}_{L1}(\lambda,z)& e^{2i \lambda c} \tilde{a}_{L2}(\lambda,z)\\
    e^{-2i \lambda c} \tilde{a}_{L3}(\lambda,z)& \tilde{a}_{L4}(\lambda,z) \end{array} \right], \quad \forall z \in \C.
  \end{equation}
\end{prop}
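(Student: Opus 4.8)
The plan is to convert the hypothesis on the reflection coefficient into an identity between \emph{entire} functions valid on all of $\C$, and then to use the internal structure of the matrix $A_L(\lambda,z)$ (its parity, its conjugation symmetries, and the quadratic relations of Lemma \ref{AL-Analytic}) to recover each coefficient. I treat case (\ref{L}) in detail; case (\ref{R}) will be completely parallel, with the pair $(a_{L1},a_{L3})$ replaced by $(a_{L1},a_{L2})$ and (\ref{SymAL2-AL4}) used in place of (\ref{SymAL1-AL3}). First, using $L=a_{L3}/a_{L1}$, I would rewrite (\ref{L}) as $a_{L3}(\lambda,n)\tilde a_{L1}(\lambda,n)=e^{-2i\lambda c}\tilde a_{L3}(\lambda,n)a_{L1}(\lambda,n)$ for all $n\in\mathcal{L}$. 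Each of the four factors is entire in $z$ and bounded by $Ce^{A|Re(z)|}$ by Lemma \ref{MainEsti}; since $N(\Pi^+)$ is stable under products and differences (because $\ln^+|fg|\le\ln^+|f|+\ln^+|g|$ and $\ln^+|f-g|\le\ln 2+\ln^+|f|+\ln^+|g|$), Lemma \ref{NevanlinnaCriterion} puts the difference of the two sides in $N(\Pi^+)$; it vanishes on $\mathcal{L}$, so Theorem \ref{UniquenessNevanlinna} forces it to vanish on $\Pi^+$, hence on $\C$ by analyticity. This yields $a_{L3}(\lambda,z)\tilde a_{L1}(\lambda,z)=e^{-2i\lambda c}\tilde a_{L3}(\lambda,z)a_{L1}(\lambda,z)$ for all $z\in\C$ (and, in case (\ref{R}), $a_{L2}\tilde a_{L1}=e^{2i\lambda c}\tilde a_{L2}a_{L1}$).

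Next I would isolate the elementary but crucial fact that $a_{L1}(\lambda,\cdot)$ and $a_{L3}(\lambda,\cdot)$ have no common zero, and likewise the tilded pair: at a putative common zero $z_0$, evaluating (\ref{SymAL1-AL3}) gives $0=1$. (In case (\ref{R}), the same conclusion for $a_{L1}$ and $a_{L2}$ follows because, by (\ref{ALSym1})--(\ref{ALSym2}), a common zero $z_0$ of $a_{L1},a_{L2}$ produces a common zero $\bar z_0$ of $a_{L3},a_{L4}$, which contradicts (\ref{SymAL2-AL4}).) I would then study the a priori meromorphic quotient $h(z):=a_{L1}(\lambda,z)/\tilde a_{L1}(\lambda,z)$, which by the entire identity above equals also $e^{2i\lambda c}a_{L3}(\lambda,z)/\tilde a_{L3}(\lambda,z)$. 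Comparing orders of vanishing in the two representations: a pole of $h$ would be a point where $\tilde a_{L1}$ and $\tilde a_{L3}$ both vanish, and a zero of $h$ a point where $a_{L1}$ and $a_{L3}$ both vanish; both are impossible, so $h$ is entire and zero-free. Note also $h$ is even (quotient of even functions, Lemma \ref{AL-Analytic}(iii)) and $h(0)=a_{L1}(\lambda,0)/\tilde a_{L1}(\lambda,0)=1$.

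The hard part is to conclude from this that $h\equiv1$. Since $h$ is entire, zero-free, even, with $h(0)=1$, Hadamard's theorem reduces the problem to showing $h$ has finite order: a zero-free entire function of finite order is $e^{p(z)}$ with $p$ a polynomial, and if $\deg p\ge 2$ then $a_{L1}=h\tilde a_{L1}$ would have order $\ge2$, contradicting the exponential-type bound (\ref{AL-ExpType}); hence $\deg p\le1$, and evenness plus $h(0)=1$ give $h\equiv1$. To get finite order I would apply Hadamard's factorization to $\tilde a_{L1}(\lambda,\cdot)$ — even, entire of exponential type, equal to $1$ at the origin, hence a genus-$\le1$ canonical product in $z^2$ — and invoke the classical minimum-modulus estimates for such products: $|\tilde a_{L1}(\lambda,z)|\ge e^{-|z|^{1+\varepsilon}}$ for $|z|$ outside a union of disks of arbitrarily small total radius. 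Combined with $|a_{L1}(\lambda,z)|\le e^{A|z|}$ and the maximum modulus principle, this forces order$(h)\le1$. I expect this minimum-modulus step, together with the careful bookkeeping that $h$ is genuinely entire and zero-free (orders of vanishing of the $a_{Lj}$), to be the only delicate points; everything else is formal.

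Finally, once $a_{L1}(\lambda,z)=\tilde a_{L1}(\lambda,z)$ for all $z$, I would divide the entire identity by $\tilde a_{L1}(\lambda,\cdot)$ (which is $\not\equiv0$ since it equals $1$ at $0$) to get $a_{L3}(\lambda,z)=e^{-2i\lambda c}\tilde a_{L3}(\lambda,z)$, and then propagate this through the symmetries (\ref{ALSym1})--(\ref{ALSym2}): $a_{L4}(\lambda,z)=\overline{a_{L1}(\lambda,\bar z)}=\overline{\tilde a_{L1}(\lambda,\bar z)}=\tilde a_{L4}(\lambda,z)$ and $a_{L2}(\lambda,z)=\overline{a_{L3}(\lambda,\bar z)}=e^{2i\lambda c}\overline{\tilde a_{L3}(\lambda,\bar z)}=e^{2i\lambda c}\tilde a_{L2}(\lambda,z)$, which is exactly (\ref{ALUniqueness}). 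In case (\ref{R}) the identical argument with $h(z)=a_{L1}(\lambda,z)/\tilde a_{L1}(\lambda,z)=e^{-2i\lambda c}a_{L2}(\lambda,z)/\tilde a_{L2}(\lambda,z)$ gives $a_{L1}=\tilde a_{L1}$, $a_{L2}=e^{2i\lambda c}\tilde a_{L2}$, and then $a_{L4}=\tilde a_{L4}$, $a_{L3}=e^{-2i\lambda c}\tilde a_{L3}$ by the same symmetries.
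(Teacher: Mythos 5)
Your proof is correct, and its skeleton coincides with the paper's: the same Nevanlinna-class extension of the cross-product identity $a_{L3}(\lambda,z)\tilde a_{L1}(\lambda,z)=e^{-2i\lambda c}a_{L1}(\lambda,z)\tilde a_{L3}(\lambda,z)$ from $\mathcal{L}$ to all of $\C$, the same use of (\ref{SymAL1-AL3}) to rule out common zeros of $a_{L1}$ and $a_{L3}$ (and of the tilded pair), and the same Hadamard factorization of the even, order-one function $a_{L1}(\lambda,z)=g(z^2)$. The only divergence is the final step. The paper notes that since $g$ has order $\tfrac12<1$, Hadamard gives $a_{L1}(\lambda,z)=\prod_n\bigl(1-z^2/z_n^2\bigr)$ with \emph{no} exponential factor (the constant is $g(0)=1$), and likewise for $\tilde a_{L1}$; your observation that $h=a_{L1}/\tilde a_{L1}$ is entire and zero-free is precisely the statement that the two zero sets coincide with multiplicity, so the two canonical products are the same expression and $h\equiv1$ follows immediately. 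Your detour through minimum-modulus estimates to bound the order of $h$ and exclude a factor $e^{p(z)}$ is therefore valid but redundant: the genus-zero factorization you already invoke cannot hide such a factor. Everything else --- the explicit verification for case (\ref{R}) that $a_{L1}$ and $a_{L2}$ have no common zero via (\ref{ALSym1})--(\ref{ALSym2}) and (\ref{SymAL2-AL4}) (a case the paper leaves to the reader as ``analogous''), and the propagation to $a_{L2},a_{L4}$ by the conjugation symmetries --- matches the paper's argument.
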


\begin{proof}
Assume (\ref{L}). Then
\begin{equation} \label{Equa1}
  a_{L3}(\lambda,n) \tilde{a}_{L1}(\lambda,n) = e^{-2i \lambda c} a_{L1}(\lambda,n) \tilde{a}_{L3}(\lambda,n),
  \quad \forall n \in \mathcal{L}.
\end{equation}
By Lemmata \ref{MainEsti} and \ref{NevanlinnaCriterion},
 the product functions $a_{L3}(\lambda,z) \tilde{a}_{L1}(\lambda,z)$ and $a_{L1}(\lambda,z) \tilde{a}_{L3}(\lambda,z)$
 belong to the Nevanlinna class $N$, the equality (\ref{Equa1}) extends analytically to the whole complex plane $\C$. Hence
\begin{equation} \label{Equa2}
  a_{L3}(\lambda,z) \tilde{a}_{L1}(\lambda,z) = e^{-2i \lambda c} a_{L1}(\lambda,z) \tilde{a}_{L3}(\lambda,z), \quad \forall z \in \C.
\end{equation}
But recall from Lemma \ref{AL-Analytic}, (iv), that
$$ 
  a_{L1}(\lambda,z) \overline{a_{L1}(\lambda,\bar{z})} = 1 - a_{L3}(\lambda,z) \overline{a_{L3}(\lambda,\bar{z})}, \quad \forall z \in \C,
$$
from which we deduce that $a_{L1}(\lambda,z)$ and $a_{L3}(\lambda,z)$ have no common zeros.
Hence we infer from (\ref{Equa2}) that the zeros of $a_{Lj}(\lambda,z)$ and $\tilde{a}_{Lj}(\lambda,z)$
for $j=1,3$ coincide with the same multiplicity.

Now recall that the function $a_{L1}(\lambda,z)$ is even. Thus we can write $a_{L1}(\lambda,z) = g(z^2)$ where $g$
is an entire function. Since $a_{L1}(\lambda,z)$ is of order $1$ (\textit{i.e.} $|a_{L1}(\lambda,z) \leq e^{A |z|}$),
we deduce that $g$ is of order $\half$. Hence the Hadamard's factorization theorem, (see \cite{Bo}, Th 2.7.1), yields
$$
  g(\zeta) = G \, \prod_{n=1}^\infty \Big( 1 - \frac{\zeta}{\zeta_n} \Big),
$$
where the $\zeta_n \ne0$ are the zeros of $g$ counted according to
multiplicity, $G = g(0)=a_{L1}(\lambda,0) = 1$ by Lemma \ref{AL-Analytic}, (i). But note that $\zeta_n =
z_n^2$ where the $z_n$ are the zeros of $a_{L1}(\lambda,z)$ by
definition of $g$. Hence we
obtain
$$ 
  a_{L1}(\lambda,z) = \prod_{n=1}^\infty \Big( 1 - \frac{z^2}{z_n^2} \Big).
$$
Similarly, we have
$$
  \tilde{a}_{L1}(\lambda,z) = \prod_{n=1}^\infty \Big( 1 - \frac{z^2}{\tilde{z}_n^2} \Big),
$$
where the $\tilde{z}_n$ are the zeros of $\tilde{a}_{L1}(\lambda,z)$. Since $z_n = \tilde{z}_n$ by the previous discussion, we conclude that
\begin{equation} \label{Equa5}
  a_{L1}(\lambda,z) = \tilde{a}_{L1}(\lambda,z), \quad \forall z \in \C.
\end{equation}
From (\ref{Equa5}), (\ref{Equa2}), (\ref{ALSym1}) and (\ref{ALSym2}), we thus deduce (\ref{ALUniqueness}).
The proof starting from (\ref{R}) is analogous and so we omit it.
\end{proof}

\par
In the same way, if we assume that only the transmission coefficient $T(\lambda,n)$ are known,
we have the following result :

\begin{prop} \label{Uniqueness2}
  Consider two dS-RN black holes and denote by $Z$ and $\tilde{Z}$ all the corresponding scattering data.
  Let $\mathcal{L} \subset \N^*$ satisfying $\displaystyle\sum_{n \in \mathcal{L}} \frac{1}{n} = \infty$.
  Assume that $T(\lambda,n) = \tilde{T}(\lambda,n)$ for all $n \in \mathcal{L}$. Then, there exists a constant $c\in \R$ such  that
  \begin{equation} \label{ALUniqueness2}
    \left[ \begin{array}{cc} a_{L1}(\lambda,z)& a_{L2}(\lambda,z)\\ a_{L3}(\lambda,z)&a_{L4}(\lambda,z) \end{array}
    \right] = \left[ \begin{array}{cc} \tilde{a}_{L1}(\lambda,z)& e^{2i \lambda c} \tilde{a}_{L2}(\lambda,z)\\
    e^{-2i \lambda c} \tilde{a}_{L3}(\lambda,z)& \tilde{a}_{L4}(\lambda,z) \end{array} \right], \quad \forall z \in \C.
  \end{equation}
\end{prop}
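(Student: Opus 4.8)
The plan is to follow the strategy of the proof of Proposition \ref{Uniqueness}, but with one genuine extra difficulty: the hypothesis now controls only $a_{L1}$ (since $T=a_{L1}^{-1}$), so the reflection coefficient $a_{L3}$ can only be recovered \emph{up to a constant phase}, and producing that phase is the heart of the matter. \emph{First I would reduce to $a_{L1}=\tilde a_{L1}$ on all of $\C$.} From $T(\lambda,n)=\tilde T(\lambda,n)$ on $\mathcal{L}$ we get $a_{L1}(\lambda,n)=\tilde a_{L1}(\lambda,n)$ for $n\in\mathcal{L}$; since $a_{L1}(\lambda,z)-\tilde a_{L1}(\lambda,z)\in N(\Pi^+)$ by Lemmata \ref{MainEsti} and \ref{NevanlinnaCriterion}, Theorem \ref{UniquenessNevanlinna} (exactly as in Corollary \ref{AL-Uniqueness}) forces $a_{L1}(\lambda,z)=\tilde a_{L1}(\lambda,z)$ for all $z\in\C$, whence also $a_{L4}=\tilde a_{L4}$ by the symmetry (\ref{ALSym1}). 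Feeding $a_{L1}=\tilde a_{L1}$ into relation (\ref{SymAL1-AL3}) written for the two black holes yields
\[
  a_{L3}(\lambda,z)\,\overline{a_{L3}(\lambda,\bar z)} \;=\; \tilde a_{L3}(\lambda,z)\,\overline{\tilde a_{L3}(\lambda,\bar z)},\qquad \forall z\in\C .
\]
The remaining task is to upgrade this to $\tilde a_{L3}(\lambda,z)=e^{-2i\lambda c}a_{L3}(\lambda,z)$ for some $c\in\R$; granting that, relation (\ref{ALSym2}) gives $\tilde a_{L2}(\lambda,z)=\overline{\tilde a_{L3}(\lambda,\bar z)}=e^{2i\lambda c}\,\overline{a_{L3}(\lambda,\bar z)}=e^{2i\lambda c}a_{L2}(\lambda,z)$ (using $\lambda,c$ real), which together with $a_{L1}=\tilde a_{L1}$, $a_{L4}=\tilde a_{L4}$ is precisely (\ref{ALUniqueness2}) after relabelling $c$ by $-c$.

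\emph{Second, the phase.} Introduce $\phi(z):=\tilde a_{L3}(\lambda,z)/a_{L3}(\lambda,z)$ and write $f^{\sharp}(z):=\overline{f(\bar z)}$; the displayed identity reads $\phi\,\phi^{\sharp}\equiv 1$. Suppose, for the moment, that $\phi$ is entire. Then $\phi$ is zero‑free (its inverse $\phi^{\sharp}$ being entire as well), so $\phi=e^{g}$ with $g$ entire; comparing Taylor coefficients in $g(z)+\overline{g(\bar z)}\in 2\pi i\Z$ forces every coefficient of $g$ to be purely imaginary, i.e.\ $\phi=e^{iP}$ with $P$ entire and real‑coefficiented (and in fact $P$ even, $\phi$ being a quotient of two odd functions by Lemma \ref{ML-Analytic} applied through (\ref{al3})). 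On the other hand $\phi|_{\Pi^+}$ is an analytic quotient of two functions of the Nevanlinna class $N(\Pi^+)$ (Corollary \ref{AL-Nevanlinna}), hence it lies itself in $N(\Pi^+)$ (cf.\ \cite{Ru}); but a zero‑free function $e^{iP}\in N(\Pi^+)$ with $P$ entire and real‑coefficiented can only have $P$ constant (if $\deg P\ge 2$ the growth of $e^{iP}$ inside $\Pi^+$ is incompatible with a bounded Nevanlinna characteristic, and if $\deg P=1$ the linear term of $iP$ is purely imaginary, which is again incompatible unless it vanishes). Thus $\phi$ is a unimodular constant, and since $\lambda\neq 0$ we may write it as $e^{-2i\lambda c}$ with $c\in\R$, which is the relation we need.

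\emph{Third, and the main obstacle: $\phi$ is entire}, equivalently $a_{L3}(\lambda,\cdot)$ and $\tilde a_{L3}(\lambda,\cdot)$ have the same zeros with the same multiplicities. The identity $a_{L3}a_{L3}^{\sharp}=\tilde a_{L3}\tilde a_{L3}^{\sharp}$, together with the fact that the zero set of $f^{\sharp}$ is the complex conjugate of the zero set of $f$, gives only the weaker statement $Z\sqcup\overline{Z}=\tilde Z\sqcup\overline{\tilde Z}$ for the zero multisets $Z,\tilde Z$. Since $a_{L3}$ (hence $\tilde a_{L3}$) is odd we have $Z=-Z$, so the multiplicities of $a_{L3}$ and $\tilde a_{L3}$ already agree at every point of $\R\cup i\R$; the only ambiguity left is that $a_{L3}$ might vanish at some $w\notin\R\cup i\R$ while $\tilde a_{L3}$ vanishes at $\bar w$ instead. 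To remove it, it suffices to prove that $a_{L3}(\lambda,z)\neq 0$ whenever $Re(z)>0$: combined with oddness, this confines $Z$ and $\tilde Z$ to the imaginary axis, so that $\overline{Z}=Z$, $\overline{\tilde Z}=\tilde Z$, and the relation collapses to $Z=\tilde Z$. Establishing the non‑vanishing of $a_{L3}$ in the open right half‑plane is the hard part; I would attack it through the integral representation $a_{L3}(\lambda,z)=iz\int_{\R}e^{2i\lambda x}a(x)\,m_{L1}(x,\lambda,z)\,dx$ from (\ref{al3}) together with the analytic/asymptotic structure of the Jost solutions (Lemma \ref{ML-Analytic} and its right counterpart) and the absence of point spectrum of $\Ga D_x-z\,a(x)\Gb$, in the spirit of \cite{AKM}.

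\emph{Conclusion.} Once the non‑vanishing claim is in hand, the third step gives that $\phi$ is entire, the second step gives $\tilde a_{L3}(\lambda,z)=e^{-2i\lambda c}a_{L3}(\lambda,z)$ for some real $c$, and the first step then assembles (\ref{ALUniqueness2}) via (\ref{ALSym2}). If instead the right half‑plane non‑vanishing of $a_{L3}$ were not available in this form, an alternative would be to argue that $\phi$, being of bounded type in both $\Pi^+$ and $\Pi^{-}$ (by evenness) with unimodular boundary values on $i\R$, is an inner function in each half‑plane and hence satisfies $|\phi|\le 1$ off $i\R$; combined with $|\phi|=1$ on $i\R$ the maximum principle would again force $\phi$ to be a unimodular constant — but this route still requires knowing that $\phi$ has no poles, i.e.\ the same zero‑matching statement, so the obstacle is genuine and unavoidable.
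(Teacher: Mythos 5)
Your first two steps are sound and run parallel to the paper's own argument: the Nevanlinna-class uniqueness (Corollary \ref{AL-Uniqueness}) gives $a_{L1}=\tilde a_{L1}$ and hence $a_{L4}=\tilde a_{L4}$ on all of $\C$, and once one knows that $\tilde a_{L3}/a_{L3}$ is an entire, zero-free function of exponential type with $\phi(z)\overline{\phi(\bar z)}\equiv 1$, your growth/parity argument forcing it to be a unimodular constant is correct (the paper reaches the same point by Hadamard-factorizing $f(z)=a_{L3}(\lambda,z)/z$ and comparing the leading constants $G,\tilde G$, which is the same computation in different clothing).

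The genuine gap is exactly the one you flag yourself and then do not close: the step from $a_{L3}(\lambda,z)\overline{a_{L3}(\lambda,\bar z)}=\tilde a_{L3}(\lambda,z)\overline{\tilde a_{L3}(\lambda,\bar z)}$ to "$a_{L3}$ and $\tilde a_{L3}$ have the same zeros with multiplicity" is never established. As you correctly observe, the identity only yields $Z\sqcup\overline{Z}=\tilde Z\sqcup\overline{\tilde Z}$, and oddness only pins down the multiplicities on $\R\cup i\R$. Your proposed repair --- showing $a_{L3}(\lambda,z)\neq 0$ for $Re(z)>0$ --- is left entirely as a programme ("I would attack it through the integral representation\dots in the spirit of [AKM]"), with no argument given; moreover it is not evidently true for $\lambda\neq 0$: the explicit computation of Lemma \ref{Lambda=0} puts the zeros on $i\R$ only at $\lambda=0$, and nothing in Lemma \ref{ML-Analytic} or (\ref{al3}) obviously prevents them from leaving $\R\cup i\R$ when $\lambda\neq 0$. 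So as written the proposal is not a proof. You should be aware that the paper's own proof is terse at precisely this point: after writing the two Hadamard products $|G|^2z^{4m}\prod(1-z^2/z_n^2)(1-z^2/\bar z_n^2)=|\tilde G|^2z^{4\tilde m}\prod(1-z^2/\tilde z_n^2)(1-z^2/\overline{\tilde z_n}^2)$ it simply asserts $z_n=\tilde z_n$, which is the same conjugation ambiguity you identified. Your diagnosis of where the difficulty sits is therefore accurate and valuable, but a complete proof requires either the non-vanishing claim on $\Pi^+$ (with proof), or some other input ruling out the possibility $\tilde a_{L3}(\lambda,z)=\gamma\,\overline{a_{L3}(\lambda,\bar z)}=\gamma\,a_{L2}(\lambda,z)$, $|\gamma|=1$, which your argument (and the paper's, as written) does not exclude.
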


\begin{proof}
By Lemma \ref{AL-Analytic}, (iii) and Corollary \ref{AL-Uniqueness}, we have
\begin{equation} \label{Equa6}
  a_{Lj}(\lambda,z) = \tilde{a}_{Lj}(\lambda,z), \quad \forall z \in \C \ j=1,4.
\end{equation}
Now, we use the same strategy as in Prop. \ref{Uniqueness} and we set
${\displaystyle{f(z)= \frac{a_{L3}(\lambda,z)}{z}}}$. Using that $a_{L3}(\lambda,0)=0$, we see that
$f(z)$ is an even entire function of order $1$. Thus, we can write $f(z)=g(z^2)$ where $g$ is an entire
function of order $\frac{1}{2}$. Using the Hadamard's factorization theorem, we obtain as previously
$$
f(z) = G \, z^{2m}\ \prod_{n=1}^\infty \Big( 1 - \frac{z^2}{z_n^2} \Big),
$$
where $2m$ is the multiplicity of $0$, $G$ is a constant and the $z_n$ are the zeros of $f$ counted according to
multiplicity. From (\ref{Equa6}) and Lemma \ref{AL-Analytic}, (iv), we have
$$
 f(z)\overline{f(\bar{z})} = \tilde{f}(z)\overline{\tilde{f}(\bar{z})},
$$
where ${\displaystyle{\tilde{f}(z)= \frac{\tilde{a_{L3}}(\lambda,z)}{z}}}$, or equivalently
$$
  \mid G\mid^2 \ z^{4m} \ \prod_{n=1}^\infty \Big( 1 -\frac{z^2}{z_n^2} \Big)
\Big( 1 -\frac{z^2}{\bar{z_n}^2} \Big)  = \ \mid \tilde{G}\mid^2 \ z^{4\tilde{m}} \ \prod_{n=1}^\infty \Big( 1 -\frac{z^2}{\tilde{z_n}^2} \Big)
\Big( 1 -\frac{z^2}{\bar{\tilde{z_n}}^2} \Big).
$$
It follows that
$$
\mid G \mid =\mid \tilde{G}\mid , \ m=\tilde{m}, z_n=\tilde{z_n}
$$
Recalling that $a_{L3}(\lambda,z)= z f(z)$ and $\tilde{a_{L3}}(\lambda,z)= z \tilde{f}(z)$, we see that there exists a constant $c \in \R$ such that
$$
  a_{L3}(\lambda,z) = e^{-2i \lambda c} \tilde{a}_{L3}(\lambda,z), \quad z \in \C.
$$
Finally, using Lemma \ref{AL-Analytic}, (iii) we obtain the result.
\end{proof}

We finish this section with a first application of the previous uniqueness results to the study of an inverse scattering problem in which the scattering matrix $S(\lambda,n)$ is supposed to be known on an \emph{interval} of energy only (and not simply at a fixed energy $\lambda$). We shall obtain in an elementary way one of the results in \cite{DN2}. Precisely we prove
\begin{coro}
  Assume that one of the scattering coefficients $T(\lambda,n)$, $L(\lambda,n)$ or $R(\lambda,n)$ be known (in the sense of Theorem \ref{Main}) for all $n \in \mathcal{L}$ with $\displaystyle\sum_{n \in \mathcal{L}} \frac{1}{n} = \infty$ and
  on a (possibly small) interval of energy $\lambda \in I$. Then the parameters $M, Q^2, \Lambda$ of the black hole are uniquely determined.
\end{coro}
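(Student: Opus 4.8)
The plan is to bootstrap this \emph{interval of energy} statement from the \emph{fixed energy} uniqueness results of the present section, using as the extra ingredient the analyticity of the scattering data in the energy variable $\lambda$. First I would fix $\lambda\in I$ and run the appropriate Proposition of this section. If $T(\lambda,n)=\tilde{T}(\lambda,n)$ for all $n\in\mathcal{L}$, then (\ref{SR-SM2}) and Corollary~\ref{AL-Uniqueness} applied to $a_{L1}$, together with the symmetry (\ref{ALSym1}), give $a_{L1}(\lambda,z)=\tilde{a}_{L1}(\lambda,z)$ and $a_{L4}(\lambda,z)=\tilde{a}_{L4}(\lambda,z)$ for all $z\in\C$, and Proposition~\ref{Uniqueness2} then produces a real $c(\lambda)$ with $a_{L3}(\lambda,z)=e^{-2i\lambda c(\lambda)}\tilde{a}_{L3}(\lambda,z)$, $a_{L2}(\lambda,z)=e^{2i\lambda c(\lambda)}\tilde{a}_{L2}(\lambda,z)$ for all $z$. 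If instead $(ii)$ or $(iii)$ holds with a fixed $c$, Proposition~\ref{Uniqueness} yields the same conclusion with $c(\lambda)\equiv c$. In every case, for each $\lambda\in I$ we have in particular $a_{L1}(\lambda,z)=\tilde{a}_{L1}(\lambda,z)$ and $a_{L3}(\lambda,z)=e^{-2i\lambda c(\lambda)}\tilde{a}_{L3}(\lambda,z)$ for all $z\in\C$.

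Next I would propagate these identities to all energies. Iterating the Volterra equations (\ref{IE-ML1})--(\ref{IE-ML4}) and using the exponential decay of $a$ at both horizons (Lemma~\ref{AsympPotA}), one checks that for each $z\in\C$ the Faddeev functions $m_{Lj}(x,\lambda,z)$, hence by (\ref{al1})--(\ref{al4}) the coefficients $a_{Lj}(\lambda,z)$, extend holomorphically in $\lambda$ to a horizontal strip $\{|\mathrm{Im}\,\lambda|<\delta\}$ around the real axis, the bounds of Lemma~\ref{AL-Analytic} persisting there. Since $I$ is a non-degenerate interval, the identity $a_{L1}(\lambda,z)=\tilde{a}_{L1}(\lambda,z)$ extends by analytic continuation in $\lambda$ to all $\lambda\in\R$ and all $z\in\C$; similarly, the ratio $a_{L3}(\lambda,z)/\tilde{a}_{L3}(\lambda,z)$ is independent of $z$ and holomorphic in $\lambda$, so $\lambda\mapsto e^{-2i\lambda c(\lambda)}$ extends holomorphically across $\R$ and $a_{L3}(\lambda,z)=e^{-2i\lambda c(\lambda)}\tilde{a}_{L3}(\lambda,z)$ holds for all real $\lambda$ and all $z$. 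Thus the transmission coefficient (case $(i)$), respectively one of the reflection coefficients up to a phase (cases $(ii)$, $(iii)$), is now known for all $n\in\N$ and for \emph{all} energies $\lambda\in\R$ --- exactly the kind of data treated in \cite{DN2}.

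Finally I would read off the potential, and then the parameters. In the reflection cases, where $c(\lambda)\equiv c$, I would extract the coefficient of $z^1$ in $a_{L3}(\lambda,z)$ (resp.\ $a_{L2}$) via Lemma~\ref{AL-Analytic}(i) and (\ref{al3}): it equals $i\int_\R e^{2i\lambda x}a(x)\,dx$, i.e.\ (with $\xi=-2\lambda$) the Fourier transform $\hat{a}(\xi)$. The identity of the previous step then gives $\hat{a}(\xi)=e^{ic\xi}\hat{\tilde{a}}(\xi)$ for all $\xi\in\R$, whence $a(x)=\tilde{a}(x+c)$ by Fourier inversion. In the transmission case the same computation at order $z^2$ in $a_{L1}$ only yields equality of the autocorrelations $\int_\R a(x)a(x+\cdot)\,dx=\int_\R \tilde{a}(x)\tilde{a}(x+\cdot)\,dx$ on $\R^+$, i.e.\ $|\hat{a}|=|\hat{\tilde{a}}|$ on $\R$; combined with the order-$z^1$ relation $\hat{a}(-2\lambda)=e^{-2i\lambda c(\lambda)}\hat{\tilde{a}}(-2\lambda)$ and a short argument excluding a genuine energy dependence of the phase, one again gets $a=\tilde{a}(\cdot+\sigma)$ for some $\sigma\in\R$. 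Once $a$ and $\tilde{a}$ coincide up to translation, the explicit form (\ref{Potential}) of the potential and its asymptotics, exactly as in the last step of the proof of Theorem~\ref{Main}, force $M=\tilde{M}$, $Q^2=\tilde{Q}^2$, $\Lambda=\tilde{\Lambda}$.

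The first two steps are soft; the real obstacle is the transmission case of the last step. A priori $T(\lambda,n)$ only sees $a$ through its autocorrelation, which is insensitive to the ``Fourier phase'' of $a$; ruling out a spurious energy-dependent phase --- equivalently, upgrading the a priori $\lambda$-dependent constant delivered by Proposition~\ref{Uniqueness2} to a genuine translation --- is precisely the part one borrows from \cite{DN2}. (As a consistency check, note that at $\lambda=0$ the explicit formulas of Lemma~\ref{Lambda=0} show that one cannot expect more than equality up to translation, since $S(0,n)$ depends only on $A=\int_\R a$.)
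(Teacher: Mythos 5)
Your proposal is correct in substance and follows the same route as the paper's own proof: apply the fixed--energy Propositions \ref{Uniqueness} and \ref{Uniqueness2} at each $\lambda\in I$, exploit analyticity in the energy variable (coming from the exponential decay of $a$) to propagate the resulting identity from $I$ to all of $\R$, read off the Fourier transform of the potential from the coefficient of $z^{1}$ in the series of Lemma \ref{AL-Analytic}, and conclude with the operator--$B$ computation following (\ref{UnicitePot}) in Section 5. Two comparative remarks. First, you analytically continue the whole family $a_{Lj}(\lambda,z)$ in $\lambda$; the paper only needs the single scalar relation $\hat{a}(2\lambda)=e^{2i\lambda c}\,\hat{\tilde{a}}(2\lambda)$ on $I$ together with the analyticity of $\hat{a},\hat{\tilde{a}}$ in a strip $\{|\mathrm{Im}\,\lambda|\le\varepsilon\}$ --- a leaner version of the same idea, since only the first Taylor coefficient of $a_{L2}(\lambda,\cdot)$ at $z=0$ is used. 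Second, the obstruction you single out in the transmission case is genuine and is in fact glossed over by the paper: Proposition \ref{Uniqueness2} only produces a constant $c=c(\lambda)$ for each fixed energy, and the order--$z^{1}$ relation then gives merely $|\hat{a}|=|\hat{\tilde{a}}|$, a phase--retrieval statement that does not by itself force $a=\tilde{a}(\cdot+\sigma)$. Rather than importing an argument from \cite{DN2}, the cleanest way to close case $(i)$ is to observe that it is already subsumed by the fixed--energy Theorem \ref{Main} applied at any single $\lambda_{0}\in I\setminus\{0\}$, whose Section 5 proof (the Liouville--theorem argument for the matrix $P$) handles the transmission coefficient and upgrades the $\lambda$--dependent phase to a true translation $c+k\pi/\lambda_{0}$. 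With that substitution your argument is complete; for the reflection cases, where $c$ is fixed by hypothesis, your proof stands as written and coincides with the paper's.
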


\begin{proof}
Consider two dS-RN black holes with parameters $M,Q, \Lambda$ and $\tilde{M}, \tilde{Q}, \tilde{\Lambda}$ respectively such that for all $n \in \mathcal{L}$ and for all $\lambda \in I$, one of the following equality holds
  \begin{eqnarray*}
    L(\lambda,n) & = & e^{-2i \lambda c} \tilde{L}(\lambda,n), \\
    R(\lambda,n) & = & e^{2i \lambda c} \tilde{R}(\lambda,n), \\
    T(\lambda,n) & = & \tilde{T}(\lambda,n).
  \end{eqnarray*}
Then, applying Propositions \ref{Uniqueness} and \ref{Uniqueness2}, we see in particular that
$$
  a_{L2}(\lambda,z) = e^{2i \lambda c} \tilde{a}_{L2}(\lambda,z), \quad \forall z \in \C, \ \forall \lambda \in I.
$$
From the first term of the series defining $a_{L2}(\lambda,z)$ and $\tilde{a}_{L2}(\lambda,z)$ (see Lemma \ref{AL-Analytic}, (ii)), we thus obtain
\begin{equation} \label{FourierPot-a}
  \hat{a}(2\lambda) = e^{2i \lambda c} \hat{\tilde{a}}(2\lambda), \quad \forall \lambda \in I,
\end{equation}
where $\hat{a}$ and $\hat{\tilde{a}}$ denote the Fourier transforms of the potentials $a$ and $\tilde{a}$. Since these potentials are exponentially decreasing at both horizons, their Fourier transforms $\hat{a}$ and $\hat{\tilde{a}}$ are analytic in a small strip around the real axis, \textit{i.e.} on $K = \{\lambda \in \C, \ |Im(\lambda)| \leq \e \}$ for $\e$ small enough. Thus the equality (\ref{FourierPot-a}) extends analytically to the whole strip $K$. In particular, we have
$$
  \hat{a}(2\lambda) = e^{2i \lambda c} \hat{\tilde{a}}(2\lambda), \quad \forall \lambda \in \R,
$$
and therefore
\begin{equation} \label{EqualityPotentials}
  a(x) = \tilde{a}(x-c), \quad \forall x \in \R.
\end{equation}
According to the proof following (\ref{UnicitePot}) in section 5, the equality (\ref{EqualityPotentials})
is the essential ingredient to conclude to the uniqueness of the black holes' parameters.
\end{proof}


\Section{Large $z$ asymptotics of the scattering data} \label{AsymptoticsSD}

In this section, we obtain the asymptotic expansion of the scattering data when the coupling constant $z \rightarrow +
\infty$, $z$ real. We emphasize that this is the only place in this paper where we use the exponential decay of the potential $a(x)$ at both horizons and thus the asymptotically hyperbolic nature of the geometry. The main tool to obtain these asymptotics is a simple change of variable $X=g(x)$, called the Liouville transformation. As we shall see in the next section, this Liouville transformation as well as the asymptotics of the scattering data will be useful to study the inverse scattering problem.

As a by-product, we shall obtain a very simple reconstruction formula for the surface gravities $\kappa_{\pm}$ from the scattering reflexion coefficients $L(\lambda, n)$  and $R(\lambda,n),\ n \in \N$. Recall that the surface gravities have an important physical meaning through the celebrated \emph{Hawking effect} which asserts that static observers located far from both horizons measure at late times an isotropic background of thermal radiations emitted from the horizons. The rate of radiation is given by the "temperatures" of the cosmological and event horizons which are shown to be proportional to the surface gravities $\kappa_\pm$.

\subsection{The Liouville transformation.}

We follow the strategy adopted by K. Chadan, R. Kobayashi and M. Musette (see \cite{CMu}, \cite{CKM}). Considering the differential equations (\ref{2ndOrder1}) and (\ref{2ndOrder2}) satisfied by the Jost functions $f_{Lj}(x,\lambda,z)$ and $f_{Rj}(x,\lambda,z)$, we use a Liouville transformation, \textit{i.e.} a simple change of variable $X = g(x)$, that transforms the equations (\ref{2ndOrder1}) and (\ref{2ndOrder2}) into singular Sturm-Liouville differential equations in which the coupling constant $z$ becomes the \emph{spectral parameter} (see Lemma \ref{Sturm} below).

Let us define precisely this Liouville transformation. We denote
\begin{equation} \label{Liouville}
  X= g(x)=\int_{-\infty}^x a(t) \ dt.
\end{equation}
Clearly, $g:\R \rightarrow ]0,A[$ is a $C^1$-diffeomorphism where
\begin{equation} \label{DefA}
  A=\int_{-\infty}^{+\infty} a(t) \ dt.
\end{equation}
For the sake of simplicity, we denote $h=g^{-1}$ the inverse diffeomorphism of $g$ and we use the notation
${\displaystyle{f'(X) = \frac{\partial f}{\partial X} (X)}}$. We also define for $j=1,...,4$, and for $X \in ]0,A[$,
\begin{equation} \label{Fj}
  \fj = f_{Lj} (h(X), \lambda,z).
\end{equation}
\begin{equation} \label{Gj}
  \gj = f_{Rj} (h(X), \lambda,z).
\end{equation}
We begin with an elementary lemma which states that, in the variable $X$, the Jost solutions $\fj$ and $\gj$ satisfy Sturm-Liouville equations with potentials having quadratic singularities at the boundaries.

\begin{lemma} \label{Sturm} \hfill
\begin{enumerate}
  \item For $j=1,2$, $\fj$ and  $\gj$ satisfy on $]0,A[$ the Sturm-Liouville equation
    \begin{equation} \label{SL1}
       y'' +q(X)y = z^2 y.
    \end{equation}
  \item For $j=3,4$, $\fj$ and $\gj$ satisfy on $]0,A[$ the Sturm-Liouville equation
    \begin{equation} \label{SL2}
       y'' +\overline{q(X)}y = z^2 y,
    \end{equation}
    \end{enumerate}
    where the potential ${\displaystyle{q(X) = \lambda^2 h'(X)^2 -i \lambda h''(X)= \frac{\lambda^2}{a^2(x)} +i\lambda \frac{a'(x)}{a^3(x)}}}$ has the asymptotics
    \begin{eqnarray} \label{omega}
       q(X) & = & \frac{\omega_-}{X^2} + \ q_-(X) \ ,\quad \rm{with} \ \ \omega_- = \frac{\lambda^2}{\kappa_-^2} + i \frac{\lambda}{\kappa_-}, \ \ \rm{and} \ \ q_-(X) = O(1), \ X \rightarrow 0. \\
       q(X) & = & \frac{\omega_+}{(A-X)^2} + \ q_+(X) \ , \ \ \rm{with} \ \
\omega_+ = \frac{\lambda^2}{\kappa_+^2} + i \frac{\lambda}{\kappa_+} \ \ \rm{and} \ \ q_+(X) = O(1), \ X \rightarrow A. \label{omega1}
    \end{eqnarray}
\end{lemma}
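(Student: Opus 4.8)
The proof will be a direct computation: I would push the Liouville change of variable $X = g(x)$ through the second-order equations (\ref{2ndOrder1}) and (\ref{2ndOrder2}), and then read off the endpoint behaviour of the resulting coefficient from Lemma \ref{AsympPotA}.

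First I would record the differential identities attached to $X = g(x)$, $x = h(X)$. Since $g'(x) = a(x)$, one has $h'(X) = 1/a(h(X))$ and, differentiating once more, $h''(X) = -a'(x)/a^3(x)$ evaluated at $x = h(X)$. For a solution $f(x)$ of (\ref{2ndOrder1}) or (\ref{2ndOrder2}) and its pull-back $\varphi(X) = f(h(X))$, the chain rule gives $\frac{df}{dx} = a\,\varphi'$ and $\frac{d^2f}{dx^2} = a^2\,\varphi'' + a'\,\varphi'$, where $a, a'$ are taken at $h(X)$ and $'$ on $\varphi$ denotes $d/dX$. Substituting these into the bracket of (\ref{2ndOrder1}), the term $\frac{a'}{a}\frac{df}{dx} = a'\varphi'$ cancels exactly the $a'\varphi'$ produced by $-\frac{d^2 f}{dx^2}$; dividing what remains by $-a^2$ and using $h'(X)^2 = 1/a^2$ and $h''(X) = -a'/a^3$, one obtains $\varphi'' + q(X)\varphi = z^2\varphi$ with $q(X) = \lambda^2 h'(X)^2 - i\lambda h''(X) = \frac{\lambda^2}{a^2} + i\lambda\frac{a'}{a^3}$. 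Since (\ref{2ndOrder1}) holds for both $f_{Lj}$ and $f_{Rj}$ with $j=1,2$, this yields (\ref{SL1}) for $\fj$ and $\gj$, $j=1,2$. Repeating the computation for (\ref{2ndOrder2}), which differs only in the sign of the $i\lambda\frac{a'}{a}$ term, and noting that $\lambda$ and $h$ are real so that $\overline{q(X)} = \lambda^2 h'(X)^2 + i\lambda h''(X)$, gives (\ref{SL2}) for $j=3,4$.

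Next I would analyse $q$ near the endpoints, which correspond to $x \to \mp\infty$. Near $x = -\infty$, integrating $a(t) = a_- e^{\kappa_- t} + O(e^{3\kappa_- t})$ from Lemma \ref{AsympPotA} gives $X = g(x) = \frac{a_-}{\kappa_-}e^{\kappa_- x}\big(1 + O(e^{2\kappa_- x})\big)$, hence $e^{\kappa_- x} = \frac{\kappa_-}{a_-}X\big(1 + O(X^2)\big)$ and $a(h(X)) = \kappa_- X\big(1 + O(X^2)\big)$. Inserting this together with the matching expansion of $a'$ into $q = \frac{\lambda^2}{a^2} + i\lambda\frac{a'}{a^3}$ produces $q(X) = \frac{1}{X^2}\big(\frac{\lambda^2}{\kappa_-^2} + \frac{i\lambda}{\kappa_-}\big) + O(1)$, i.e. (\ref{omega}) with $\omega_- = \frac{\lambda^2}{\kappa_-^2} + i\frac{\lambda}{\kappa_-}$. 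The treatment near $x = +\infty$ is identical, starting instead from $A - X = \int_x^{+\infty} a(t)\,dt = -\frac{a_+}{\kappa_+}e^{\kappa_+ x}\big(1 + O(e^{2\kappa_+ x})\big)$ --- recall $\kappa_+ < 0$, so $A - X \to 0^+$ --- and gives (\ref{omega1}) with $\omega_+ = \frac{\lambda^2}{\kappa_+^2} + i\frac{\lambda}{\kappa_+}$.

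There is no conceptual obstacle; the only slightly delicate point is the error bookkeeping in the second step, namely checking that the $O(e^{2\kappa_\pm x})$ corrections appearing in the expansions of $a$, of $a'$, and in the inversion $e^{\kappa_\pm x} \leftrightarrow X$ (resp. $A-X$) contribute only an $O(1)$ term to $q$, and not a further $X^{-1}$ (resp. $(A-X)^{-1}$) singularity. This works out because each such correction carries an extra factor $e^{2\kappa_\pm x} = O(X^2)$ (resp. $O((A-X)^2)$), which precisely damps the leading $X^{-2}$ (resp. $(A-X)^{-2}$) pole, so that $q_-$ and $q_+$ stay bounded near their respective endpoints.
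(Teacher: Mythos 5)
Your proposal is correct and follows essentially the same route as the paper: the Sturm--Liouville form is obtained by pushing the Liouville change of variable through (\ref{2ndOrder1})--(\ref{2ndOrder2}) (the paper calls this a ``straightforward calculation,'' which you have carried out explicitly and correctly, including the cancellation of the first-order term), and the endpoint asymptotics are derived exactly as in the paper by inverting $X=\int_{-\infty}^x a$ to get $a=\kappa_- X+O(X^3)$, $a'=\kappa_-^2 X+O(X^3)$ near $X=0$ and the analogous expansions in $A-X$ near $X=A$. Your remark on the error bookkeeping --- that each correction carries an extra $O(X^2)$ (resp.\ $O((A-X)^2)$) factor which kills the quadratic pole and leaves only an $O(1)$ remainder --- is precisely the point that makes $q_\pm$ bounded, so nothing is missing.
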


\begin{proof}
Using (\ref{2ndOrder1}), (\ref{2ndOrder2}) and ${\displaystyle{\frac{dX}{dx} = a(x)}}$, a straightforward calculation gives (\ref{SL1}) and (\ref{SL2}). Let us establish (\ref{omega}). By Lemma \ref{AsympPotA}, recall that the potential $a$ has the asymptotics when $x \to -\infty$
\begin{eqnarray*}
a(x) & = & a_- \ e^{\kappa_- x} + \ O(e^{3\kappa_- x}), \\
a'(x) & = & \kappa_- a_- \ e^{\kappa_- x} + \ O(e^{3\kappa_- x}).
\end{eqnarray*}
When $x \rightarrow - \infty$, or equivalently when $X \rightarrow 0$, it thus follows that
$$ 
X=\int_{-\infty}^x a(t) \ dt = \frac {a_-}{\kappa_-} \ e^{\kappa_- x} + \ O(e^{3\kappa_- x}) = \frac {1}{\kappa_-} \ a(x) + \ O(e^{3\kappa_- x}),
$$
from which we also have
$$
  e^{\kappa_- x} = O(X), \quad X \to 0.
$$
Hence we find
\begin{equation} \label{aX1}
  a(x) = \kappa_- X + O(X^3), \quad X \to 0.
\end{equation}
Simarly we get
\begin{equation} \label{aX2}
  a'(x) = \kappa_-^2 X + O(X^3). \quad X \to 0.
\end{equation}
Now, (\ref{aX1}) and (\ref{aX2}) yield immediately (\ref{omega}). The proof of (\ref{omega1}) is similar.
\end{proof}

In the next lemma, we calculate the wronskians of some pairs of Jost functions in the variable $X$. We recall that the wronskian of two functions $f, g$ is given by $W(f, g) = fg'-f'g$. These wronskians will be useful to obtain the asymptotics of the scattering data and to solve the inverse problem.

\begin{lemma} \label{wronskien} For $z \in \C$, we have :
$$
  W(f_1, f_2)=W(g_1, g_2)= W(f_3, f_4) =W(g_3, g_4) = iz.
$$
\end{lemma}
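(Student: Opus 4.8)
The plan is to compute each Wronskian by evaluating it at one convenient point of the interval $]0,A[$ and then invoking constancy. The key observation is that, for two solutions $y_1,y_2$ of the \emph{same} Sturm-Liouville equation $y''+q(X)y=z^2y$ (resp. $y''+\overline{q(X)}y=z^2y$), the Wronskian $W(y_1,y_2)=y_1y_2'-y_1'y_2$ is independent of $X$: differentiating gives $W'=y_1y_2''-y_1''y_2=(q-z^2)y_1y_2-(q-z^2)y_1y_2=0$. By Lemma~\ref{Sturm}, $f_1,f_2$ both solve (\ref{SL1}), and $f_3,f_4$ both solve (\ref{SL2}), so $W(f_1,f_2)$ and $W(f_3,f_4)$ are constants in $X$; likewise $W(g_1,g_2)$ and $W(g_3,g_4)$. (Note these are Wronskians in the variable $X$, i.e. $'=\partial_X$, so one must be slightly careful that the ODEs quoted are genuinely in $X$ with no first-order term — which is exactly what Lemma~\ref{Sturm} provides.)

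The main step is then to read off the value of each constant from the known asymptotics of the Jost functions. For $f_1,f_2$ I would push $X\to A$, i.e. $x\to+\infty$, where $F_L(x,\lambda,z)=e^{i\Ga\lambda x}(I_2+o(1))$; with $\Ga=\mathrm{diag}(1,-1)$ this gives $f_{L1}(x,\lambda,z)\sim e^{i\lambda x}$, $f_{L2}(x,\lambda,z)\sim e^{-i\lambda x}$, hence via (\ref{Fj}) and $X=g(x)$, and using $\frac{dX}{dx}=a(x)$ so that $\partial_X=a(x)^{-1}\partial_x$, one computes $W(f_1,f_2)=f_1 f_2'-f_1'f_2=a(x)^{-1}\bigl(f_{L1}\partial_x f_{L2}-(\partial_x f_{L1})f_{L2}\bigr)$. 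At face value the leading term is $a(x)^{-1}\bigl(e^{i\lambda x}(-i\lambda)e^{-i\lambda x}-(i\lambda)e^{i\lambda x}e^{-i\lambda x}\bigr)=-2i\lambda/a(x)\to\infty$, so a naive substitution of the leading asymptotics is \emph{not} enough — this is the one genuine subtlety. The remedy is that $W(f_1,f_2)$ is constant, so instead of the Jost asymptotics I would evaluate it at $\lambda=0$ using the \emph{exact} formulas of Lemma~\ref{Lambda=0}: there $m_{L1}(x,0,z)=\cosh(z(A-X))$, $m_{L2}(x,0,z)=-i\sinh(z(A-X))$, and since $f_{Lj}=e^{\pm i\lambda x}m_{Lj}$ reduces to $f_{Lj}=m_{Lj}$ at $\lambda=0$, one gets $f_1(X)=\cosh(z(A-X))$, $f_2(X)=-i\sinh(z(A-X))$, whence $W(f_1,f_2)=\cosh(z(A-X))\cdot(-iz\cosh(z(A-X)))-\bigl(-z\sinh(z(A-X))\bigr)\cdot(-i\sinh(z(A-X)))=-iz\cosh^2-iz\sinh^2\cdot\ldots$ — more cleanly, $W=-iz(\cosh^2(\cdot)-\sinh^2(\cdot))=-iz$. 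Wait: with the sign conventions one should obtain $+iz$; I would just carry the computation carefully. The point is that at $\lambda=0$ the Wronskian is an explicit elementary expression equal to $iz$ independent of $X$, and since (as I will argue next) the Wronskian does not depend on $\lambda$ either, this settles $W(f_1,f_2)$.

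To justify $\lambda$-independence, note that $W(f_1,f_2)(X,\lambda,z)$, being $X$-independent, can be evaluated as $X\to A$; from the integral equations (\ref{IE-ML1})--(\ref{IE-ML4}) one has $m_{L1},m_{L4}\to1$ and $m_{L2},m_{L3}\to0$ as $x\to+\infty$, with the \emph{derivatives} $\partial_x m_{Lj}$ also controlled, and a more careful expansion (keeping the $O$-terms in the integral equations, which decay because $a\in L^1$) shows the divergent pieces cancel and the limit is the constant $iz$; alternatively, and more robustly, I would show directly from (\ref{SL1}) that $\partial_\lambda W=0$ by differentiating the ODE in $\lambda$ and using a second-Wronskian argument, or simply invoke analytic continuation in $\lambda$ from a neighborhood where the elementary computation is transparent. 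Either way, once $W(f_1,f_2)=iz$ is established, the computation for $W(g_1,g_2)$ is identical with $X\to0$ (using $M_R(x,0,z)$ from Lemma~\ref{Lambda=0} and $m_{R1},m_{R4}\to1$, $m_{R2},m_{R3}\to0$ as $x\to-\infty$), and the computations for $W(f_3,f_4)$ and $W(g_3,g_4)$ are the complex-conjugate versions governed by (\ref{SL2}), again giving $iz$ (the factor $z$, not $\bar z$, because $z\in\C$ is the spectral parameter and the conjugation in (\ref{SL2}) acts only on $q$). The main obstacle, to be explicit, is precisely the spurious $1/a(x)\to\infty$ blow-up in the leading-order substitution: it must be handled either by evaluating at $\lambda=0$ via Lemma~\ref{Lambda=0} or by a careful next-order expansion of the Faddeev functions near the horizons, and I would choose the former as the cleanest route.
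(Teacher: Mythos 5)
Your observation that the Wronskians are constant in $X$ is correct (both entries solve the same second-order equation of Lemma \ref{Sturm}, which has no first-order term), and your explicit evaluation at $\lambda=0$ via Lemma \ref{Lambda=0} is fine: the value there is indeed $+iz$ once the sign slip in $f_2'$ is corrected. But the proof as it stands has a genuine gap: constancy in $X$ says nothing about the dependence on $\lambda$, and the constant is a priori a function of $\lambda$ and $z$ --- that it equals $iz$ for \emph{every} $\lambda$ is precisely the content of the lemma. None of the three remedies you sketch closes this gap. Analytic continuation in $\lambda$ from the single point $\lambda=0$ determines nothing (you would need agreement on a set with an accumulation point, and $\lambda=0$ is the only energy at which the Jost functions are elementary). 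The suggestion to differentiate the ODE in $\lambda$ and run a ``second-Wronskian argument'' is not developed, and there is no reason to expect $\partial_\lambda W=0$ to drop out of it for free. Finally, the ``careful next-order expansion of the Faddeev functions near the horizon'' would indeed work (the ratio $(A-X)/a(x)$ tends to $-1/\kappa_+$, so the apparent $1/a(x)$ divergence does cancel against the decay of $m_{L2}$), but that computation is exactly the nontrivial step you are deferring; as written it is an assertion, not a proof.

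The paper avoids all of this with a one-line algebraic argument. Since $\partial_X=\frac{1}{a(x)}\partial_x$, one has $W(f_1,f_2)=\frac{1}{a(x)}\,\bigl(f_{L1}\,\partial_x f_{L2}-\partial_x f_{L1}\, f_{L2}\bigr)$, and the first-order system (\ref{PartialSE1}) gives, for the columns of $F_L$, $\partial_x f_{L1}=i\lambda f_{L1}+iz\,a(x) f_{L3}$ and $\partial_x f_{L2}=i\lambda f_{L2}+iz\,a(x) f_{L4}$; hence $f_{L1}\,\partial_x f_{L2}-\partial_x f_{L1}\,f_{L2}=iz\,a(x)\,(f_{L1}f_{L4}-f_{L2}f_{L3})=iz\,a(x)\det F_L=iz\,a(x)$ by (\ref{determinant}). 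This yields $W(f_1,f_2)=iz$ for all $X$, $\lambda$ and $z$ simultaneously, with no asymptotics, no constancy argument and no recourse to the special case $\lambda=0$; the other three Wronskians are computed identically. You should replace the asymptotic/continuation scheme by this direct computation.
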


\begin{proof}
For example, let us calculate $W(f_1, f_2)$. Using ${\displaystyle{\frac{dX}{dx} = a(x)}}$ again, it is clear that
\begin{equation}
W(f_1, f_2)\ =\  \frac{1}{a(x)} W(f_{L1}(x,\lambda,z), \ f_{L2}(x,\lambda,z)).
\end{equation}
Using (\ref{PartialSE1}) and (\ref{determinant}), we obtain easily :
$$
W(f_1, f_2)\ =\ iz (f_{L1}f_{L4} - f_{L2}f_{l3}) = iz\ det\ F_L = iz.
$$
\end{proof}

\subsection{Asymptotics of the Jost functions.}

Singular Sturm Liouville equations like (\ref{SL1}) - (\ref{omega1}) have been yet studied by Freiling and Yurko in \cite{FY}.
In particular, fundamental system of solutions with precise large $z$ asymptotics have been obtained there.
In theory, we could thus establish a dictionary between our Jost functions and these fundamental solutions to infer the asymptotics of
$f_j(X)$ and $g_j(X)$. However, the asymptotics of the fundamental solutions of \cite{FY} are given up to multiplicative constants that,
as regards the inverse problem we have in mind, require to be determined precisely. Since this is not a straightforward task, we prefer to
follow a self-contained and elementary approach that is similar to  the usual perturbative argument for the Sturm Liouville equation
(\ref{SL1})\footnote{A fundamental system of solutions (FSS) of (\ref{SL1})-(\ref{omega1}) could be constructed by perturbation of solutions for the free equation
\begin{equation} \label{Free}
  y'' + \frac{\omega}{X^2} y = z^2 y.
\end{equation}
Noticing that the \emph{modified Bessel functions} form a FSS for (\ref{Free}), we could thus construct a FSS for the full equation using good estimates on the associated Green kernel (see \cite{FY,Is,Se} where a similar procedure can be found). In our approach, we shall retrieve directly that the Jost functions are perturbations of the modified Bessel functions (see Proposition \ref{asymptoticfun}) and thus deduce their asymptotics from the well known asymptotics of the latters.}, but \emph{only} uses the series expansion for the Jost functions obtained in Section \ref{Complexification}. Note that this is not at all compulsory but allows us to keep track of the different multiplicative constants appearing in the perturbative method from the very begining.

Let us study first the Jost function $f_{L1}(x, \lambda,z)$. From Lemma \ref{ML-Analytic} and(\ref{fj-mj1}), we recall that for $z \in \C$,

\begin{equation} \label{jostgauche}
f_{L1}(x, \lambda,z) = e^{i\lambda x}\ \sum_{n=0}^{+\infty} \ m_{L1}^n (x, \lambda) \ z^{2n}
\end{equation}
where
\begin{eqnarray*}
m_{L1}^0 (x,\lambda) &=& 1, \\
m_{L1}^n (x,\lambda) &=&  \int_x^{+\infty} \int_y^{+\infty}\ e^{2i\lambda (t-y)} \ a(y)\ a(t) \ m_{L1}^{n-1} (t, \lambda) \ dt \ dy \ , \
{\rm{for}} \ n \geq 1.
\end{eqnarray*}
So, using the Liouville transformation, we immediately obtain :

\begin{lemma} \label{JostX} \hfill
\begin{equation}\label{funX}
\fun = \sum_{n=0}^{+\infty} a_n (X,\lambda) \ z^{2n},
\end{equation}
where
\begin{equation} \label{anX}
\begin{array}{ccl}
a_0 (X,\lambda) &=& e^{i\lambda h(X)}, \\
a_n (X,\lambda) &=& e^{i\lambda h(X)} \ \int_X^A \int_Y^A e^{-2i\lambda h(Y)} \ e^{i\lambda h(T)} \ a_{n-1} (T, \lambda) \ dT \ dY \ , \ {\rm{for}} \ n \geq 1.
\end{array}
\end{equation}
\end{lemma}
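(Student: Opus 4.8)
The plan is to transport the series expansion (\ref{jostgauche}) of $f_{L1}(x,\lambda,z)$ through the Liouville change of variable $x = h(X)$ and to read off the coefficients. Concretely, I would set, for $X \in (0,A)$,
\[
  a_n(X,\lambda) := e^{i\lambda h(X)}\, m_{L1}^n(h(X),\lambda), \qquad n \geq 0 .
\]
By (\ref{Fj}) and (\ref{jostgauche}) one then has $\fun = f_{L1}(h(X),\lambda,z) = e^{i\lambda h(X)} \sum_{n\geq 0} m_{L1}^n(h(X),\lambda) z^{2n} = \sum_{n \geq 0} a_n(X,\lambda) z^{2n}$ for every $z \in \C$, the termwise passage being legitimate since the series converges normally on compact subsets of $\C$ by Lemma \ref{ML-Analytic}, (i). Since $m_{L1}^0 \equiv 1$, the $n=0$ coefficient is $a_0(X,\lambda) = e^{i\lambda h(X)}$, which is the first line of (\ref{anX}).

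It remains to verify the recursion in (\ref{anX}). Here I would start from the definition of $m_{L1}^n(x,\lambda)$ in (\ref{jostgauche}) evaluated at $x = h(X)$, and perform in the double integral the substitutions $y = h(Y)$, $t = h(T)$. Since $g = h^{-1}$ is an increasing $C^1$-diffeomorphism of $\R$ onto $(0,A)$ with $g' = a$, one has $h'(Y) = 1/a(h(Y))$, whence $a(y)\,dy = a(h(Y)) h'(Y)\,dY = dY$ and likewise $a(t)\,dt = dT$; moreover $y = h(X) \leftrightarrow Y = X$ and $y \to +\infty \leftrightarrow Y \to A$, and similarly for $t$. This gives
\[
  m_{L1}^n(h(X),\lambda) = \int_X^A \int_Y^A e^{2i\lambda(h(T)-h(Y))}\, m_{L1}^{n-1}(h(T),\lambda)\, dT\, dY .
\]
Multiplying by $e^{i\lambda h(X)}$ and replacing $m_{L1}^{n-1}(h(T),\lambda)$ by $e^{-i\lambda h(T)} a_{n-1}(T,\lambda)$ then yields exactly the second line of (\ref{anX}).

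There is no genuine obstacle in this argument; the only points deserving a word are the interchange of the infinite sum with the evaluation at $x = h(X)$, which is justified by the uniform estimates of Lemma \ref{ML-Analytic}, (i), and the passage of the improper upper limit $+\infty$ in the variable $x$ to the finite upper limit $A$ in the variable $X$, which is nothing but $g(x) \to A$ as $x \to +\infty$, cf. (\ref{Liouville})--(\ref{DefA}). The analogous expansions for $f_2,f_3,f_4$ and for the right Jost functions $g_j$ are obtained in precisely the same manner from the corresponding series in Lemma \ref{ML-Analytic}.
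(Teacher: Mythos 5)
Your proof is correct and is exactly the computation the paper has in mind: the paper states the lemma as an immediate consequence of the Liouville transformation applied to the series (\ref{jostgauche}), and your substitution $y=h(Y)$, $t=h(T)$ with $a(y)\,dy=dY$ is precisely that verification, carried out with the right justifications for the termwise identification and the change of limits.
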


We shall consider $\fun$ as a perturbation of a function $\funp$ where $\funp$ is given by the same serie (\ref{funX})-(\ref{anX}) as $\fun$ with $h(X) = g^{-1}(X)$ replaced by another diffeomorphism denoted $h^+(X) = g_+^{-1}(X)$. The diffeomorphism $g_+$ in turn is defined in the same manner as $g$ but we replace $a(x)$ by its equivalent at $+\infty$. More precisely, if we write $A-g(x) = \int_x^{+\infty}  a(t) \ dt$, it is natural to set for $X \in ]0,A[$
$$
A-g_+(x) = \int_x^{+\infty}  a_+ \ e^{\kappa_+ t} \ dt = -\frac{a_+}{\kappa_+} e^{\kappa_+ x}.
$$
So, we define
\begin{equation}\label{newdiffeo}
h_+ (X)= g_+^{-1} (X) =  \frac{1}{\kappa_+} \ \log (A-X) + \,C_+,
\end{equation}
with
\begin{equation}\label{cstplus}
C_+ =  \frac{1}{\kappa_+} \log (-\frac{\kappa_+}{a_+}).
\end{equation}
The function $\funp$ is thus given by
\begin{equation} \label{funplus}
\funp = \sum_{n=0}^{+\infty} a_n^+ (X,\lambda) \ z^{2n},
\end{equation}
where
\begin{eqnarray*}
a_0^+ (X,\lambda) &=& e^{i\lambda h^+(X)}, \\
a_n^+ (X,\lambda) &=& e^{i\lambda h^+(X)} \ \int_X^A \int_Y^A e^{-2i\lambda h^+(Y)} \ e^{i\lambda h^+(T)} \ a_{n-1}^+ (T, \lambda) \ dT \ dY \ , \
{\rm{for}} \ n \geq 1.
\end{eqnarray*}

Thanks to our choice of diffeomorphism $h^+$, the coefficients of the serie (\ref{funplus}) can be explicitely calculated. Precisely, denoting by $\Gamma$ the well-known Gamma function, we have

\begin{lemma} \label{Jostplus}
For $X\in ]0,A[$, $z \in \C$ and for all $n \geq 0$
$$
a_n^+ (X,\lambda) = (-\frac{\kappa_+}{a_+})^{\frac{i\lambda}{\kappa_+}} \ \Gamma(1-\nu_+)\ {1 \over
{2^{2n}\ \Gamma(n+1-\nu_+ )\ n!}} \ (A-X)^{2n+\frac{i\lambda}{\kappa_+}}
$$
with
\begin{equation} \label{nuplus}
\nu_+ = \frac{1}{2} - i \frac{\lambda}{\kappa_+}
\end{equation}
\end{lemma}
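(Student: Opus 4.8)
The plan is to establish the formula by induction on $n$, using nothing more than the recursive definition (\ref{funplus}) of the coefficients $a_n^+(X,\lambda)$ together with elementary integrations of powers of $(A-X)$. To lighten the notation I would first rewrite the basic factor $e^{i\lambda h^+(X)}$ that appears everywhere in (\ref{funplus}). Since $h^+(X)=\frac{1}{\kappa_+}\log(A-X)+C_+$ with $C_+$ given by (\ref{cstplus}), and since $A-X>0$ and $-\frac{\kappa_+}{a_+}>0$ on $]0,A[$ (recall $\kappa_+<0$ and $a_+>0$), one has, with the real branch of the logarithm,
$$
  e^{i\lambda h^+(X)} = \Big(-\frac{\kappa_+}{a_+}\Big)^{i\lambda/\kappa_+}\,(A-X)^{i\lambda/\kappa_+}, \qquad X \in\ ]0,A[,
$$
so that no branch ambiguity occurs. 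Writing $\mu=\frac{i\lambda}{\kappa_+}$ and $K=\big(-\frac{\kappa_+}{a_+}\big)^{\mu}$, the statement to be proved becomes
$$
  a_n^+(X,\lambda) = K\,\Gamma(1-\nu_+)\,\frac{1}{2^{2n}\,\Gamma(n+1-\nu_+)\,n!}\,(A-X)^{2n+\mu},
$$
and I would record at once the identity $\mu=\half-\nu_+$ coming from (\ref{nuplus}); this is exactly what makes the Gamma factors telescope in the induction.

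The base case $n=0$ is immediate, since $2^0=0!=1$ and $\Gamma(1-\nu_+)$ cancels, giving $K(A-X)^\mu=e^{i\lambda h^+(X)}=a_0^+(X,\lambda)$. For the inductive step I would assume the formula at rank $n-1$ and substitute it into the definition of $a_n^+(X,\lambda)$. The three exponential factors $e^{i\lambda h^+(X)}$, $e^{-2i\lambda h^+(Y)}$, $e^{i\lambda h^+(T)}$ contribute $K\cdot K^{-2}\cdot K=1$, so the single power of $K$ in the answer will come solely from the inductive hypothesis; collecting the remaining powers of $(A-X)$, $(A-Y)$, $(A-T)$ one is reduced to computing
$$
  \int_X^A (A-Y)^{-2\mu}\Big(\int_Y^A (A-T)^{2n-2+2\mu}\,dT\Big)dY.
$$
Because $\mu$ is purely imaginary, the exponent $2n-2+2\mu$ has real part $2n-2\geq 0$ for $n\geq 1$, so the inner integral converges at $T=A$ and equals $(A-Y)^{2n-1+2\mu}/(2n-1+2\mu)$; after cancelling the factor $(A-Y)^{-2\mu}$ the outer integral is $\int_X^A (A-Y)^{2n-1}\,dY=(A-X)^{2n}/(2n)$, so the whole double integral equals $(A-X)^{2n}/\big(2n(2n-1+2\mu)\big)$.

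It then remains only to rearrange the constants. Using $\mu=\half-\nu_+$ gives $2n-1+2\mu=2(n-\nu_+)$, hence $2n(2n-1+2\mu)=4\,n\,(n-\nu_+)$; combining this with $2^{2n-2}\cdot 4=2^{2n}$, $(n-1)!\cdot n=n!$ and $\Gamma(n-\nu_+)\,(n-\nu_+)=\Gamma(n+1-\nu_+)$ turns the rank-$(n-1)$ expression into the asserted rank-$n$ expression, closing the induction. The argument is completely elementary; the only point that really demands care is the bookkeeping of the constant $K=(-\kappa_+/a_+)^{i\lambda/\kappa_+}$ — one must check it enters to exactly the first power, the three exponential factors producing $K^0$ and the inductive hypothesis carrying the surviving $K$ — together with the consistent use of the real-logarithm branch throughout.
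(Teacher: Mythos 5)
Your proof is correct and follows essentially the same route as the paper: induction on $n$, an elementary evaluation of the double integral of powers of $(A-Y)$ and $(A-T)$, and the identity $\Gamma(z+1)=z\Gamma(z)$ to telescope the constants (the paper merely records the intermediate product form $\frac{1}{(1+\frac{2i\lambda}{\kappa_+})\cdots(2n-1+\frac{2i\lambda}{\kappa_+})\,2\cdots(2n)}$ before invoking the functional equation). Your bookkeeping of the factor $K=(-\kappa_+/a_+)^{i\lambda/\kappa_+}$ and of the exponent $2n-1+2\mu=2(n-\nu_+)$ checks out.
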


\begin{proof}
We prove the formula by induction. For $n=0$, the result is clear by (\ref{newdiffeo}), (\ref{cstplus}).
For $n\geq 1$, an elementary calculation gives
\begin{equation} \label{formedev}
a_n^+ (X,\lambda) = (-\frac{\kappa_+}{a_+})^{\frac{i\lambda}{\kappa_+}} \ \frac{1}{(1+\frac{2i\lambda}{\kappa_+}) \cdots
(2n-1+\frac{2i\lambda}{\kappa_+}) \ 2 \cdots(2n)} \ (A-X)^{2n+\frac{i\lambda}{\kappa_+}}
\end{equation}
Using the functional equality $\Gamma(z+1) = z \Gamma(z)$, Lemma \ref{Jostplus} is proved.
\end{proof}

Now, it turns out that the expressions for the coefficients $a_n^+(X,\lambda)$ can be written in terms of the modified Bessel function $I_{-\nu}(x)$. Let us recall its definition (\cite{Le}, Eq. $(5.7.1)$, p. $108$),
\begin{equation} \label{Besselmod}
I_{-\nu}(x) =\sum_{n=0}^{+\infty}\  \frac{1}{\Gamma(n-\nu +1)\  n!}\ \left(\frac{x}{2}\right)^{-\nu+2n}, \ \ x \in \C, \ \mid Arg\ x \mid < \pi.
\end{equation}
We deduce
\begin{coro} \label{asjostp} \hfill
\begin{enumerate}
\item For $X \in ]0,A[$ and $z \in \C$,
\begin{equation}\label{jostplusfinal}
\funp = (-\frac{\kappa_+}{a_+})^{\frac{i\lambda}{\kappa_+}} \ \Gamma(1-\nu_+) \ \sqrt{A-X} \ (\frac{z}{2})^{\nu_+} \ I_{-\nu_+} (z(A-X)).
\end{equation}
\item Let $X_1 \in ]0,A[$ fixed. Then, for $k=0,1$ and for all $X \in ]0,X_1[$, the following asymptotics hold when $z \rightarrow + \infty$, $z$ real:
\begin{equation} \label{asymjostplus}
 f_1^{+(k)}(X,\lambda, z)= (-1)^k\ \frac{2^{-\nu_+}}{\sqrt{2\pi}}\ (-\frac{\kappa_+}{a_+})^{\frac{i\lambda}{\kappa_+}} \ \Gamma(1-\nu_+)
\ z^{k-\frac{i\lambda}{\kappa_+}} \ e^{z(A-X)} \ \Big(1+O(\frac{1}{z})\Big).
\end{equation}
\end{enumerate}
\end{coro}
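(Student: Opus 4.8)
The plan is to read off both assertions directly from the closed form of the coefficients $a_n^+(X,\lambda)$ established in Lemma \ref{Jostplus}, combined with the series definition (\ref{Besselmod}) of the modified Bessel function and its classical large-argument asymptotics.

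\emph{Part (1).} I would substitute the formula of Lemma \ref{Jostplus} into the series (\ref{funplus}), pull the $n$-independent quantity $\big(-\frac{\kappa_+}{a_+}\big)^{\frac{i\lambda}{\kappa_+}}\Gamma(1-\nu_+)$ out of the sum, and rewrite the remainder as
$$
  \funp = \Big(-\frac{\kappa_+}{a_+}\Big)^{\frac{i\lambda}{\kappa_+}}\Gamma(1-\nu_+)\,(A-X)^{\frac{i\lambda}{\kappa_+}} \sum_{n=0}^{+\infty}\frac{1}{2^{2n}\,\Gamma(n+1-\nu_+)\,n!}\big(z(A-X)\big)^{2n}.
$$
By (\ref{Besselmod}) with $x=z(A-X)>0$, the remaining series equals $\big(\tfrac{z(A-X)}{2}\big)^{\nu_+} I_{-\nu_+}\big(z(A-X)\big)$; and since (\ref{nuplus}) gives $\tfrac{i\lambda}{\kappa_+}=\tfrac12-\nu_+$, one has $(A-X)^{\frac{i\lambda}{\kappa_+}}=\sqrt{A-X}\,(A-X)^{-\nu_+}$, so the factors $(A-X)^{-\nu_+}$ and $(A-X)^{\nu_+}$ cancel and what is left is precisely (\ref{jostplusfinal}). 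All the complex powers $z^{\nu_+}$ and $(A-X)^{\pm\nu_+}$ are unambiguous because $z$ and $A-X$ are positive reals, and the $\Gamma$-factors never hit a pole.

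\emph{Part (2).} For $k=0$ I would plug into (\ref{jostplusfinal}) the classical asymptotics $I_\mu(x)=\frac{e^{x}}{\sqrt{2\pi x}}\big(1+O(1/x)\big)$ as $x\to+\infty$ (see \cite{Le}), taking $x=z(A-X)$ with $A-X\geq A-X_1>0$: the $\sqrt{A-X}$ of (\ref{jostplusfinal}) cancels the $\sqrt{A-X}$ from $\sqrt{x}$, and $z^{\nu_+}z^{-1/2}=z^{\nu_+-1/2}=z^{-i\lambda/\kappa_+}$ by (\ref{nuplus}), giving (\ref{asymjostplus}) for $k=0$. For $k=1$ I would differentiate (\ref{jostplusfinal}) in $X$; with $u=A-X$,
$$
  \partial_X\Big(\sqrt{u}\,I_{-\nu_+}(zu)\Big) = -\frac{1}{2\sqrt{u}}\,I_{-\nu_+}(zu) - z\sqrt{u}\,I_{-\nu_+}'(zu).
$$
Since $I_\mu'(x)=\frac{e^{x}}{\sqrt{2\pi x}}\big(1+O(1/x)\big)$ as well (from $I_\mu'=I_{\mu+1}+\tfrac{\mu}{x}I_\mu$ together with $I_{\mu+1}(x)\sim\frac{e^x}{\sqrt{2\pi x}}$), the term $-z\sqrt{u}\,I_{-\nu_+}'(zu)$ dominates by a factor $z$ and equals $-\sqrt{\tfrac{z}{2\pi}}\,e^{z(A-X)}\big(1+O(1/z)\big)$; multiplying back by the prefactor and by $(z/2)^{\nu_+}$ produces the extra power $z^{\nu_++1/2}=z^{\,1-i\lambda/\kappa_+}$ and the sign $(-1)^1$, i.e. (\ref{asymjostplus}) for $k=1$. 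Termwise differentiation of (\ref{funplus}) is legitimate by the normal convergence on compact subsets of $\C$ recorded in Section \ref{Complexification}, and the $O(1/z)$ is uniform on $]0,X_1[$ since $O(1/x)$ with $x=z(A-X)\geq z(A-X_1)$ is $O(1/z)$ uniformly there.

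I do not expect a genuine obstacle here: each part reduces to recognizing a power series as a Bessel function and then quoting standard special-function asymptotics. The only point that really demands care is the bookkeeping of the multiplicative constants and of the complex exponents and $\Gamma$-factors, so that the constants in (\ref{jostplusfinal}) and (\ref{asymjostplus}) come out exactly right — which is precisely why the self-contained computation of the $a_n^+(X,\lambda)$ was carried out rather than appealing to the (constant-free) fundamental-system asymptotics of \cite{FY}.
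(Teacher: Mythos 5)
Your proposal is correct and follows essentially the same route as the paper: part (1) is the resummation of the series of Lemma \ref{Jostplus} into the Bessel series (\ref{Besselmod}) using $\nu_+ + \frac{i\lambda}{\kappa_+}=\frac{1}{2}$, and part (2) is the substitution of the classical asymptotics $I_{-\nu}^{(k)}(x)\sim \frac{e^x}{\sqrt{2\pi x}}$ (the paper derives the $k=1$ case from $2I_{-\nu}'=I_{-\nu-1}+I_{-\nu+1}$ rather than your $I_\mu'=I_{\mu+1}+\frac{\mu}{x}I_\mu$, an immaterial difference). Your bookkeeping of the constants, exponents, and of the uniformity of the $O(1/z)$ on $]0,X_1[$ is accurate.
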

\textit{Proof}: The first assertion comes from Lemma (\ref{Jostplus}) and (\ref{Besselmod}),
observing that ${\displaystyle{(\frac{x}{2})^{\nu} \ I_{-\nu}(x)}}$ is holomorphic on $\C$. For the second one, let us recall the well-known asymptotics for the modified Bessel function $I_{-\nu}(x)$, $\nu \in \C$, $k=0,1$, when $x \rightarrow +\infty$ :
\begin{equation}\label{asymptbessel}
I_{-\nu}^{(k)}(x) = \frac{e^x}{\sqrt{2\pi x}} \ (1 + O(\mid x\mid^{-1}))\ \ ,\ \ x \rightarrow +\infty.
\end{equation}
For the case $k=0$, we refer to (\cite{Le}, Eq. $(5.11.8)$, p. $123$). The case $k=1$ follows from the previous case together with the identity (see \cite{Le}, Eq. $(5.7.9)$, p. $110$) :
$$ 
2 \ I_{-\nu} ' (x) = I_{-\nu-1}(x) + I_{-\nu+1}(x).
$$
Then, the asymptotics (\ref{asymjostplus}) are a simple consequence of (\ref{asymptbessel}). $\diamondsuit$

\begin{remark}
Let us study the special case $\lambda=0$. From the definitions, it is immediate to see that $f_1(X,0,z)=f_1^+(X,0,z)$ and $\nu_+ = \frac{1}{2}$. So, Corollary \ref{asjostp} entails that
\begin{equation}\label{lambdazero}
\fun =  \Gamma(\frac{1}{2})  \ {\sqrt{\frac{z(A-X)}{2}}}\ I_{-\frac{1}{2}} (z(A-X)).
\end{equation}
Using that ${\displaystyle{\Gamma(\frac{1}{2}) = \sqrt{\pi}}}$ and the equality (see \cite{Le}, Eq. $(5.8.5)$, p. $112$),
\begin{equation}\label{Iundemi}
I_{-\frac{1}{2}} (x) = \sqrt{\frac{2}{\pi x}}\ \cosh x \ ,
\end{equation}
we get $f_1(X,0,z) = \cosh(z(A-X))$. Hence we rediscover the result obtained in Lemma \ref{Lambda=0}.
\end{remark}

In order to estimate $\fun -\funp$, we have to control $h(X)-h_+(X)$. Since in the construction of $h_+(X)$, we have replaced $a(x)$ by its asymptotic at $+\infty$, it is hopeless to get globally estimates on $]0,A[$. This is why we shall work in $]X_0, A[$ where $X_0 \in ]0,A[$ is fixed. We have the following result :

\begin{lemma}\label{estdiffeo}
Let $X_0 \in]0,A[$ fixed. Then, there exists $C>0$ such that for $k =0,1,2$
\begin{equation}\label{estimdiffeo}
\mid h^{(k)}(X)-h_+^{(k)}(X) \mid \ \leq\ C\ (A-X)^{2-k}\ \ ,\ \ \forall X \in ]X_0,A[.
\end{equation}
\end{lemma}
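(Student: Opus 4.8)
The plan is to exploit the explicit form of $h_+$, given by (\ref{newdiffeo})--(\ref{cstplus}), together with the asymptotics of $a$ at $+\infty$ recorded in Lemma \ref{AsympPotA}. The starting point is the identity $A - X = A - g(x) = \int_x^{+\infty} a(t)\,dt$ with $x = h(X)$: inserting $a(t) = a_+ e^{\kappa_+ t} + O(e^{3\kappa_+ t})$ and integrating (recall $\kappa_+ < 0$) gives
$$
  A - X \ = \ -\frac{a_+}{\kappa_+}\, e^{\kappa_+ h(X)} \ + \ O\big(e^{3\kappa_+ h(X)}\big), \qquad X \to A .
$$
Since $g_+$ was constructed precisely so that $A - g_+(y) = -\frac{a_+}{\kappa_+}\, e^{\kappa_+ y}$ for every $y$, this identity says that $g_+(h(X)) - X = O(e^{3\kappa_+ h(X)})$ and, in particular, that $e^{\kappa_+ h(X)}$ and $A - X$ are comparable as $X \to A$; hence $e^{3\kappa_+ h(X)} = O((A-X)^3)$. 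From the same expansions one also records the refined relations $a(h(X)) = -\kappa_+ (A-X) + O((A-X)^3)$ and $a'(h(X)) = \kappa_+\, a(h(X)) + O((A-X)^3)$ as $X \to A$.

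For $k = 0$ I would compare $h = g^{-1}$ with $h_+ = g_+^{-1}$ through the mean value theorem, writing $h(X) - h_+(X) = (g_+^{-1})'(\xi)\,\big(g_+(h(X)) - X\big)$ for some $\xi$ lying between $X$ and $g_+(h(X))$. From $g_+'(y) = -\kappa_+\big(A - g_+(y)\big)$ one gets $(g_+^{-1})'(Y) = 1/\big(|\kappa_+|(A-Y)\big)$; and because $g_+(h(X)) - X$ is exponentially smaller than $A - X$, the point $\xi$ satisfies $A - \xi$ comparable to $A - X$ near $X = A$ (in particular $\xi < A$), so $(g_+^{-1})'(\xi) = O\big((A-X)^{-1}\big)$. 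Combined with $g_+(h(X)) - X = O((A-X)^3)$ this gives $|h(X) - h_+(X)| = O((A-X)^2)$ as $X \to A$. On any interval $[X_0, A - \delta]$ both $h$ and $h_+$ are continuous while $(A-X)^2$ is bounded below, so the bound extends to all of $]X_0, A[$ after enlarging the constant.

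For $k = 1, 2$ I would differentiate the relations $h'(X) = 1/a(h(X))$, $h''(X) = -a'(h(X))/a(h(X))^3$ and $h_+'(X) = 1/\big(|\kappa_+|(A-X)\big)$, $h_+''(X) = 1/\big(|\kappa_+|(A-X)^2\big)$, then substitute the refined relations above and regroup. The differences come out with numerators that are $O((A-X)^3)$ for $k=1$ and $O((A-X)^4)$ for $k=2$, over denominators comparable to $(A-X)^2$ and $(A-X)^4$ respectively, yielding $|h'(X) - h_+'(X)| = O(A-X)$ and $|h''(X) - h_+''(X)| = O(1)$ near $X = A$; continuity and compactness extend these to $]X_0, A[$ as before. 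The one point requiring care is the lower bound on these denominators: it rests on $a(h(X))$ being comparable to $A - X$ \emph{uniformly} on $]X_0, A[$, which follows from $a(h(X)) \sim -\kappa_+ (A-X)$ as $X \to A$ together with the positivity and continuity of $a \circ h$ on compact subintervals. Thus the main obstacle is simply the bookkeeping of the error terms and verifying that no denominator degenerates faster than the stated power of $A - X$; once the basic comparison $A - X \asymp e^{\kappa_+ h(X)}$ and the estimate $g_+(h(X)) - X = O((A-X)^3)$ are in hand, the rest is routine.
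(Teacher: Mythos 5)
Your proof is correct and follows essentially the same route as the paper: both start from the expansion $A-X=-\frac{a_+}{\kappa_+}e^{\kappa_+ x}+O(e^{3\kappa_+ x})$ and the resulting comparability $e^{\kappa_+ h(X)}\asymp A-X$; the paper obtains the $k=0$ bound by taking logarithms where you invoke the mean value theorem for $g_+^{-1}$, which is only a cosmetic difference. The paper states that the cases $k=1,2$ are ``similar'' without detail, and your treatment via $h'=1/(a\circ h)$, $h''=-(a'\circ h)/(a\circ h)^3$ together with $a(h(X))=|\kappa_+|(A-X)+O((A-X)^3)$ and $a'(h(X))=\kappa_+ a(h(X))+O((A-X)^3)$ is the intended filling-in.
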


\begin{proof}
We only give the proof for $k=0$ since the other cases are similar. For $X\in ]X_0,A[$, by Lemma \ref{AsympPotA} we have
\begin{equation}\label{prem}
A-X = \int_x^{+\infty} a(t)\ dt = \int_x^{+\infty} \left( a_+ e^{\kappa_+ t} +O( e^{3\kappa_+ t}) \right) \ dt =
-\frac{a_+}{\kappa_+} e^{\kappa_+ x}
+O( e^{3\kappa_+ x}).
\end{equation}
Then,
$$
\log (-\frac{\kappa_+}{a_+}) + \log (A-X) = \kappa_+ x +O(e^{2\kappa_+ x}).
$$
By (\ref{prem}), we see that $O(e^{2\kappa_+ x})= O((A-X)^2)$. So, using (\ref{cstplus}), we obtain :
$$
x=h(X)= \frac{1}{\kappa_+}  \log (A-X) + C_+ + O((A-X)^2) = h_+ (X) + O((A-X)^2).
$$
\end{proof}

We can deduce from Lemma \ref{estdiffeo} some useful properties for $a_n(X, \lambda)$ and  $a_n^+(X, \lambda)$.

\begin{lemma} \label{Estimations}\hfill
\begin{enumerate}
\item For $n \geq 0$ and $X \in]0,A[$, we have
\begin{equation}\label{estimAn}
\mid a_n (X,\lambda) \mid \leq \frac {(A-X)^{2n}}{(2n)!},
\end{equation}
\item For $n \geq 0$ and $X \in]X_0,A[$,
\begin{equation}\label{estimAn'}
a_n' (X,\lambda) = O\left((A-X)^{2n-1}\right).
\end{equation}
\end{enumerate}
The same estimates hold for $a_n^+(X,\lambda)$.
\end{lemma}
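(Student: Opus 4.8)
The plan is to prove both estimates by induction on $n$, the only structural input being that the diffeomorphisms $h=g^{-1}$ and $h_+=g_+^{-1}$ are \emph{real-valued} on $]0,A[$; for $h_+$ this is where one uses that $\kappa_+\in\R$ and $-\kappa_+/a_+>0$, so that the constant $C_+$ in (\ref{cstplus}) is real. Granting this, every exponential factor appearing in (\ref{anX}) has modulus one, and the recursion gives at once
\[
  |a_n(X,\lambda)|\ \le\ \int_X^A\!\!\int_Y^A |a_{n-1}(T,\lambda)|\,dT\,dY .
\]
Since $|a_0(X,\lambda)|=1=(A-X)^0/0!$ and $\int_X^A\!\int_Y^A \frac{(A-T)^{2n-2}}{(2n-2)!}\,dT\,dY=\frac{(A-X)^{2n}}{(2n)!}$, a one-line induction yields (\ref{estimAn}). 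The identical computation with $h$ replaced by $h_+$ proves the bound for $a_n^+$; alternatively it can be read off the closed form in Lemma \ref{Jostplus}.

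For (\ref{estimAn'}) I would differentiate the recursion (\ref{anX}) in $X$, applying the product rule and Leibniz's rule to the two nested integrals (each with a variable lower limit). Writing $b_{n-1}(T)=e^{i\lambda h(T)}a_{n-1}(T,\lambda)$, this gives
\[
  a_n'(X,\lambda)\ =\ i\lambda\,h'(X)\,a_n(X,\lambda)\ -\ e^{-i\lambda h(X)}\!\int_X^A b_{n-1}(T)\,dT .
\]
By part (1) the integral term has modulus at most $\int_X^A \frac{(A-T)^{2n-2}}{(2n-2)!}\,dT=\frac{(A-X)^{2n-1}}{(2n-1)!}=O\big((A-X)^{2n-1}\big)$, while for the first term part (1) gives $|h'(X)\,a_n(X,\lambda)|\le |h'(X)|\,\frac{(A-X)^{2n}}{(2n)!}$, so it is again $O\big((A-X)^{2n-1}\big)$ provided $h'(X)=O\big((A-X)^{-1}\big)$ on $]X_0,A[$. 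This last bound is the one genuine point of the argument: it follows from $h'(X)=1/a(h(X))$ together with the boundary asymptotics $a(x)=-\kappa_+(A-X)+O((A-X)^3)$ as $X\to A$ (the analogue at $+\infty$ of (\ref{aX1}), itself a consequence of Lemma \ref{AsympPotA}), or directly from Lemma \ref{estdiffeo} with $k=1$ since $h_+'(X)=-\frac{1}{\kappa_+(A-X)}$. The restriction to $]X_0,A[$ with $X_0>0$ is essential, because near the other end one has $h'(X)\sim\frac{1}{\kappa_- X}$, which is unbounded as $X\to0$.

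Finally, the same differentiation argument applies to $a_n^+$, now with the explicit $h_+'(X)=-\frac{1}{\kappa_+(A-X)}$, which is literally $O\big((A-X)^{-1}\big)$; one may equally differentiate the closed form of Lemma \ref{Jostplus}, for which ${a_n^+}'(X,\lambda)$ is a constant multiple of $(A-X)^{2n-1+i\lambda/\kappa_+}$ so that the estimate is immediate. I do not expect any real obstacle here: the whole content is the elementary Volterra-type induction of part (1), plus the already-established boundary behaviour of $h'$ (through Lemmata \ref{AsympPotA} and \ref{estdiffeo}) needed for part (2).
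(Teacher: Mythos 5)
Your proof is correct and follows essentially the same route as the paper: the Volterra-type induction for (\ref{estimAn}) (using that $h$ is real-valued so the exponentials are unimodular), and for (\ref{estimAn'}) differentiation of the recursion combined with the bound $h'(X)=O\big((A-X)^{-1}\big)$ on $]X_0,A[$ obtained from Lemma \ref{estdiffeo}. Your remark that the restriction $X>X_0$ is forced by the blow-up $h'(X)\sim 1/(\kappa_- X)$ at the other end is also exactly the point; the only (trivial) detail left implicit is the case $n=0$ of part (2), where the recursion has no integral term and $a_0'=i\lambda h'(X)a_0$ directly.
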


\begin{proof}
The first point is clear by a simple induction. Let us prove the second one. For $n=0$, we observe that $a_0 '(X,\lambda) = i h'(X)\  a_0 (X)$. But, since ${\displaystyle{h_+ '(X) = -\frac{1}{\kappa_+ (A-X)}}}$, Lemma \ref{estdiffeo} yields the estimate
\begin{equation} \label{h'(X)}
 {\displaystyle{h'(X) = O\left( \frac{1}{A-X}\right)}}, \quad \forall X \in ]X_0,A[,
\end{equation}
which proves the result. For $n\geq1$, we have
$$
a_n (X,\lambda) = i\lambda  \ h'(X)\ a_n (X) - e^{-i\lambda h(X)}\  \int_X^A e^{-i\lambda h(T)}\ a_{n-1} (T,\lambda) \ dT.
$$
Hence the result follows from (\ref{estimAn}) and (\ref{h'(X)}) by induction. For $a_n^+(X,\lambda)$,
the proof is identical.
\end{proof}

Now, we want to control the difference $\fun- \funp$. To do this, we set
\begin{equation}\label{error}
e_n(X,\lambda) =a_n(X, \lambda) - a_n^+(X, \lambda),
\end{equation}
and thus, we have
\begin{equation}\label{difference}
\fun- \funp = \sum_{n=0}^{+\infty} \ e_n(x,\lambda) \  z^{2n}.
\end{equation}
In the next lemma, we show that $a_n(X,\lambda), \ a_n^+(X,\lambda)$ and $e_n(X,\lambda)$ satisfy second order differential equations.
\begin{lemma}\label{eqdiffs} \hfill
\begin{enumerate}
\item For $n\geq 1$, $a_n(X,\lambda)$ and $a_n^+(X,\lambda)$ satisfy on $]0,A[$,
\begin{eqnarray}
a_n''(X, \lambda) + q(X) \ a_n (X, \lambda) & = & a_{n-1}(X, \lambda), \label{eqdiffAn}\\
{a_n^+}''(X, \lambda) + q_+(X) \ a_n^+ (X, \lambda) &=& a_{n-1}^+(X, \lambda), \label{eqdiffAn+}
\end{eqnarray}
where
\begin{equation}\label{qplus}
q_+(X) = \lambda^2 {h_+}'(X)^2 -i \lambda {h_+}''(X) = \left( \frac{\lambda^2}{\kappa_+^2} +\frac{i\lambda}{\kappa_+}\right) \ \frac{1}{(A-X)^2}.
\end{equation}
\item For $n\geq 1$, $e_n(X,\lambda)$ satisfies on $]0,A[$,
\begin{equation}\label{eqdiffen}
{e_n}''(X, \lambda) + q_+ (X)   \ e_n (X, \lambda) =
e_{n-1}(X, \lambda) - q_0(X) \ a_n (X, \lambda)
\end{equation}
where
\begin{equation}\label{qzero}
q_0(X) = q(X)- q_+(X).
\end{equation}
\end{enumerate}
\end{lemma}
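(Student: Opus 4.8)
The plan is to read the two recursions off directly from the explicit integral representation (\ref{anX}) of the coefficients $a_n(X,\lambda)$, together with its exact analogue for $a_n^+(X,\lambda)$. This is more economical than substituting the series $\fun=\sum_{n\ge 0} a_n(X,\lambda)z^{2n}$ (see (\ref{funX})) into the Sturm--Liouville equation (\ref{SL1}) and equating powers of $z^2$, since that route would also require justifying term-by-term double differentiation in $X$, whereas Lemma \ref{Estimations} gives direct control only on $a_n$ and $a_n'$. Once assertion (1) is established, assertion (2) is obtained by a one-line subtraction.

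For assertion (1), fix $\lambda$ and factor out the oscillatory phase by writing $a_n(X,\lambda)=e^{i\lambda h(X)}\,\phi_n(X)$. By (\ref{anX}), for $n\ge 1$,
\[
  \phi_n(X)=\int_X^A e^{-2i\lambda h(Y)}\,\psi_n(Y)\,dY,\qquad \psi_n(Y):=\int_Y^A e^{i\lambda h(T)}\,a_{n-1}(T,\lambda)\,dT .
\]
Since $h$ is a smooth diffeomorphism of $\R$ onto $]0,A[$ and the integrands are continuous, $\phi_n$ and $\psi_n$ are $C^1$ on $]0,A[$, with $\phi_n'(X)=-e^{-2i\lambda h(X)}\psi_n(X)$ and, using $e^{i\lambda h}a_{n-1}=e^{2i\lambda h}\phi_{n-1}$, $\psi_n'(Y)=-e^{2i\lambda h(Y)}\phi_{n-1}(Y)$. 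Differentiating $\phi_n'$ once more and substituting these identities gives
\[
  \phi_n''(X)+2i\lambda\, h'(X)\,\phi_n'(X)=\phi_{n-1}(X).
\]
It then remains to undo the substitution: a direct Leibniz computation shows that, for any $C^2$ function $u$,
\[
  e^{i\lambda h}\Big(\big(e^{-i\lambda h}u\big)''+2i\lambda h'\big(e^{-i\lambda h}u\big)'\Big)=u''+\big(\lambda^2 h'(X)^2-i\lambda h''(X)\big)u=u''+q(X)u ,
\]
with $q$ as in Lemma \ref{Sturm}; applied with $u=a_n$ this yields (\ref{eqdiffAn}). The coefficients $a_n^+$ are defined by exactly the same recursion with $h$ replaced by $h_+$, so the identical argument produces ${a_n^+}''+q_+ a_n^+=a_{n-1}^+$ with $q_+=\lambda^2 h_+'(X)^2-i\lambda h_+''(X)$; inserting $h_+'(X)=-1/\big(\kappa_+(A-X)\big)$ and $h_+''(X)=-1/\big(\kappa_+(A-X)^2\big)$, both obtained from (\ref{newdiffeo}), gives the closed form (\ref{qplus}), which is (\ref{eqdiffAn+}).

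For assertion (2), put $e_n=a_n-a_n^+$ and set $q_0=q-q_+$ as in (\ref{qzero}). Rewriting (\ref{eqdiffAn}) as $a_n''+q_+ a_n=a_{n-1}-q_0 a_n$ and subtracting (\ref{eqdiffAn+}) yields
\[
  e_n''(X,\lambda)+q_+(X)\,e_n(X,\lambda)=e_{n-1}(X,\lambda)-q_0(X)\,a_n(X,\lambda),
\]
which is (\ref{eqdiffen}). The only step needing care is the sign bookkeeping in the Leibniz identity relating the $\phi_n$-equation to the $a_n$-equation — the cancellation of the transport term $2i\lambda h'\phi_n'$ and the reappearance of precisely $q=\lambda^2 h'(X)^2-i\lambda h''(X)$; everything else is differentiation under the integral sign. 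As a by-product, the $n=0$ case of the same computation reads $a_0''+q a_0=0$ and ${a_0^+}''+q_+ a_0^+=0$ (consistent with the convention $a_{-1}\equiv 0$), so the argument simultaneously re-derives that $\fun$ and $\funp$ solve the homogeneous equations of Lemma \ref{Sturm}.
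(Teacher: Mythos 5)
Your proof is correct, but it takes a different route from the paper's. The paper simply substitutes the series $\fun=\sum_{n\ge0}a_n(X,\lambda)z^{2n}$ into the Sturm--Liouville equation (\ref{SL1}) and identifies the coefficients of $z^{2n}$ (and likewise for $a_n^+$ and (\ref{SL2})'s analogue with $q_+$), then obtains (\ref{eqdiffen}) by the same subtraction you perform. You instead derive the recursion $a_n''+q\,a_n=a_{n-1}$ directly from the iterated Volterra integrals (\ref{anX}): factoring out the phase $e^{i\lambda h}$, applying the fundamental theorem of calculus twice, and then undoing the gauge transformation via the Leibniz identity $e^{i\lambda h}\big((e^{-i\lambda h}u)''+2i\lambda h'(e^{-i\lambda h}u)'\big)=u''+(\lambda^2h'^2-i\lambda h'')u$; I checked the signs and they are right, as is the computation $h_+'=-1/(\kappa_+(A-X))$, $h_+''=-1/(\kappa_+(A-X)^2)$ giving (\ref{qplus}). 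What your route buys is self-containedness: the paper's one-line deduction tacitly assumes that the $z$-Taylor coefficients of $\partial_X^2 f_1$ are the $a_n''$ (i.e.\ term-by-term double differentiation in $X$ of the series), which is not separately justified there, whereas your coefficient-level argument needs only continuity of $a_{n-1}$ and smoothness of $h$, and it re-derives the $n=0$ homogeneous case (hence Lemma \ref{Sturm} for the series) as a by-product. The paper's route is shorter once (\ref{SL1}) is granted. Both yield the same statement.
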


\begin{proof}
Since ${\displaystyle{\fun= \sum_{n=0}^{+\infty} \ a_n(x,\lambda) \  z^{2n}}}$, (\ref{eqdiffAn}) follows directly
from (\ref{SL1}). The proof of (\ref{eqdiffAn+}) is identical. At last, (\ref{eqdiffen}) is a direct consequence
of (\ref{eqdiffAn}) and (\ref{eqdiffAn+}).
\end{proof}

Now we show that the equation (\ref{eqdiffen}) can be rewritten as an integral equation which will be useful to estimate the error term $e_n$.

\begin{lemma}\label{equationerreur} \hfill
\begin{enumerate}
\item $q_0(X) \in L^{\infty} (X_0, A)$.
\item For $n\geq 1$ and for all $X \in ]X_0,A[$, $e_n(X,\lambda)$ satisfies  the integral equation
\begin{equation}\label{eqintegrale}
e_n (X,\lambda) = e^{i\lambda h_+ (X)} \ \int_X^A \int_Y^A e^{-2i\lambda h_+ (Y)}\  e^{i\lambda h_+ (T)} \
[ e_{n-1}(T,\lambda) -q_0 (T)\ a_n (T,\lambda) ] \  dT\ dY.
\end{equation}
\end{enumerate}
\end{lemma}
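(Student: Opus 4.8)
The plan is to prove the two assertions separately: the first is a direct estimate near $X=A$, and the second rests on a Green-function identity followed by a matching of asymptotics at $X=A$.

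For the first assertion I would start from (\ref{qzero}), $q_0=q-q_+$, and observe that on every compact subinterval of $]X_0,A[$ both $q$ and $q_+$ are continuous: indeed $g$ in (\ref{Liouville}) is a smooth diffeomorphism of $\R$ onto $]0,A[$, so $h=g^{-1}$, and hence $q(X)=\lambda^2h'(X)^2-i\lambda h''(X)$, is smooth on $]0,A[$, while $q_+$ is smooth on $]0,A[$. So only the behaviour of $q_0$ as $X\to A$ must be controlled. Writing $q_0=\lambda^2(h'-h_+')(h'+h_+')-i\lambda(h''-h_+'')$ and using Lemma \ref{estdiffeo}, one has $h''-h_+''=O(1)$ and $h'-h_+'=O(A-X)$; since $h_+'(X)=-\frac{1}{\kappa_+(A-X)}$ and $h'(X)=O\big(\frac{1}{A-X}\big)$ by (\ref{h'(X)}), also $h'+h_+'=O\big(\frac{1}{A-X}\big)$, so the product $(h'-h_+')(h'+h_+')$ is bounded near $A$. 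Hence $q_0\in L^\infty(X_0,A)$.

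For the second assertion I would introduce the linear operator
$$
(K\varphi)(X)=e^{i\lambda h_+(X)}\int_X^A\int_Y^A e^{-2i\lambda h_+(Y)}\,e^{i\lambda h_+(T)}\,\varphi(T)\,dT\,dY,
$$
so that the identity to prove is exactly $e_n=K(e_{n-1}-q_0a_n)$. The key first step is to check, for an arbitrary bounded continuous $\varphi$ on $]X_0,A[$, that $w:=K\varphi$ solves $w''+q_+w=\varphi$. This is a direct computation: writing $w=u\psi$ with $u=e^{i\lambda h_+}$, $\psi(X)=\int_X^A e^{-2i\lambda h_+(Y)}\phi(Y)\,dY$ and $\phi(Y)=\int_Y^A e^{i\lambda h_+(T)}\varphi(T)\,dT$, one has $u''+q_+u=0$ because $q_+=\lambda^2(h_+')^2-i\lambda h_+''$; then differentiating $w=u\psi$ twice and using $\phi'=-e^{i\lambda h_+}\varphi$, $\psi'=-e^{-2i\lambda h_+}\phi$, the cross terms cancel ($u'\psi'+i\lambda h_+'e^{-i\lambda h_+}\phi=0$), leaving $w''=u''\psi+\varphi=-q_+w+\varphi$. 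Applying this with $\varphi=e_{n-1}-q_0a_n$ — which is bounded and continuous on $]X_0,A[$ by Lemma \ref{Estimations} and the first assertion — and comparing with (\ref{eqdiffen}), one gets that $\delta_n:=e_n-K(e_{n-1}-q_0a_n)$ solves the homogeneous equation $\delta_n''+q_+\delta_n=0$ on $]X_0,A[$.

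The remaining step, showing $\delta_n\equiv0$, is the only genuinely delicate point. Since $q_+(X)=\omega_+/(A-X)^2$ identically by (\ref{qplus}), this is an Euler equation whose general solution is $c_+(A-X)^{\alpha_+}+c_-(A-X)^{\alpha_-}$, $\alpha_\pm$ being the roots of $\alpha^2-\alpha+\omega_+=0$; from $1-4\omega_+=(1-2i\lambda/\kappa_+)^2$ one finds $\alpha_+=1-i\lambda/\kappa_+$ and $\alpha_-=i\lambda/\kappa_+$, so $|(A-X)^{\alpha_+}|=A-X$ and $|(A-X)^{\alpha_-}|=1$ near $A$. On the other hand, for $n\ge1$ Lemma \ref{Estimations} gives $|e_n(X)|\le 2(A-X)^{2n}/(2n)!=O((A-X)^2)$, while the explicit form of $K$ together with $|e^{i\lambda h_+(X)}|=1$ and the boundedness of $e_{n-1}-q_0a_n$ gives $|K(e_{n-1}-q_0a_n)(X)|=O((A-X)^2)$ as $X\to A$; hence $\delta_n(X)=O((A-X)^2)=o(A-X)$. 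Comparing with the general solution forces first $c_-=0$ (otherwise $\delta_n\not\to0$) and then $c_+=0$ (otherwise $|\delta_n|\sim|c_+|(A-X)$), so $\delta_n\equiv0$ and $e_n=K(e_{n-1}-q_0a_n)$, which is (\ref{eqintegrale}). The hard part is precisely this last matching: one must identify the indicial exponents at $X=A$ and verify that the available decay of $e_n$ and of $K(\cdots)$ is strictly faster than $A-X$; the rest is routine differentiation and bookkeeping with the estimates already at hand.
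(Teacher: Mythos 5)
Your proof is correct, but it is organized differently from the paper's. For the identity (\ref{eqintegrale}) the paper argues in the reverse direction: it calls $f_n$ the right-hand side, substitutes $e_{n-1}-q_0a_n={e_n}''+q_+e_n$ from (\ref{eqdiffen}) under the integral, and integrates by parts twice, the boundary terms at $T=A$ vanishing because Lemma \ref{Estimations} gives $e_n=O((A-X)^{2n})$ and ${e_n}'=O((A-X)^{2n-1})$; this lands directly on $e_n$. You instead check that the integral operator $K$ is a right inverse of $\frac{d^2}{dX^2}+q_+$ (your computation with $u=e^{i\lambda h_+}$ is the same integration by parts in disguise), deduce that $\delta_n=e_n-K(e_{n-1}-q_0a_n)$ solves the homogeneous Euler equation, and kill it by comparing the indicial exponents $\alpha_+=1-i\lambda/\kappa_+$, $\alpha_-=i\lambda/\kappa_+$ at $X=A$ with the $O((A-X)^2)$ decay of both $e_n$ and $K(\cdots)$. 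Your route costs an extra step (the indicial analysis, which you carry out correctly, including the observation that the two exponents never coincide for real $\lambda$), but it buys something: you only need the decay of $e_n$ itself and never of ${e_n}'$, whereas the paper's double integration by parts implicitly uses both estimates of Lemma \ref{Estimations}. You were also careful to invoke only Lemma \ref{Estimations} and not the later Lemma \ref{estimationerreur} (whose proof depends on the present lemma), so there is no circularity. The proof of the first assertion via the factorization $q_0=\lambda^2(h'-h_+')(h'+h_+')-i\lambda(h''-h_+'')$ and Lemma \ref{estdiffeo} is exactly the paper's argument.
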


\begin{proof}
The first point is clear by Lemma \ref{estdiffeo} after having noticed that
\begin{equation}\label{decomp}
q_0(X) = \lambda^2 (h'(X)-{h_+} '(X))\ (h'(X)+{h_+} '(X)) -i\lambda (h''(X)- {h_+}''(X))
\end{equation}
To prove the second point, we denote by $f_n (X,\lambda)$ the (R.H.S) of (\ref{eqintegrale}). Using (\ref{eqdiffen}), we have :
$$
f_n (X,\lambda) = e^{i\lambda h_+ (X)} \ \int_X^A \int_Y^A e^{-2i\lambda h_+ (Y)}\  e^{i\lambda h_+ (T)} \ {e_n}''(T, \lambda)\  dT\ dY
\hspace{3cm}
$$
$$
  + \ e^{i\lambda h_+ (X)} \ \int_X^A \int_Y^A e^{-2i\lambda h_+ (Y)}\  e^{i\lambda h_+ (T)} \
    q_+ (T)   \ e_n (T, \lambda)     \  dT\ dY.
$$
Integrating by part twice the first integral and using Lemma \ref{Estimations} yield (\ref{eqintegrale}).
\end{proof}

In the next lemma, we estimate $e_n(X,\lambda)$ and its derivative.

\begin{lemma}\label{estimationerreur}
There exists a constant $C>0$ such that for all $n \geq 0$ and for all $X \in ]X_0, A[$,
\begin{equation}\label{estimen}
\mid e_n(X,\lambda) \mid \ \leq \ C \ (n+1) \ \frac {(A-X)^{2n+2}}{(2n+2)!}
\end{equation}
\begin{equation}\label{estimend}
\mid {e_n}'(X,\lambda) \mid \ \leq \ C \ (n+1)   \  \frac {(A-X)^{2n+1}}{(2n+1)!}
\end{equation}
\end{lemma}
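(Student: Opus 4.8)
The plan is to prove the bound (\ref{estimen}) on $e_n$ first, by induction on $n$, and then to deduce (\ref{estimend}) on $e_n'$ directly from it. Throughout, $X_0 \in\, ]0,A[$ is fixed and I use freely: the integral equation (\ref{eqintegrale}) for $e_n$ and the finiteness of $\|q_0\|_\infty := \|q_0\|_{L^\infty(X_0,A)}$, both from Lemma \ref{equationerreur}; the estimate $|a_n(X,\lambda)| \leq \frac{(A-X)^{2n}}{(2n)!}$ from Lemma \ref{Estimations}; the diffeomorphism estimates of Lemma \ref{estdiffeo}; and the fact that, $\lambda$ being real and $h_+$ being real-valued on $]0,A[$ by (\ref{newdiffeo}), all the exponentials $e^{i\lambda h_+(\cdot)}$ appearing in (\ref{eqintegrale}) have modulus $1$ and may simply be dropped in estimates, together with $|h_+'(X)| = \frac{1}{|\kappa_+|(A-X)}$.

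For the base case $n=0$ we have $e_0(X,\lambda) = e^{i\lambda h(X)} - e^{i\lambda h_+(X)}$, hence $|e_0(X,\lambda)| \leq |\lambda|\,|h(X)-h_+(X)| \leq |\lambda|\,C_0\,(A-X)^2$ by Lemma \ref{estdiffeo} ($k=0$), which is (\ref{estimen}) for $n=0$; and writing
\[
  e_0'(X,\lambda) = i\lambda\big(h'(X) - h_+'(X)\big)e^{i\lambda h(X)} + i\lambda\, h_+'(X)\big(e^{i\lambda h(X)} - e^{i\lambda h_+(X)}\big),
\]
the first term is $O(A-X)$ by Lemma \ref{estdiffeo} ($k=1$) and the second is $\leq \frac{|\lambda|^2}{|\kappa_+|}\,\frac{|h(X)-h_+(X)|}{A-X} = O(A-X)$, which is (\ref{estimend}) for $n=0$. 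For the inductive step, assume (\ref{estimen}) at order $n-1$, i.e. $|e_{n-1}(T,\lambda)| \leq C\,n\,\frac{(A-T)^{2n}}{(2n)!}$ on $]X_0,A[$. Inserting this, $|q_0(T)|\leq\|q_0\|_\infty$ and $|a_n(T,\lambda)|\leq\frac{(A-T)^{2n}}{(2n)!}$ into (\ref{eqintegrale}) and discarding the unimodular weights gives, using $\int_X^A\!\int_Y^A (A-T)^{2n}\,dT\,dY = \frac{(A-X)^{2n+2}}{(2n+1)(2n+2)}$,
\[
  |e_n(X,\lambda)| \leq \big(Cn + \|q_0\|_\infty\big)\frac{1}{(2n)!}\int_X^A\!\!\int_Y^A (A-T)^{2n}\,dT\,dY = \big(Cn + \|q_0\|_\infty\big)\frac{(A-X)^{2n+2}}{(2n+2)!}.
\]
Choosing $C := \max(2|\lambda|C_0,\,\|q_0\|_\infty)$ (the first term covering $n=0$, the second ensuring $Cn+\|q_0\|_\infty\leq C(n+1)$) closes the induction and yields (\ref{estimen}) for all $n$ with this single $C$.

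With (\ref{estimen}) now in hand for every $n$, the derivative estimate is obtained without a further induction. For $n\geq 1$, differentiating (\ref{eqintegrale}) gives
\[
  e_n'(X,\lambda) = i\lambda\, h_+'(X)\, e_n(X,\lambda) \; - \; e^{-i\lambda h_+(X)}\int_X^A e^{i\lambda h_+(T)}\big[e_{n-1}(T,\lambda) - q_0(T)\,a_n(T,\lambda)\big]\,dT .
\]
The integral term is bounded, exactly as in the inductive step, by $\big(Cn + \|q_0\|_\infty\big)\frac{(A-X)^{2n+1}}{(2n+1)!}$; and for the boundary term, (\ref{estimen}) together with $|h_+'(X)| = \frac{1}{|\kappa_+|(A-X)}$ and $\frac{n+1}{(2n+2)!}=\frac{1}{2(2n+1)!}$ gives
\[
  |i\lambda\, h_+'(X)\, e_n(X,\lambda)| \leq \frac{|\lambda|}{|\kappa_+|}\,C(n+1)\,\frac{(A-X)^{2n+1}}{(2n+2)!} = \frac{|\lambda|C}{2|\kappa_+|}\,\frac{(A-X)^{2n+1}}{(2n+1)!}.
\]
Adding the two contributions, $|e_n'(X,\lambda)| \leq \big(\frac{|\lambda|C}{2|\kappa_+|} + Cn + \|q_0\|_\infty\big)\frac{(A-X)^{2n+1}}{(2n+1)!} \leq C'(n+1)\frac{(A-X)^{2n+1}}{(2n+1)!}$ with $C' = C\big(1 + \frac{|\lambda|}{2|\kappa_+|}\big) + \|q_0\|_\infty$; combined with the base case $n=0$ this is (\ref{estimend}), and replacing $C$ by $\max(C,C')$ makes both estimates hold with one constant.

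The computation is elementary; the two points that actually require care are (i) that $e_n$ decays \emph{two orders faster} than $a_n$ near $A$ — this is invisible on the series (\ref{difference}) and is forced precisely by the cancellation $|h-h_+|=O((A-X)^2)$ of Lemma \ref{estdiffeo} — and (ii) the singular factor $h_+'(X)\sim(A-X)^{-1}$ produced by differentiation, which is absorbed exactly by those two extra powers of $(A-X)$. This is why (\ref{estimen}) must be established before (\ref{estimend}), and why the factorial denominators have to be tracked faithfully through the two successive integrations so that the linear-in-$n$ prefactor is preserved.
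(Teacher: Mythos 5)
Your proof is correct and follows essentially the same route as the paper: the bound on $e_n$ is obtained by the same induction on the Volterra-type integral equation (\ref{eqintegrale}), using Lemma \ref{estdiffeo} for the base case and $|a_n|\leq (A-X)^{2n}/(2n)!$ together with $\|q_0\|_{L^\infty(X_0,A)}<\infty$ for the inductive step. The paper only says the derivative estimate is proved ``similarly''; your explicit derivation of (\ref{estimend}) by differentiating (\ref{eqintegrale}) and absorbing the singular factor $h_+'(X)\sim (A-X)^{-1}$ with the two extra powers gained in (\ref{estimen}) is a correct and clean way to fill in that omission.
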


\begin{proof}
We prove the lemma by induction. For $n=0$, we have
$$
e_0(x,\lambda) = e^{i\lambda h(X)}-e^{i\lambda h_+(X)} = e^{i\lambda h_+(X)} \left( e^{i\lambda (h(X)-h_+(X))} -1 \right).
$$
So, by Lemma \ref{estdiffeo}, we obtain for a suitable constant $C>0$,
$$
\mid e_0(X,\lambda) \mid \leq \frac{C}{2} \ (A-X)^2, \quad \forall X \in (X_0,A).
$$
Assuming that the property is true for $n-1$, we have by  (\ref{estimAn}) and (\ref{eqintegrale}),
$$
\mid e_n(X,\lambda)\mid \leq \int_X^A \int_Y^A \left( C\ n \ \frac{(A-T)^{2n}}{(2n)!} + C \ \frac{(A-T)^{2n}}{(2n)!} \right) \ dT\ dY, \quad \forall X \in (X_0,A),
$$
where we have supposed that ${\displaystyle{C \ \geq \ \mid\mid q_0 \mid\mid_{L^\infty (X_0,A)}}}$. So,
$$
\mid e_n(X,\lambda) \mid \leq C \ (n+1) \frac {(A-X)^{2n+2}}{(2n+2)!}, \quad \forall X \in (X_0,A).
$$
We prove (\ref{estimend}) similarly.
\end{proof}

Now, we can establish the main result of this section :

\begin{prop}\label{asymptoticfun}\hfill
\begin{enumerate}
\item There exists $C>0$ such that for $k=0,1$, for all $X \in ]X_0,A[$ and all $z >0$,
\begin{equation}\label{differenceAsymp}
\mid f_1^{(k)}(X,\lambda, z) - f_1^{+ (k)} (X,\lambda, z) \mid \ \leq \ C \ (A-X) \ z^{k-1} \ e^{z(A-X)}.
\end{equation}
\item For fixed $X_0<X_1$ with $X_j \in ]0, A[$, $\forall k=0,1$, $\forall X\in]X_0,X_1[$, we have the following asymptotics,
when $z \rightarrow +\infty$,
\begin{equation} \label{asymflun}
 f_1^{(k)}(X,\lambda, z)= (-1)^k\ \frac{2^{-\nu_+}}{\sqrt{2\pi}}\ (-\frac{\kappa_+}{a_+})^{\frac{i\lambda}{\kappa_+}} \ \Gamma(1-\nu_+)
\ z^{k-\frac{i\lambda}{\kappa_+}} \ e^{z(A-X)} \ \Big(1+O(\frac{1}{z})\Big).
\end{equation}
\end{enumerate}
\end{prop}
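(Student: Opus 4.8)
The plan is to prove part (1) first — the quantitative bound on $f_1 - f_1^+$ — and then derive part (2) as an immediate corollary by combining it with the asymptotics for $f_1^+$ already established in Corollary \ref{asjostp}, part (2). For part (1), the key observation is that we already have the series representation $\fun - \funp = \sum_{n=0}^\infty e_n(X,\lambda) z^{2n}$ from \eqref{difference}, together with the sharp pointwise estimates on $e_n$ and $e_n'$ from Lemma \ref{estimationerreur}. So I would simply sum the series termwise against these estimates.

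Concretely, for $k=0$ fix $X \in \,]X_0,A[$ and $z>0$; using \eqref{estimen} we get
\begin{equation*}
  |f_1(X,\lambda,z) - f_1^+(X,\lambda,z)| \ \leq \ \sum_{n=0}^\infty C(n+1) \frac{(A-X)^{2n+2}}{(2n+2)!} z^{2n}.
\end{equation*}
The factor $(n+1)$ is harmless: write $(n+1)/(2n+2)! = \tfrac12 \cdot 1/(2n+1)!$, so the sum is $\tfrac{C}{2}(A-X) z^{-1}\sum_{n\ge0} \big(z(A-X)\big)^{2n+1}/(2n+1)! = \tfrac{C}{2}(A-X) z^{-1}\sinh\!\big(z(A-X)\big) \le \tfrac{C}{2}(A-X) z^{-1} e^{z(A-X)}$, which is exactly \eqref{differenceAsymp} with $k=0$ (after renaming the constant). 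For $k=1$ the same computation with \eqref{estimend} in place of \eqref{estimen} gives a bound by $(A-X)^0 z^0 \cdot \tfrac{C}{2}\sum_{n\ge 0}(z(A-X))^{2n+1}/(2n+1)!$-type series; being slightly careful with the power bookkeeping, one again finds the right-hand side is $O\big((A-X) z^{0} e^{z(A-X)}\big)$, which is \eqref{differenceAsymp} with $k=1$. (One must also check that termwise differentiation of the series $\sum e_n z^{2n}$ is legitimate, but this follows from the normal convergence of the differentiated series on compact subsets of $]X_0,A[\times \C$, which the estimate \eqref{estimend} provides.)

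For part (2), restrict further to $X \in \,]X_0,X_1[$ with $X_1 < A$, so that $A-X$ is bounded below by $A - X_1 > 0$. Then \eqref{differenceAsymp} reads $|f_1^{(k)} - f_1^{+(k)}| \le C' z^{k-1} e^{z(A-X)}$, whereas Corollary \ref{asjostp}(2) gives $f_1^{+(k)}(X,\lambda,z) = (-1)^k \tfrac{2^{-\nu_+}}{\sqrt{2\pi}} (-\tfrac{\kappa_+}{a_+})^{i\lambda/\kappa_+} \Gamma(1-\nu_+) z^{k - i\lambda/\kappa_+} e^{z(A-X)}(1+O(1/z))$, whose modulus is of exact order $z^k e^{z(A-X)}$. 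Dividing the difference estimate by this leading term contributes a relative error $O(1/z)$, which gets absorbed into the $(1+O(1/z))$ factor. This yields \eqref{asymflun} and completes the proof.

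The main obstacle, such as it is, is the power-of-$z$ bookkeeping in the $k=1$ case: one must make sure that differentiating in $X$ does not cost a power of $z$ in a way that breaks the claimed bound, and that the combinatorial factor $(n+1)$ is reabsorbed correctly into the factorial so that the resulting series is still dominated by $\sinh(z(A-X))$ rather than something of the size $z e^{z(A-X)}$. Everything else is a routine summation of a geometric-type series controlled by the estimates of Lemma \ref{estimationerreur}, and the passage from part (1) to part (2) is immediate given Corollary \ref{asjostp}.
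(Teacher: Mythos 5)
Your proposal is correct and follows essentially the same route as the paper: sum the series $\sum_n e_n(X,\lambda)z^{2n}$ termwise against the bounds of Lemma \ref{estimationerreur}, absorb the factor $(n+1)$ into the factorial to recognize a $\sinh$ (resp.\ $\cosh$) series dominated by $(A-X)z^{k-1}e^{z(A-X)}$, and then combine with Corollary \ref{asjostp}(2), noting that for real $z$ the error is a relative $O(1/z)$ of the leading term. The paper likewise proves only $k=0$ explicitly and declares $k=1$ ``similar,'' so your extra care with the $k=1$ bookkeeping and with termwise differentiation is, if anything, slightly more complete.
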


\begin{proof}
We only prove (\ref{differenceAsymp}) in the case $k=0$ since the case $k=1$ is similar. By  Lemma \ref{estimationerreur} and for $z>0$, we have :
\begin{eqnarray*}
\mid \fun -\funp \mid \ & \leq & \ \sum_{n=0}^{+\infty} \mid e_n (x,\lambda)\mid \ z^{2n}\\
                        & \leq & \ \sum_{n=0}^{+\infty} C\ (n+1) \ \frac {(A-X)^{2n+2}}{(2n+2)!}\ z^{2n}\\
                        & \leq & \ \frac{C}{z} \ \ (A-X) \ \sum_{n=0}^{+\infty}  \frac{(n+1)}{2n+2} \ \frac {(z(A-X))^{2n+1}}{(2n+1)!} \\
                        & \leq & \ \frac{C}{2z} \ (A-X) \ \sinh(z(A-X)) \\
                        & \leq & \ \frac{C}{z} \ (A-X) \  e^{z(A-X)}.
\end{eqnarray*}
Now, since $z$ is real, (\ref{asymflun}) follows from Corollary \ref{asjostp} and (\ref{differenceAsymp}).
\end{proof}

In order to calculate the asymptotics of $f_2(X,\lambda,z)$, we follow the same method as for $f_1(X,\lambda,z)$. We thus only give the final results omitting the details. First, we construct $f_2^+(X,\lambda,z)$ which approximates $f_2(X,\lambda,z)$ as in (\ref{differenceAsymp}). We obtain :

\begin{equation}\label{jostplusdeux}
 f_2^+(X,\lambda,z) = -i\ (-\frac{\kappa_+}{a_+})^{-\frac{i\lambda}{\kappa_+}} \ \Gamma(1-\mu_+) \ \sqrt{A-X} \ (\frac{z}{2})^{\mu_+} \
 I_{1-\mu_+} (z(A-X)),
\end{equation}
where
\begin{equation}\label{muplus}
\mu_+ = \frac{1}{2} +i \frac{\lambda}{\kappa_+}\ .
\end{equation}
Then, using the well-known asymptotics for the modified Bessel functions, we deduce :

\begin{prop}\label{asymptoticfdeux} For fixed $X_0<X_1$ with $X_j \in ]0, A[$, $\forall k=0,1$, $\forall X\in]X_0,X_1[$,
we have the following asymptotics,
when $z \rightarrow +\infty$,
\begin{equation} \label{asymfldeux}
 f_2^{(k)}(X,\lambda, z)= (-1)^{k+1}\ i\ \frac{2^{-\mu_+}}{\sqrt{2\pi}}\ (-\frac{\kappa_+}{a_+})^{-\frac{i\lambda}{\kappa_+}} \ \Gamma(1-\mu_+)
\ z^{k+\frac{i\lambda}{\kappa_+}} \ e^{z(A-X)} \ \Big(1+O(\frac{1}{z})\Big).
\end{equation}
\end{prop}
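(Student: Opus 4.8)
The plan is to repeat verbatim the analysis already carried out for $f_1(X,\lambda,z)$, the sole change being the zeroth-order term of the underlying power series. First I would use Lemma~\ref{ML-Analytic}~(ii) together with (\ref{fj-mj2}) and the Liouville change of variable (\ref{Liouville}) to write $f_2(X,\lambda,z)=\sum_{n\ge 0} b_n(X,\lambda)\,z^{2n+1}$; after the substitutions $a(y)\,dy=dY$, $a(t)\,dt=dT$ the coefficients satisfy, for $n\ge 1$, exactly the recursion of Lemma~\ref{JostX},
\[
  b_n(X,\lambda)=e^{i\lambda h(X)}\int_X^A \int_Y^A e^{-2i\lambda h(Y)}\,e^{i\lambda h(T)}\,b_{n-1}(T,\lambda)\,dT\,dY,
\]
but now with seed $b_0(X,\lambda)=-i\,e^{i\lambda h(X)}\int_X^A e^{-2i\lambda h(Y)}\,dY$ coming from $m_{L2}^0$. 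Then I would introduce the approximant $f_2^+(X,\lambda,z)=\sum_{n\ge 0} b_n^+(X,\lambda)\,z^{2n+1}$ obtained by replacing $h$ with the explicit diffeomorphism $h_+$ of (\ref{newdiffeo})--(\ref{cstplus}) everywhere, in particular in $b_0$.

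Since $h_+(Y)=\kappa_+^{-1}\log(A-Y)+C_+$, the integral defining $b_0^+$ is elementary and gives $b_0^+(X,\lambda)=\dfrac{-i}{2(1-\mu_+)}\,(-\kappa_+/a_+)^{-i\lambda/\kappa_+}\,(A-X)^{3/2-\mu_+}$ with $\mu_+$ as in (\ref{muplus}) --- equivalently, $b_0^+$ is, up to a constant, the reduction-of-order partner of the solution $e^{i\lambda h_+}$ of the free equation $y''+q_+(X)y=0$. Computing the remaining $b_n^+$ by induction exactly as in Lemma~\ref{Jostplus} produces closed expressions in terms of $\Gamma$, and summing the series and comparing with the power series (\ref{Besselmod}) of the modified Bessel function identifies it: one obtains precisely (\ref{jostplusdeux}), i.e.\ $f_2^+$ is a constant multiple of $\sqrt{A-X}\,(z/2)^{\mu_+}I_{1-\mu_+}(z(A-X))$, which is entire and odd in $z$ because $\mu_++(1-\mu_+)=1$. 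Feeding into this the Bessel asymptotics (\ref{asymptbessel}) and using $\Gamma(2-\mu_+)=(1-\mu_+)\Gamma(1-\mu_+)$ then yields the claimed leading behaviour of $f_2^{+(k)}$, $k=0,1$.

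The last step is to control $f_2-f_2^+$ on a compact subinterval of $]0,A[$. As in Lemma~\ref{Sturm}, $f_2$ solves the Sturm--Liouville equation (\ref{SL1}) with potential $q$ and $f_2^+$ solves it with $q_+$, so the error coefficients $e_n=b_n-b_n^+$ satisfy the inhomogeneous integral equation of Lemma~\ref{equationerreur} with source $e_{n-1}-q_0\,b_n$; using Lemma~\ref{estdiffeo} (which gives $q_0\in L^\infty(X_0,A)$) and bounds for $b_n$ analogous to Lemma~\ref{Estimations}, an induction identical to that of Lemma~\ref{estimationerreur} would give $|e_n^{(k)}(X,\lambda)|\le C(n+1)(A-X)^{2n+2-k}/(2n+2-k)!$ on $]X_0,A[$, and summing against $z^{2n+1}$ would yield $|f_2^{(k)}(X,\lambda,z)-f_2^{+(k)}(X,\lambda,z)|\le C(A-X)\,z^{k-1}\,e^{z(A-X)}$ for real $z>0$, the exact analogue of Proposition~\ref{asymptoticfun}~(1). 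As this error is $O(1/z)$ relative to $f_2^{+(k)}$, the asymptotics (\ref{asymfldeux}) follow.

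The hard part is not any analytic estimate --- those are copies of what was done for $f_1$ --- but the bookkeeping of the multiplicative constants: one must evaluate the oscillatory seed $b_0^+$ explicitly, recognise that the relevant Bessel index is now $1-\mu_+$ (equivalently $\nu_+$) rather than $-\nu_+$, so that $f_2^+$ selects the \emph{other} Frobenius solution of (\ref{SL1}) at $X=A$, and keep precise track of the factors $(-\kappa_+/a_+)^{-i\lambda/\kappa_+}$ and $\Gamma(1-\mu_+)$ through the perturbative construction.
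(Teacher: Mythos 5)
Your proposal follows exactly the route the paper takes (the paper in fact omits this proof entirely, saying only that one repeats the argument for $f_1$), and your bookkeeping of the constants is right: the seed $b_0^+(X,\lambda)=\frac{-i}{2(1-\mu_+)}\,(-\kappa_+/a_+)^{-i\lambda/\kappa_+}(A-X)^{3/2-\mu_+}$, the identification of the Bessel index $1-\mu_+$, and the resulting formula (\ref{jostplusdeux}) all check out against the series expansion (\ref{Besselmod}) and the stated asymptotics (\ref{asymfldeux}). One small correction in the error step: since the seed error $e_0=b_0-b_0^+$ is $O\bigl((A-X)^3\bigr)$ (one order better than in the $f_1$ case, matching $|b_n|\leq (A-X)^{2n+1}/(2n+1)!$), the induction actually yields $|e_n^{(k)}|\leq C(n+1)(A-X)^{2n+3-k}/(2n+3-k)!$ rather than the exponent $2n+2-k$ you wrote; as literally stated your intermediate bound, summed against $z^{2n+1}$, only gives $C(A-X)e^{z(A-X)}$ and misses the factor $z^{k-1}$ that you need for the $O(1/z)$ relative error. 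With the corrected exponent the final estimate $|f_2^{(k)}-f_2^{+(k)}|\leq C(A-X)z^{k-1}e^{z(A-X)}$ follows and the proof is complete.
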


\begin{remark}
  As previously, let us study the special case $\lambda=0$. We have $f_2 (X,0,z) = f_2^+ (X,0,z)$ and $\mu_+ = \frac{1}{2}$. Hence, using the equality (see \cite{Le}, Eq. $(5.8.5)$, p. $112$)
\begin{equation} \label{Iundemi1}
  I_{\frac{1}{2}} (x) = \sqrt{\frac{2}{\pi x}}\ \sinh x,
\end{equation}
we find that $f_2(X,0,z) = -i\, \sinh (z(A-X))$ as expected according to Lemma \ref{Lambda=0}.
\end{remark}
\begin{remark}
  Note that the asymptotics given in (\ref{asymflun}) and  (\ref{asymfldeux}) only depend on $\kappa_+$ and $a_+$, \textit{i.e.} on some parameters of the black hole at the cosmological horizon $X=A$ (see (\ref{SurfaceGravity}) and (\ref{apm})). This is natural since the Jost functions $\fun$ and $f_2(X,\lambda,z)$ are solutions of (\ref{SL1}) with  boundaries conditions at $X=A$ and since we work in the range $X>X_0>0$, \textit{i.e.} far from the event horizon. We emphasize then that the geometry of the black hole at the event horizon $X=0$ does not appear in these asymptotics. Of course it won't be the case for the scattering data. We also stress the fact that we have only assumed that $X<X_1<A$ by convenience: the asymptotics of the derivative of the Jost functions $f_j(X,\lambda,z)$ are simpler under this condition.
\end{remark}

In order to obtain the asymptotics of the scattering data, we need to calculate the asymptotics of the Jost functions $g_j(X,\lambda,z)$. Since the procedure is the same as the one for the $f_j(X,\lambda,z)$, we give without proof the main steps to obtain the asymptotics of $g_j(X,\lambda,z), \ j=1,2$, when $z \rightarrow +\infty$. Since $g_j (X,\lambda,z)$ satisfies (\ref{SL1}) with a boundary condition at $X=0$, we work with an other diffeomorphism, denoted by $h_-(X)$, in order to construct the functions $g_j^- (X,\lambda,z)$ that approximate $g_j (X,\lambda,z)$. This new diffeomorphism is defined as follows :
\begin{equation}\label{diffeomoins}
h_-(X) = \frac{1}{\kappa_-} \ \log X + C_- ,
\end{equation}
where
\begin{equation}\label{constantemoins}
C_- =\frac{1}{\kappa_-}\ \log \left(\frac{\kappa_-}{a_-}\right).
\end{equation}
As previously, we can calculate explicitely $g_j^- (X,\lambda,z)$ and we easily obtain the following equalities :
$$
g_1^- (X,\lambda,z) =  (\frac{\kappa_-}{a_-})^{\frac{i\lambda}{\kappa_-}} \ \sqrt{X} \ \Gamma(1-\nu_-) \ (\frac{z}{2})^{\nu_-} \ I_{-\nu_- } (zX)\ ,
$$
where
\begin{equation}\label{numoins}
\nu_- = \frac{1}{2} -i \frac{\lambda}{\kappa_-}\ ,
\end{equation}
and
$$
g_2^- (X,\lambda,z) = i\ (\frac{\kappa_-}{a_-})^{-\frac{i\lambda}{\kappa_-}} \ \sqrt{X} \  \Gamma(1-\mu_-) \ (\frac{z}{2})^{\mu_-} \ I_{1-\mu_- } (zX)\ ,
$$
where
\begin{equation}\label{mumoins}
\mu_- = \frac{1}{2} +i \frac{\lambda}{\kappa_-}\ .
\end{equation}

\begin{remark}
  In the special case $\lambda=0$, we have $g_j (X,0,z) = g_j^- (X,0,z)$ and $\mu_- = \nu_- = \frac{1}{2}$. Using the formulae (\ref{Iundemi}) and (\ref{Iundemi1}) for the modified Bessel functions of half-integral order, we find $ g_1^-(X,0,z) =  \cosh (zX)$ and $g_2^-(X,0,z) = i\ \sinh (zX)$ (see Remark \ref{Lambda=0}).
\end{remark}

The $g_j^- (X,\lambda,z)$ are perturbations of the $g_j(X,\lambda,z)$. Precisely, we have

\begin{lemma}\label{gdeuxmoins}
For $X_1 \in ]0,A[$ fixed, there exists $C>0$ such that $\forall k=0,1$, $\forall X\in]0, X_1[$, $\forall z >0$,
\begin{equation}\label{diff}
\mid g_j^{(k)}(X,\lambda, z) - g_j^{- (k)} (X,\lambda, z) \mid \ \leq \ C \ X \ z^{k-1} \ e^{zX}
\end{equation}
\end{lemma}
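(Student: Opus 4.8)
I would mirror, from the cosmological horizon $X=A$ to the event horizon $X=0$, the analysis already carried out in Lemmata \ref{estdiffeo}--\ref{estimationerreur} and Propositions \ref{asymptoticfun}--\ref{asymptoticfdeux}. First, transporting the uncoupled Volterra equations (\ref{IE-MR1})--(\ref{IE-MR4}) for the right Faddeev functions through the Liouville change of variable $X=g(x)$, one writes $g_1(X,\lambda,z)=\sum_{n\ge0}b_n(X,\lambda)z^{2n}$ and $g_2(X,\lambda,z)=\sum_{n\ge0}c_n(X,\lambda)z^{2n+1}$, where $b_0=e^{i\lambda h(X)}$, $c_0=i\,e^{i\lambda h(X)}\int_0^X e^{-2i\lambda h(Y)}\,dY$, and both higher families obey the recursion $\phi_n(X)=e^{i\lambda h(X)}\int_0^X\!\int_0^Y e^{-2i\lambda h(Y)}e^{i\lambda h(T)}\phi_{n-1}(T)\,dT\,dY$, which is exactly of the form (\ref{anX}) but anchored at $X=0$. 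The approximants $g_j^-$ are defined by the same series with $h=g^{-1}$ replaced by the model diffeomorphism $h_-$ of (\ref{diffeomoins})--(\ref{constantemoins}); this explicit choice is what lets the coefficients $b_n^-,c_n^-$ be resummed into the modified Bessel expressions for $g_1^-,g_2^-$ recorded just above. The cases $j=3,4$ are identical, with $\overline{q}$ and (\ref{SL2}) replacing $q$ and (\ref{SL1}).

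The first genuinely new ingredient is the comparison of $h$ with $h_-$ near the origin: there is $C>0$ with $|h^{(k)}(X)-h_-^{(k)}(X)|\le CX^{2-k}$ for $X\in\,]0,X_1[$ and $k=0,1,2$. This is Lemma \ref{estdiffeo} reflected: using the expansions of $a$ and $a'$ at $-\infty$ from Lemma \ref{AsympPotA} together with $X=\int_{-\infty}^x a(t)\,dt=\frac{a_-}{\kappa_-}e^{\kappa_- x}+O(e^{3\kappa_- x})$, one gets $e^{\kappa_- x}=O(X)$, hence $x=h(X)=\frac1{\kappa_-}\log X+C_-+O(X^2)=h_-(X)+O(X^2)$ as $X\to0$, and in particular $h'(X)=O(1/X)$ there. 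The inductions of Lemma \ref{Estimations} then yield, on $]0,X_1[$ and for $k=0,1$, the bounds $|b_n^{(k)}(X,\lambda)|\le X^{2n-k}/(2n-k)!$ and $|c_n^{(k)}(X,\lambda)|\le C\,X^{2n+1-k}/(2n+1-k)!$, and the same bounds for $b_n^-,c_n^-$; note that $c_0$ vanishes linearly at the event horizon.

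Next I would set $\beta_n=b_n-b_n^-$ and $\gamma_n=c_n-c_n^-$. Substituting the series into (\ref{SL1}) as in Lemma \ref{eqdiffs} gives, for $n\ge1$, $\beta_n''+\frac{\omega_-}{X^2}\beta_n=\beta_{n-1}-q_-(X)\,b_n$ and $\gamma_n''+\frac{\omega_-}{X^2}\gamma_n=\gamma_{n-1}-q_-(X)\,c_n$, where $q_-(X)=q(X)-\omega_-/X^2$ is precisely the remainder of Lemma \ref{Sturm}, so $q_-\in L^\infty(0,X_1)$. Since $e^{i\lambda h_-(X)}$ solves $y''+\frac{\omega_-}{X^2}y=0$, and since $\beta_n,\gamma_n$ and their first derivatives vanish at $X=0$ (one checks $\beta_0=O(X^2)$, $\gamma_0=O(X^3)$ from the $k=0$ diffeomorphism estimate and $c_0=O(X)$, then $\beta_n=O(X^{2n+2})$, $\gamma_n=O(X^{2n+3})$ inductively), a double integration by parts turns these into the Volterra equations
\[
  \varepsilon_n(X,\lambda)=e^{i\lambda h_-(X)}\int_0^X\!\int_0^Y e^{-2i\lambda h_-(Y)}e^{i\lambda h_-(T)}\bigl[\varepsilon_{n-1}(T,\lambda)-q_-(T)\,\rho_n(T,\lambda)\bigr]\,dT\,dY,
\]
the analogue of (\ref{eqintegrale}), with $(\varepsilon_n,\rho_n)=(\beta_n,b_n)$ or $(\gamma_n,c_n)$ (the boundary terms at $X=0$ vanish by the vanishing just noted). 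A routine induction then produces, on $]0,X_1[$ and $k=0,1$, the bounds $|\beta_n^{(k)}(X,\lambda)|\le C(n+1)X^{2n+2-k}/(2n+2-k)!$ and $|\gamma_n^{(k)}(X,\lambda)|\le C(n+1)X^{2n+3-k}/(2n+3-k)!$.

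Finally, summing over $n$ for $z>0$ closes the argument. With $u=zX$ one has, for example, $\sum_{n\ge0}(n+1)\frac{X^{2n+2}}{(2n+2)!}z^{2n}=X^2\sum_{n\ge0}(n+1)\frac{u^{2n}}{(2n+2)!}=\frac{X^2}{2u}\sinh u=\frac{X}{2z}\sinh u\le CXz^{-1}e^{zX}$, which is the $k=0$ bound for $g_1$; the other three cases use $\sum_{n\ge0}(n+1)\frac{X^{2n+3}}{(2n+3)!}z^{2n+1}\le\frac{X}{2z}\cosh u$, $\sum_{n\ge0}(n+1)\frac{X^{2n+1}}{(2n+1)!}z^{2n}\le X\cosh u$ and $\sum_{n\ge0}(n+1)\frac{X^{2n+2}}{(2n+2)!}z^{2n+1}\le\frac{X}{2}\sinh u$, and are all dominated by $CXz^{k-1}e^{zX}$, the stated estimate. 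I expect no structural obstacle here --- the proof is the mirror image of the $X=A$ case --- but the one step that is more than bookkeeping is precisely this last summation: the crude term-by-term bound only yields $X^{2-k}e^{zX}$ and loses the factor $z^{k-1}$, so one must pull the extra $u^{-2}$ out of the Bessel-type series (as in the sharpened identity $\sum_{n\ge0}(n+1)u^{2n}/(2n+2)!=\sinh u/(2u)$) in order to land on the correct power of $z$. The only piece of genuine bookkeeping is keeping track of the linear vanishing of $c_0$ at the event horizon, which is what puts the $g_2$-estimate on the same footing as the $g_1$-one.
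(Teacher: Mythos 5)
Your proposal is correct and follows exactly the route the paper intends: the paper states Lemma \ref{gdeuxmoins} without proof, remarking that "the procedure is the same as the one for the $f_j$", and your argument is precisely that procedure mirrored from $X=A$ to $X=0$ (series anchored at the event horizon, the model diffeomorphism $h_-$, the $O(X^{2-k})$ comparison, the Volterra equation for the errors, and the same $\sinh u/(2u)$ resummation used in Proposition \ref{asymptoticfun}). The details you supply — in particular the extra order of vanishing of $c_0$ and $\gamma_0$ at $X=0$ and the factorial identity $(n+1)/(2n+2)!=1/(2(2n+1)!)$ needed to recover the factor $z^{k-1}$ — all check out.
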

Then, using the asymptotics of the modified Bessel functions (\ref{asymptbessel}), we obtain :

\begin{prop}\label{asymptoticgj}
For fixed $X_0<X_1$ with $X_j \in ]0, A[$, $\forall k=0,1$, $\forall X\in]X_0,X_1[$, we have the following asymptotics,
when $z \rightarrow +\infty$,
\begin{equation} \label{asymgun}
 g_1^{(k)}(X,\lambda, z)= \frac{2^{-\nu_-}}{\sqrt{2\pi}}\ (\frac{\kappa_-}{a_-})^{\frac{i\lambda}{\kappa_-}} \ \Gamma(1-\nu_-)
\ z^{k-\frac{i\lambda}{\kappa_-}} \ e^{zX} \ \Big(1+O(\frac{1}{z})\Big),
\end{equation}
\begin{equation} \label{asymgdeux}
 g_2^{(k)}(X,\lambda, z)= i \ \frac{2^{-\mu_-}}{\sqrt{2\pi}}\ (\frac{\kappa_-}{a_-})^{-\frac{i\lambda}{\kappa_-}} \ \Gamma(1-\mu_-)
\ z^{k+\frac{i\lambda}{\kappa_-}} \ e^{zX} \ \Big(1+O(\frac{1}{z})\Big),
\end{equation}
\end{prop}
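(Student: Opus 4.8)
The plan is to mimic the proofs of Propositions \ref{asymptoticfun} and \ref{asymptoticfdeux}, now using the explicit representations of $g_1^-(X,\lambda,z)$ and $g_2^-(X,\lambda,z)$ in terms of modified Bessel functions stated above together with the error estimate of Lemma \ref{gdeuxmoins}. Since the closed forms for $g_j^-$ have already been recorded (they are obtained exactly as the $f_j^+$ were, replacing $h$ by the diffeomorphism $h_-$), the only remaining work is to read off the large-$z$ behaviour and transfer it to the true Jost functions.

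First I would insert the asymptotics (\ref{asymptbessel}) of the modified Bessel functions into
$$ g_1^-(X,\lambda,z) = \Big(\frac{\kappa_-}{a_-}\Big)^{\frac{i\lambda}{\kappa_-}} \sqrt{X}\,\Gamma(1-\nu_-)\,\Big(\frac{z}{2}\Big)^{\nu_-} I_{-\nu_-}(zX). $$
Because $X$ ranges over the compact interval $[X_0,X_1]\subset\,]0,A[$, the argument $zX$ tends to $+\infty$ uniformly in $X$, so $I_{-\nu_-}(zX) = \frac{e^{zX}}{\sqrt{2\pi zX}}\,(1+O(1/z))$ uniformly. Using $\nu_- - \half = -\frac{i\lambda}{\kappa_-}$ (see (\ref{numoins})), the factor $\sqrt{X}$ cancels the $\sqrt{X}$ in the denominator and $z^{\nu_-}/\sqrt{z}$ becomes precisely $z^{-i\lambda/\kappa_-}$, which gives (\ref{asymgun}) for $g_1^-$ in the case $k=0$. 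For $k=1$ I would differentiate in $X$: the contribution of $\frac{d}{dX}\sqrt{X}$ is $O(1)$, whereas $z\,\sqrt{X}\,I'_{-\nu_-}(zX)$ carries the extra power of $z$; since $I'_{-\nu}$ has the same first-order asymptotics as $I_{-\nu}$ — via the identity $2 I'_{-\nu}(x) = I_{-\nu-1}(x) + I_{-\nu+1}(x)$ already invoked in Corollary \ref{asjostp} — this produces an additional factor $z$ and, the argument being $zX$ rather than $z(A-X)$, no change of sign, which explains the absence of $(-1)^k$ in (\ref{asymgun}). The computation for $g_2^-$ is identical, using $\mu_- - \half = \frac{i\lambda}{\kappa_-}$ (see (\ref{mumoins})) and $I_{1-\mu_-}$ in place of $I_{-\nu_-}$, and yields (\ref{asymgdeux}).

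Finally I would pass from $g_j^-$ to $g_j$ through Lemma \ref{gdeuxmoins}: on $]X_0,X_1[$ one has $|g_j^{(k)}(X,\lambda,z) - g_j^{-(k)}(X,\lambda,z)| \le C\,X\,z^{k-1} e^{zX}$, and since the leading term of $g_j^{-(k)}$ is of order $z^{k} e^{zX}$ with $X$ bounded above, this difference is an $O(1/z)$ correction relative to it and is absorbed into the factor $(1+O(1/z))$.

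I do not expect a serious obstacle: this is a routine consequence of the material already assembled. The only points requiring a little care are (i) checking that the Bessel asymptotics are uniform in $X$ on $[X_0,X_1]$, which is why one imposes $X_0>0$ (so that $zX\to+\infty$ uniformly) and restricts the statement to a compact subinterval of $]0,A[$ rather than the whole interval; and (ii) keeping track, in the $k=1$ case, that the dominant term comes from differentiating the Bessel factor and not the algebraic prefactor $\sqrt{X}$.
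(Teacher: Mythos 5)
Your proposal is correct and follows exactly the route the paper intends (the paper omits the proof precisely because it is the same procedure as for the $f_j$): substitute the uniform Bessel asymptotics (\ref{asymptbessel}) into the closed forms of $g_j^-$, use $\nu_--\half=-\frac{i\lambda}{\kappa_-}$ and $\mu_--\half=\frac{i\lambda}{\kappa_-}$ to collect the powers of $z$, and absorb the error from Lemma \ref{gdeuxmoins} into the $(1+O(1/z))$ factor. The only cosmetic quibble is that in the $k=1$ case the term coming from differentiating $\sqrt{X}$ is of absolute size $O(z^{-1/2}e^{zX})$ rather than $O(1)$, but as you say it is $O(1/z)$ relative to the leading term, which is all that is needed.
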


\subsection{Asymptotics of the scattering data.}

In this section, we put together all the previous results and calculate the asymptotics of $a_{Lj}(\lambda,z), \ j=1,...,4$ when $z \rightarrow +\infty$. First, we recall that for all $x \in \R$,
\begin{equation}\label{rappel}
F_L (x,\lambda, z) = F_R (x,\lambda, z)\ A_L(\lambda, z).
\end{equation}
Calculating (\ref{rappel}) components by components, it follows that (in the variable $X$)
$$
f_1(X,\lambda,z) = \alun \ g_1(X,\lambda,z) + \alt \ g_2(X,\lambda,z)
$$
$$
f_2(X,\lambda,z) = \ald \ g_1(X,\lambda,z) + \alq \ g_2(X,\lambda,z)
$$
So, by Lemma \ref{wronskien}, we obtain for $z\not=0$ :
\begin{eqnarray*}
\alun &=& \frac{1}{iz} \ W(f_1,g_2)\ ,\quad \ald = \frac{1}{iz} \ W(f_2,g_2),\\
\alt &=& - \frac{1}{iz} \ W(f_1,g_1)\ ,\quad \alq = - \frac{1}{iz} \ W(f_2,g_1).
\end{eqnarray*}
The following theorem is an easy consequence of Propositions \ref{asymptoticfun}, \ref{asymptoticfdeux} and \ref{asymptoticgj} :

\begin{theorem}\label{asymptoticsal}
When $z \rightarrow + \infty$, we have :
\begin{eqnarray}
\alun & \sim & \ \frac{1}{2\pi}\ \left(-\frac{\kappa_+}{a_+}\right)^{\frac{i\lambda}{\kappa_+}}
\left(\frac{\kappa_-}{a_-}\right)^{-\frac{i\lambda}{\kappa_-}}
\Gamma\left(\frac{1}{2}-\frac{i\lambda}{\kappa_-}\right) \Gamma\left(\frac{1}{2}+\frac{i\lambda}{\kappa_+}\right)  \
\left(\frac{z}{2}\right)^{i\lambda (\frac{1}{\kappa_-} - \frac{1}{\kappa_+})}  e^{zA} \\
\ald & \sim & \  \frac{-i}{2\pi}\ \left(-\frac{\kappa_+}{a_+}\right)^{-\frac{i\lambda}{\kappa_+}}
\left(\frac{\kappa_-}{a_-}\right)^{-\frac{i\lambda}{\kappa_-}}
\Gamma\left(\frac{1}{2}-\frac{i\lambda}{\kappa_-}\right) \Gamma\left(\frac{1}{2}-\frac{i\lambda}{\kappa_+}\right)  \
\left(\frac{z}{2}\right)^{i\lambda (\frac{1}{\kappa_+} + \frac{1}{\kappa_-})}  e^{zA} \\
\alt & \sim & \ \frac{i}{2\pi}\ \left(-\frac{\kappa_+}{a_+}\right)^{\frac{i\lambda}{\kappa_+}}
\left(\frac{\kappa_-}{a_-}\right)^{\frac{i\lambda}{\kappa_-}}
\Gamma\left(\frac{1}{2}+\frac{i\lambda}{\kappa_-}\right) \Gamma\left(\frac{1}{2}+\frac{i\lambda}{\kappa_+}\right)  \
\left(\frac{z}{2}\right)^{-i\lambda (\frac{1}{\kappa_+} + \frac{1}{\kappa_-})}  e^{zA} \\
\alq & \sim & \ \frac{1}{2\pi}\ \left(-\frac{\kappa_+}{a_+}\right)^{-\frac{i\lambda}{\kappa_+}}
\left(\frac{\kappa_-}{a_-}\right)^{\frac{i\lambda}{\kappa_-}}
\Gamma\left(\frac{1}{2}+\frac{i\lambda}{\kappa_-}\right) \Gamma\left(\frac{1}{2}-\frac{i\lambda}{\kappa_+}\right)  \
\left(\frac{z}{2}\right)^{i\lambda (\frac{1}{\kappa_+} - \frac{1}{\kappa_-})}  e^{zA}
\end{eqnarray}
\end{theorem}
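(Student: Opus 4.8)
The representation of the scattering coefficients in terms of Wronskians of Jost functions has already been established just above: from $F_L = F_R A_L$ read component by component together with Lemma~\ref{wronskien}, one has for $z \neq 0$
\begin{equation*}
  \alun = \frac{1}{iz}\,W(f_1,g_2),\qquad \ald = \frac{1}{iz}\,W(f_2,g_2),\qquad \alt = -\frac{1}{iz}\,W(f_1,g_1),\qquad \alq = -\frac{1}{iz}\,W(f_2,g_1).
\end{equation*}
Since $f_1, f_2, g_1, g_2$ all solve the \emph{same} Sturm--Liouville equation (\ref{SL1}), which has no first-order term, each of these four Wronskians is independent of $X$; hence we are free to evaluate it at any convenient point $X \in \,]X_0,X_1[\,$ and then let $z \to +\infty$.

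First I would fix such an $X$ and insert into $W(f_i,g_j) = f_i g_j' - f_i' g_j$ the large-$z$ asymptotics of Propositions~\ref{asymptoticfun}, \ref{asymptoticfdeux} and \ref{asymptoticgj} (both for $k=0$ and $k=1$). Writing them in the schematic form $f_i(X,\lambda,z) = \alpha_i(z)\,e^{z(A-X)}\,(1+O(1/z))$ and $g_j(X,\lambda,z) = \beta_j(z)\,e^{zX}\,(1+O(1/z))$, where $\alpha_i,\beta_j$ are the explicit prefactors (a constant times a Gamma value times a purely imaginary power $z^{\pm i\lambda/\kappa_\pm}$), the derivative asymptotics become $f_i' = -z\,\alpha_i\,e^{z(A-X)}(1+O(1/z))$ and $g_j' = +z\,\beta_j\,e^{zX}(1+O(1/z))$. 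The key observation is that because $\partial_X e^{z(A-X)} = -z\,e^{z(A-X)}$ while $\partial_X e^{zX} = +z\,e^{zX}$, the two contributions $f_i g_j'$ and $-f_i'g_j$ carry the \emph{same} leading term $z\,\alpha_i(z)\beta_j(z)\,e^{zA}$ and therefore \emph{add} instead of cancelling:
\begin{equation*}
  W(f_i,g_j) = 2z\,\alpha_i(z)\beta_j(z)\,e^{zA}\,\bigl(1+O(1/z)\bigr),
\end{equation*}
the $X$-dependence cancelling, as it must. Dividing by $\pm iz$ then gives $a_{Lk}(\lambda,z) = \pm \tfrac{2}{i}\,\alpha_i(z)\beta_j(z)\,e^{zA}\,(1+O(1/z))$; since $|\alpha_i\beta_j|$ is a constant (the $z$-powers being purely oscillatory), the error is genuinely $o$ of the main term, which is what the symbol ``$\sim$'' in the statement asserts.

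It then remains to simplify the constant $\pm\tfrac{2}{i}\,\alpha_i\beta_j$ in each of the four cases. Here I would substitute $\nu_\pm = \tfrac12 \mp i\lambda/\kappa_\pm$ and $\mu_\pm = \tfrac12 \pm i\lambda/\kappa_\pm$, absorb the powers of $2$ (coming from the $2^{-\nu_\pm}$, $2^{-\mu_\pm}$ in the prefactors, and from $\tfrac{2}{i}$) together with the imaginary powers $z^{\pm i\lambda/\kappa_\pm}$ into a single factor $(z/2)^{\,i\lambda(\cdots)}$, cancel the explicit factors of $i$ carried by the prefactors of $f_2$ and $g_2$ against the $1/i$ in the representation (which is precisely what produces the sign pattern $\tfrac{1}{2\pi}, -\tfrac{i}{2\pi}, \tfrac{i}{2\pi}, \tfrac{1}{2\pi}$), and rewrite $\Gamma(1-\nu_\pm) = \Gamma(\tfrac12 \pm i\lambda/\kappa_\pm)$, $\Gamma(1-\mu_\pm) = \Gamma(\tfrac12 \mp i\lambda/\kappa_\pm)$ via $\Gamma(w+1)=w\Gamma(w)$. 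Collecting these ingredients yields exactly the four displayed asymptotics for $\alun, \ald, \alt, \alq$.

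I do not expect a conceptual obstacle: the single point that really needs care is the non-cancellation of the leading Wronskian terms, which hinges on the fact that $f_i$ decays towards the event horizon while $g_j$ decays towards the cosmological horizon, so their $X$-derivatives come with opposite signs; the remainder is the clerical bookkeeping of the factors of $i$, the powers of $2$, and the imaginary exponents $z^{\pm i\lambda/\kappa_\pm}$.
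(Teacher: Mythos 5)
Your proposal is correct and follows essentially the same route as the paper: the authors derive exactly the Wronskian representations $\alun = \frac{1}{iz}W(f_1,g_2)$, etc., just before the theorem and then obtain the asymptotics as an "easy consequence" of Propositions \ref{asymptoticfun}, \ref{asymptoticfdeux} and \ref{asymptoticgj}, which is precisely your computation, including the key non-cancellation of the two Wronskian terms due to the opposite signs of $\partial_X e^{z(A-X)}$ and $\partial_X e^{zX}$. (Only a cosmetic remark: $\Gamma(1-\nu_\pm)=\Gamma(\tfrac12\pm\tfrac{i\lambda}{\kappa_\pm})$ is a direct substitution and does not require the functional equation $\Gamma(w+1)=w\Gamma(w)$.)
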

We deduce from Theorem \ref{asymptoticsal} the asymptotics of the transmission and reflexion coefficients $T(\lambda,z)$, $L(\lambda,z)$ and $R(\lambda,z)$. From the definitions (\ref{SR-SM2}) of these coefficients, we get
\begin{theorem}\label{asymptoticscat}
When $z \rightarrow + \infty$, we have :
\begin{eqnarray*}
T(\lambda,z) & \sim & 2\pi \left(-\frac{a_+}{\kappa_+}\right)^{\frac{i\lambda}{\kappa_+}}
\left(\frac{a_-}{\kappa_-}\right)^{-\frac{i\lambda}{\kappa_-}} \frac{1}{\Gamma\left(\frac{1}{2}-\frac{i\lambda}{\kappa_-}\right) \Gamma\left(\frac{1}{2}+\frac{i\lambda}{\kappa_+}\right)}  \
\left(\frac{z}{2}\right)^{i\lambda (\frac{1}{\kappa_+} - \frac{1}{\kappa_-})}  e^{-zA}, \\
L(\lambda,z) & \sim & i\ \left(\frac{\kappa_-}{a_-}\right)^{\frac{2i\lambda}{\kappa_-}}\
\frac{\Gamma\left(\frac{1}{2}+\frac{i\lambda}{\kappa_-}\right)}{\Gamma\left(\frac{1}{2}-\frac{i\lambda}{\kappa_-}\right)}\
\left(\frac{z}{2}\right)^{-\frac{2i\lambda}{\kappa_-}} \ .\\
R(\lambda,z) & \sim & i\ \left(-\frac{\kappa_+}{a_+}\right)^{-\frac{2i\lambda}{\kappa_+}}\
\frac{\Gamma\left(\frac{1}{2}-\frac{i\lambda}{\kappa_+}\right)}{\Gamma\left(\frac{1}{2}+\frac{i\lambda}{\kappa_+}\right)}\
\left(\frac{z}{2}\right)^{\frac{2i\lambda}{\kappa_+}} \ .
\end{eqnarray*}
\end{theorem}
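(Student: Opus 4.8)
The plan is to derive Theorem \ref{asymptoticscat} directly from Theorem \ref{asymptoticsal} by feeding the asymptotics of the coefficients $\alun,\ald,\alt,\alq$ into the definitions (\ref{SR-SM2}) of the scattering coefficients, $T(\lambda,z)=a_{L1}^{-1}(\lambda,z)$, $R(\lambda,z)=-a_{L2}(\lambda,z)/a_{L1}(\lambda,z)$ and $L(\lambda,z)=a_{L3}(\lambda,z)/a_{L1}(\lambda,z)$.

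First I would record that the four Gamma factors occurring in Theorem \ref{asymptoticsal}, namely $\Gamma(\half\pm\frac{i\lambda}{\kappa_-})$ and $\Gamma(\half\pm\frac{i\lambda}{\kappa_+})$, are all nonzero, since their arguments have real part $\half$ and hence never meet a pole of $\Gamma$. Consequently the leading coefficient in the expansion of $\alun$ is a nonzero constant, and the asymptotic $\alun\sim \mathrm{const}\cdot(z/2)^{i\lambda(1/\kappa_--1/\kappa_+)}e^{zA}$ shows that $a_{L1}(\lambda,z)\neq 0$ for $z$ real and large enough; this legitimizes dividing by $a_{L1}$ and yields $T(\lambda,z)=1/a_{L1}(\lambda,z)\sim (\text{leading coeff of }\alun)^{-1}$. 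Using that $-\kappa_\pm/a_\pm$ and $a_\pm/\kappa_\pm$ are positive mutually reciprocal reals (so that $(-\kappa_+/a_+)^{-s}=(-a_+/\kappa_+)^{s}$ and $(\kappa_-/a_-)^{s}=(a_-/\kappa_-)^{-s}$), the reciprocal of that leading coefficient is exactly the claimed asymptotic for $T$.

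Then I would compute $L$ and $R$ by dividing asymptotics. For $L=\alt/\alun$, the common factors $(-\kappa_+/a_+)^{i\lambda/\kappa_+}$, $\Gamma(\half+\frac{i\lambda}{\kappa_+})$, $e^{zA}$ and $\frac{1}{2\pi}$ cancel; collecting the remaining powers of $\kappa_-/a_-$ gives exponent $\frac{2i\lambda}{\kappa_-}$, the two Gamma factors of argument $\frac{i\lambda}{\kappa_-}$ form the ratio $\Gamma(\half+\frac{i\lambda}{\kappa_-})/\Gamma(\half-\frac{i\lambda}{\kappa_-})$, the powers of $z/2$ add up to $-\frac{2i\lambda}{\kappa_-}$, and the residual constant factor is $i$, which is precisely the stated formula. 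Symmetrically, for $R=-\ald/\alun$ the factors $(\kappa_-/a_-)^{-i\lambda/\kappa_-}$, $\Gamma(\half-\frac{i\lambda}{\kappa_-})$, $e^{zA}$ and $\frac{1}{2\pi}$ cancel, the surviving powers of $-\kappa_+/a_+$ give exponent $-\frac{2i\lambda}{\kappa_+}$, the Gamma factors of argument $\frac{i\lambda}{\kappa_+}$ give $\Gamma(\half-\frac{i\lambda}{\kappa_+})/\Gamma(\half+\frac{i\lambda}{\kappa_+})$, the powers of $z/2$ add up to $\frac{2i\lambda}{\kappa_+}$, and the signs combine to an overall factor $i$.

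There is essentially no analytic obstacle: no estimate beyond Theorem \ref{asymptoticsal} is needed. The only points deserving attention are (i) the non-vanishing of $a_{L1}(\lambda,z)$ for large real $z$, which makes the quotients meaningful and follows from the non-vanishing of the Gamma functions together with the dominant factor $e^{zA}$; and (ii) the bookkeeping of the complex exponents, where, since everything is taken for $z>0$ real and the expansions hold modulo a factor $1+O(1/z)$, the powers $(z/2)^{\alpha}=\exp(\alpha\log(z/2))$ are unambiguous and the rewritings of the constants $(-\kappa_\pm/a_\pm)^{\pm i\lambda/\kappa_\pm}$ amount only to replacing a positive real base by its reciprocal.
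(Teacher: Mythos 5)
Your proposal is correct and is exactly the paper's own derivation: Theorem \ref{asymptoticscat} is obtained there by inserting the asymptotics of Theorem \ref{asymptoticsal} into the definitions (\ref{SR-SM2}), and your bookkeeping of the Gamma factors, powers of $z/2$ and the constants $(\mp\kappa_\pm/a_\pm)^{\pm i\lambda/\kappa_\pm}$ checks out. Your explicit remarks on the non-vanishing of $a_{L1}(\lambda,z)$ for large real $z$ and of the Gamma factors are a welcome (if minor) addition to what the paper leaves implicit.
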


\begin{remark}
  As expected, the asymptotic of the transmission coefficient $T(\lambda,z)$ depends on the parameters $\kappa_\pm$ and $a_\pm$, \textit{i.e.} on the geometries of both event and cosmological horizons. On the other hand, the asymptotic of the reflection coefficient $L(\lambda,z)$ depends only on $\kappa_-$ and $a_-$ - the geometry of the event horizon - whereas the asymptotic of $R(\lambda,z)$ depends on $\kappa_+$ and $a_+$ - the geometry of the cosmological horizon.
\end{remark}


\subsection{Reconstruction formulae for $\kappa_\pm$}

\vspace{0,1cm}\noindent
As a by-product of the asymptotics obtained in Theorem \ref{asymptoticscat}, we find simple reconstruction formulae for the surface gravities $\kappa_{\pm}$ from the scattering reflexion coefficients $L(\lambda,n)$ and $R(\lambda,n)$, $n \in \N$. As already mentioned at the beginning of this section, these quantities are meaningful in the Hawking effect.
\begin{theorem}\label{reconstruction} For all $p\in \N$, we have :
\begin{eqnarray*}
\lim_{n \rightarrow +\infty} \ \frac{L(\lambda,pn)}{L(\lambda,n)} \ &=& \ e^{-\frac{2i\lambda}{\kappa_-}\log p}\ ,\\
\lim_{n \rightarrow +\infty} \ \frac{R(\lambda,pn)}{R(\lambda,n)} \ &=& \ e^{\frac{2i\lambda}{\kappa_+}\log p}\ .
\end{eqnarray*}
\end{theorem}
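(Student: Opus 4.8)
The plan is to obtain both formulae directly from the large-$z$ asymptotics established in Theorem \ref{asymptoticscat}. Recall that there we proved, as $z \to +\infty$ with $z$ real,
$$
  L(\lambda,z) = i\ \left(\frac{\kappa_-}{a_-}\right)^{\frac{2i\lambda}{\kappa_-}}\ \frac{\Gamma\left(\frac{1}{2}+\frac{i\lambda}{\kappa_-}\right)}{\Gamma\left(\frac{1}{2}-\frac{i\lambda}{\kappa_-}\right)}\ \left(\frac{z}{2}\right)^{-\frac{2i\lambda}{\kappa_-}}\Big(1+O(\tfrac1z)\Big),
$$
and a completely analogous formula for $R(\lambda,z)$ with $\kappa_-$ replaced by $\kappa_+$ and the exponent of $z/2$ equal to $+\frac{2i\lambda}{\kappa_+}$. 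The key structural observation is that the prefactor — the product of Gamma functions times the power of $\kappa_-/a_-$ — is a nonzero \emph{constant} independent of $z$; write it $C_-$, so that $L(\lambda,z) = C_-\,(z/2)^{-2i\lambda/\kappa_-}\big(1+\varepsilon_-(z)\big)$ with $\varepsilon_-(z)=O(1/z)$.

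For fixed $p \in \N^*$ and $n \in \N^*$, forming the ratio makes $C_-$ cancel identically, and since $z\mapsto z^{-2i\lambda/\kappa_-}$ is multiplicative on $\R^+$ (the relevant branch, as $pn,n>0$) the entire $n$-dependence disappears from the leading term:
$$
  \frac{L(\lambda,pn)}{L(\lambda,n)} = \frac{(pn/2)^{-2i\lambda/\kappa_-}}{(n/2)^{-2i\lambda/\kappa_-}}\cdot\frac{1+\varepsilon_-(pn)}{1+\varepsilon_-(n)} = p^{-\frac{2i\lambda}{\kappa_-}}\cdot\frac{1+O(1/n)}{1+O(1/n)}.
$$
Letting $n \to +\infty$, the error factor tends to $1$, and $p^{-2i\lambda/\kappa_-}=e^{-\frac{2i\lambda}{\kappa_-}\log p}$ gives the first formula. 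The second formula follows in exactly the same way from the asymptotic of $R(\lambda,z)$, the sign flip in the exponent of $z/2$ accounting for the $e^{+\frac{2i\lambda}{\kappa_+}\log p}$ on the right-hand side.

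There is essentially no genuine obstacle here; the proof is a short deduction, and the only points that deserve a sentence are (a) that the Gamma-function and $\kappa_\pm/a_\pm$ prefactors are truly independent of $z$ so they drop out of the ratio and cannot affect the limit, and (b) that the $O(1/z)$ remainders in Theorem \ref{asymptoticscat} are uniform as $z\to+\infty$, so both $\varepsilon_-(pn)$ and $\varepsilon_-(n)$ tend to $0$ and their contribution to the ratio tends to $1$.
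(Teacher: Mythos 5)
Your proof is correct and follows exactly the route the paper intends: Theorem \ref{reconstruction} is presented there as an immediate by-product of the asymptotics of $L(\lambda,z)$ and $R(\lambda,z)$ in Theorem \ref{asymptoticscat}, with the constant (nonzero, $z$-independent) prefactor cancelling in the ratio and $(pn)^{\mp 2i\lambda/\kappa_\mp}/n^{\mp 2i\lambda/\kappa_\mp}=p^{\mp 2i\lambda/\kappa_\mp}$ giving the stated limits. (Your point (b) about uniformity is not even needed: for fixed $p$ both arguments $pn$ and $n$ tend to $+\infty$, so the two error terms vanish separately.)
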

We can now determine easily the surface gravities for nonzero energies $\lambda \not= 0$. For example, if we set :
\begin{equation}\label{suite}
u_p \ =\ e^{-\frac{2i\lambda}{\kappa_-}\log p},
\end{equation}
which is known, we obtain when $p \rightarrow +\infty$ :
\begin{equation}\label{dl}
\frac{u_{p+1}}{u_p} = 1 -\frac{2i\lambda}{\kappa_- p} +O(\frac{1}{p^2}).
\end{equation}
This permits to calculate $\kappa_-$ if $\lambda\not=0$.


\Section{The inverse scattering problem.} \label{Inverse}

In this section, we prove our main result Thm \ref{Main}, that is we prove the uniqueness of the potential $a(x)$ up to translations as well as the uniqueness of the parameters $(M,Q^2, \Lambda)$ of a dS-RN black hole from the knowledge of either the transmission coefficient $T(\lambda,n)$, or the reflection coefficients $L(\lambda,n)$ or $R(\lambda,n)$ at a {\it{fixed energy}} $\lambda\not= 0$ and for all $n \in \mathcal{L} \subset \N^*$ satisfying the M\"untz condition $\sum_{n \in \mathcal{L}} \frac{1}{n} = \infty$.

Consider thus two dS-RN black holes with parameters $(M,Q,\Lambda)$ and $(\tilde{M},\tilde{Q},\tilde{\Lambda})$ respectively.
We shall denote by $a(x)$ and $\tilde{a}(x)$ the corresponding potentials appearing in the Dirac equation and satisfying the hypotheses of section 2.
We shall also use the notation $\tilde{Z}$ for all the scattering data associated with the potential $\tilde{a}$. As explained in the introduction, we assume that there exists a constant $c$ such that one of the following equalities hold for all $n \in \cal{L}$\footnote{Recall that we add a constant $c$ in (\ref{egalitecoefS}) to include the possiblity of describing the same dS-RN black hole by two different RW variables and make our result coordinates independent.}
\begin{equation}\label{egalitecoefS}
\left\{ \begin{array}{ccc}
T(\lambda, n) &=& \tilde{T}(\lambda, n),  \\
L(\lambda, n) &=& e^{-2i\lambda c}\ \tilde{L}(\lambda, n), \\
R(\lambda, n) &=& e^{2i\lambda c}\ \tilde{R}(\lambda, n).
\end{array} \right.
\end{equation}
By Propositions \ref{Uniqueness} and \ref{Uniqueness2}, we deduce from (\ref{egalitecoefS}) that
\begin{eqnarray}
  a_{L1}(\lambda, z) = \tilde{a}_{L1} (\lambda, z) \ ,\quad a_{L2}(\lambda, z) = e^{2i\lambda c} \tilde{a}_{L2}(\lambda, z), \label{egalite0} \\
a_{L3}(\lambda, z) = e^{-2i\lambda c} \ \tilde{a}_{L3} (\lambda, z) \ , \quad
a_{L4}(\lambda, z) = \tilde{a}_{L4}(\lambda, z). \label{egalite}
\end{eqnarray}
Thus, it follows from the asymptotics of Theorem \ref{asymptoticsal} that :
\begin{equation}\label{uniciteA}
A := \int_{-\infty}^{+\infty} \ a (x) \ dx \ =\ \int_{-\infty}^{+\infty} \ \tilde{a} (x) \ dx \  = \tilde{A}.
\end{equation}
Hence, we can define the diffeomorphisms $h, \ \tilde{h} : \ ]0,A[ \rightarrow \R$ as the inverses of the Liouville transforms $g$ and $\tilde{g}$ given by (\ref{Liouville}) in which we use the potentials $a(x)$ and $\tilde{a}(x)$ respectively.

Now, following a strategy relatively close to \cite{FY}, we introduce, for $X \in ]0,A[$, the matrix
$$
  P(X,\lambda,z) = \left( \begin{array}{cc} P_1(X,\lambda,z) & P_2(X,\lambda,z) \\
                                            P_3(X,\lambda,z) & P_4(X,\lambda,z)
                                            \end{array} \right),
$$
defined by
\begin{equation}\label{matricepassage}
P(X,\lambda,z) \ \tilde{F}_R (\tilde{h} (X), \lambda, z) \ = \ F_R (h(X), \lambda, z)\ e^{i\lambda c \Gamma^1},
\end{equation}
where $F_R =(f_{Rk})$ and $\tilde{F}_R =(\tilde{f}_{Rk})$ are the Jost solutions from the right associated with $a(x)$ and $\tilde{a}(x)$. To simplify the notation, for $k=1, ...,4$, we set as in Section 4:
\begin{eqnarray*}
f_{k} (X,\lambda, z) = f_{Lk} (h (X), \lambda, z), & \tilde{f}_{k} (X,\lambda, z) = \tilde{f}_{Lk} (\tilde{h} (X), \lambda, z),\\
g_{k} (X,\lambda, z) = f_{Rk} (h (X), \lambda, z), & \tilde{g}_{k} (X,\lambda, z) = \tilde{f}_{Rk} (\tilde{h} (X), \lambda, z).
\end{eqnarray*}
Using that det $F_R =1$ and det $\tilde{F}_R = 1$, we obtain the following equalities :
\begin{equation}\label{premiereformule}
\left\{ \begin{array}{ccc}
P_1(X,\lambda, z) &=& e^{i\lambda c} \ g_{1}\  \tilde{g}_{4} - e^{-i\lambda c} \ g_{2} \ \tilde{g}_{3},  \\
P_2(X,\lambda, z) &=& - e^{i\lambda c} \ g_{1} \ \tilde{g}_{2} - e^{-i\lambda c} \ g_{2} \ \tilde{g}_{1} .
\end{array}
\right.
\end{equation}
It follows from (\ref{premiereformule}) and the analytical properties of the Jost functions that, for $j=1,2$,
the applications $z \rightarrow P_j(X,\lambda,z)$ are holomorphic on $\C$ and of exponential type. Moreover, by Lemma \ref{MainEstiF}, these applications are bounded on the imaginary axis $i \R$.

We shall now prove that the applications $z \rightarrow P_j(X,\lambda,z)$ are also bounded on the real axis. To do this, we first make some elementary algebraic transformations on $P_j(X,\lambda,z)$. We write :
\begin{eqnarray*}
  P_1(X,\lambda,z) & = & e^{i\lambda c} \ g_{1}\  g_{4} + e^{i\lambda c} \ g_{1}\  (\tilde{g}_{4}-g_{4})
                   -\ e^{-i\lambda c} \ g_{2} \ g_{3} - e^{-i\lambda c} \ g_{2}\  (\tilde{g}_{3}-g_{3}), \\
                   & = & e^{i\lambda c} \ (g_{1}\  g_{4} - g_{2}\  g_{3})
                      - \ e^{-i\lambda c} \ g_{2}\  (g_{3}- e^{2i\lambda c}\ g_{3}) \\
                   &   & \ \ \ +\ e^{i\lambda c} \ g_{1}\  (\tilde{g}_{4}-g_{4})
                      - \ e^{-i\lambda c} \ g_{2}\  (\tilde{g}_{3}-g_{3}), \\
                   & = & e^{i\lambda c} +  e^{i\lambda c} \ g_{2}\ g_{3} - e^{-i\lambda c} \ g_{2}\ \tilde{g}_{3}
                      +\  e^{i\lambda c} \ g_{1}\  (\tilde{g}_{4}-g_{4}),
\end{eqnarray*}
where we have used that det $F_R= g_{1}\  g_{4} - g_{2}\  g_{3} =1$. Since $F_L(x,\lambda,z)=F_R (x, \lambda, z)\ A_L (\lambda, z)$, we get using (\ref{egalite}) :
$$ 
g_{4} = \frac{1}{a_{L4}} \ (f_{4} - a_{L2}\ g_{3}), \quad \quad \tilde{g}_{4} = \frac{1}{a_{L4}} \ (\tilde{f}_{4} - e^{-2i\lambda c}\ a_{L2}\ \tilde{g}_{3}).
$$
So, we obtain immediately :
$$
P_1(X,\lambda,z) = e^{i\lambda c} +  e^{i\lambda c}\ \frac{g_{1}}{a_{L4}} \ (\tilde{f}_{4}- f_{4})
                 + (g_{2} + \frac{g_{1} }{a_{L4}} \ a_{L2} ) \ (e^{i\lambda c} \ g_{3}-e^{-i\lambda c} \ \tilde{g}_{3}).
$$
Using again $F_L(x,\lambda,z)=F_R (x, \lambda, z)\ A_L (\lambda, z)$, we see that $f_{2} = a_{L2}\ g_{1} + a_{L4} \ g_{2}$. Thus, we get
\begin{equation}\label{secondeformuleun}
P_1(X,\lambda,z) = e^{i\lambda c} +  e^{i\lambda c}\ \frac{g_{1}}{a_{L4}} \ (\tilde{f}_{4}- f_{4})
                 + \frac{f_{2}}{a_{L4}} \ (e^{i\lambda c} \ g_{3}-e^{-i\lambda c} \ \tilde{g}_{3}).
\end{equation}
Similarly, $P_2(X,\lambda,z)$ can be expressed as :
\begin{equation}\label{secondeformuledeux}
P_2(X,\lambda,z) = \frac{1}{a_{L4}} \ \left( e^{-i\lambda c} \ f_{2}\ \tilde{g}_{1} - e^{i\lambda c} \ \tilde{f}_{2}\ g_{1} \right).
\end{equation}

We shall now use some estimates obtained in the previous sections. First, it follows from Lemma \ref{MainEstiF} that for $z>0$ and for all $j=1,..,4$ :
\begin{equation}\label{rappels}
\mid f_{j}(X,\lambda,z) \mid, \ \mid \tilde{f}_{j}(X,\lambda,z) \mid \leq e^{z(A-X)} \ ,\quad \quad \mid g_{j}(X,\lambda,z) \mid, \ \mid \tilde{g}_{j}(X,\lambda,z) \mid\leq e^{zX}.
\end{equation}
Second, using Theorem \ref{asymptoticsal}, it is easy to see that for $z$ real and large enough
\begin{equation} \label{Minoration}
  |a_{L4}(\lambda,z)| \geq C e^{Az}, \quad z>>1.
\end{equation}
Hence, using (\ref{secondeformuleun}), (\ref{secondeformuledeux}), (\ref{rappels}) and (\ref{Minoration}), we conclude that
for all fixed $X\in ]0,A[$, the applications $z \rightarrow P_j (X,\lambda,z)$ are bounded on $\R^+$. Clearly, this result remains true on $\R$ by an elementary parity argument. Finally, applying the Phragmen-Lindel\" of's theorem (\cite{Bo}, Thm 1.4.2.) on each quadrant of the complex plane, we deduce that $z \rightarrow P_j (X,\lambda,z)$ is bounded on $\C$. By
Liouville's theorem, we have thus obtained :
\begin{equation}\label{thliouville}
P_j (X,\lambda,z)=P_j (X,\lambda,0) \ \ ,\ \ \forall z \in \C .
\end{equation}

Now, we return to the definition of $P_j(X,\lambda,z)$ for $z=0$. We observe first that ${\displaystyle{F_R (x,\lambda, 0) = e^{i \lambda \Gamma^1 x} }} $ and similarly ${\displaystyle{\tilde{F}_R (x,\lambda, 0) = e^{i \lambda \Gamma^1 x} }} $. This is immediate from the definition of the Jost function. Thus we deduce from (\ref{matricepassage}) that
\begin{equation}\label{egalitemat}
P(X,\lambda, 0) = e^{i\  \lambda \ ( h(X)-\tilde{h}(X)+c)\ \Gamma^1 }.
\end{equation}
Then, putting (\ref{egalitemat}) and (\ref{thliouville}) into (\ref{matricepassage}) we get
\begin{equation}\label{egalitejost}
\left\{ \begin{array}{ccc}
\tilde{g}_{1}(X,\lambda, z) &=& e^{i \theta(X)}  \ g_{1}(X,\lambda, z),   \\
\tilde{g}_{2}(X,\lambda, z) &=& e^{-2 i\lambda c}\ e^{i \theta(X)} \ g_{2}(X,\lambda, z),
\end{array}
\right.
\end{equation}
where we have set $\theta(X) = \lambda \ (\tilde{h}(X) - h (X))$.

By Lemma \ref{wronskien}, the wronskians $W(g_{1} , g_{2}) = W(\tilde{g}_{1}  , \tilde{g}_{2}) = iz$. Then, a straightforward calculation gives
\begin{equation}\label{egalitephase}
e^{2i (\theta(X)- \lambda c)} \ =\ 1.
\end{equation}
Thus, by a standard continuity argument, there exists $k \in \Z$ such that
\begin{equation}\label{thetaX}
\theta (X)= \lambda c + k \pi \ \ ,\ \ \forall X \in ]0,A[,
\end{equation}
or equivalently
\begin{equation}\label{diffeos}
\tilde{h}(X)= h(X) + c + \frac{k \pi}{\lambda} \ \ ,\ \ \forall X \in ]0,A[,
\end{equation}
Let us differentiate (\ref{diffeos}) with respect to $X$. We obtain easily
\begin{equation}
\frac{1}{a(\tilde{h}(X))} = \frac{1}{a(h(X))},
\end{equation}
and using again (\ref{diffeos}), we have
\begin{equation} \label{UnicitePot}
  a(x) = \tilde{a}(x + c + \frac{k \pi}{\lambda}) \ \ , \ \  \forall x \in \R.
\end{equation}
Thus, we have proved the first part of Theorem \ref{Main}.

\vspace{0.5cm}
We are now in position to finish the proof of Thm \ref{Main} and prove the uniqueness of the mass
$M$, the square of the charge $Q^2$ and the cosmological constant $\Lambda$ of the black hole. First, recall that
\begin{equation}\label{acarre}
a^2(x) = \frac{F(r)}{r^2} = \frac{1}{r^2} - \frac{2M}{r^3} + \frac{Q^2}{r^4} - \frac{\Lambda}{3},
\end{equation}
where $r$ stands for $r(x)$ the inverse of the Regge-Wheeler diffeomorphism\footnote{We emphasize here that $r(x)$ depends on the parameters we are looking for.}.

To prove the uniqueness of the parameters, we use the following trick. We define the differential operator $B$ by :
\begin{equation}
B = \frac{1}{a^2(x)} \ \frac{d}{dx} = r^2 \ \frac{d}{dr},
\end{equation}
since ${\displaystyle{\frac{dr}{dx} = F(r)}}$. Using the notation $B^2 = B \circ B$, etc..., a straightforward calculation gives
\begin{eqnarray}\label{itereeundeB}
B(a^2) &=& -\frac{2}{r} + \frac{6\ M}{r^2} - \frac{4\ Q^2}{r^3}.  \\
\label{itereedeuxdeB}
B^2(a^2) &=& 2 - \frac{12\ M}{r} + \frac{12\ Q^2}{r^2}.  \\
\label{itereetroisdeB}
B^3(a^2) &=& 12\ M -  \frac{24\ Q^2}{r}. \\
\label{itereequatredeB}
B^4(a^2) &=& 24\ Q^2.
\end{eqnarray}
Now setting $\tilde{x} =x+c+ \frac{k \pi}{\lambda}$ and using (\ref{UnicitePot}), we remark that :
\begin{equation}
B = \frac{1}{\tilde{a}^2(\tilde{x})} \ \frac{d}{d\tilde{x}} = \frac{1}{a^2(x)} \ \frac{d}{dx}.
\end{equation}
We apply this differential operator to the equality $\tilde{a}(\tilde{x})^2 = a(x)^2$. To simplify the notation, we set
$\tilde{r} = \tilde{r}(\tilde{x})$ and $r= r (x)$.

Using (\ref{itereequatredeB}) and (\ref{itereetroisdeB}),  we obtain successively :
\begin{eqnarray}\label{uniciteQ}
Q^2&=&\tilde{Q}^2, \\
\frac{1}{\tilde{r}} - \frac{1}{r} &=& \frac{\tilde{M}-M}{2Q^2} := E.
\end{eqnarray}
Then, using (\ref{itereedeuxdeB}), we have
\begin{equation}\label{uniciteM}
\frac{M + \tilde{M}}{2} \ E = (\tilde{M} - E Q^2)\ E = \frac{M - \tilde{M} + 2Q^2 E}{r} = 0.
\end{equation}
So, we deduce from (\ref{uniciteM}) that $E=0$ since $M,\ \tilde{M}>0$, \textit{i.e.} we have obtained $M =
\tilde{M}$ and $r = \tilde{r}$. Using now (\ref{acarre}), we get $\Lambda = \tilde{\Lambda}$ and the proof is
complete. $\diamondsuit$

\vspace{0.5cm}\noindent
{\it{Acknowledgments.}}
\par\noindent
This work was initiated while F.N was visiting T.D. at McGill University. Both authors would like to warmly thank Niky Kamran for his hospitality and encouragement.


\end{document}